\documentclass[fontsize=10pt, a4paper, oneside, headings=standardclasses, DIV=12]{scrartcl}
\usepackage[utf8]{inputenc}
\usepackage[T1]{fontenc}
\usepackage{lmodern}
\usepackage{microtype}
\usepackage{titling}

\usepackage{braket}
\usepackage{graphicx}
\usepackage{multicol,multirow}
\usepackage{amsmath,amssymb,amsfonts}
\usepackage{mathrsfs}
\usepackage{amsthm}
\usepackage{rotating}
\usepackage{appendix}
\usepackage{csquotes}
\usepackage[T1]{fontenc}
\usepackage{newtxtext}
\usepackage{newtxmath}
\usepackage{textcomp}
\usepackage{xcolor}
\usepackage{xparse}
\usepackage{xspace}
\usepackage[colorlinks,allcolors=blue]{hyperref}
\usepackage[style=alphabetic, maxbibnames=99]{biblatex}
\usepackage{bbm}
\usepackage{bbold}
\usepackage{aligned-overset}

\usepackage{bbding}
\usepackage{stmaryrd}
\usepackage{enumitem}

\usepackage{newunicodechar}
\newunicodechar{∗}{\textasteriskcentered}
\DeclareUnicodeCharacter{0301}{*************************************}

\usepackage{thmtools}
\newtheorem{theorem}{Theorem}[section]
\newtheorem{proposition}[theorem]{Proposition}
\newtheorem{lemma}[theorem]{Lemma}
\newtheorem{definition}[theorem]{Definition}
\theoremstyle{definition}
\newtheorem{remark}[theorem]{Remark}

\numberwithin{equation}{section}

\usepackage{environ}

% Restating theorems
\makeatletter
\long\def\@secondofthree#1#2#3{#2}
\long\def\@thirdoffour#1#2#3#4{#3}

% #1 -- theorem type
% #2 -- name
\NewEnviron{restatethis}[2]{%
  \begin{#1}%
      \BODY%
  \end{#1}%
  % Do the write inside the body of the theorem environment because if I do it afterwards it messes up the spacing for some strange reason...
  \protected@write\@auxout{}{%
    \string\@restatetheorem{#1}{#2}{\csname the#1\endcsname}{\detokenize\expandafter{\BODY}}%
  }%
}

% Figure out which macro to redefine to change the numbering.
% Easy if thmtools is loaded, otherwise we have to extract the name from some macro.
% This will break if some jerk loads thmtools and then ntheorem.
\@ifpackageloaded{thmtools}{%
    % just use the name of the environment
    \def\restatethm@getthmcountercsname#1{\def\thethmcsname{#1}}%
}{%
    \@ifpackageloaded{ntheorem}{%
        % extract the right macro from \mkheader@thmtype -> ???? \@thm{environmentname}{numberwithname}{thmname}
        \def\restatethm@getthmcountercsname#1{%
            \def\thethmcsname{\expandafter\expandafter\expandafter\restatethm@ntheorem@getthmcountercsname@helper\csname mkheader@#1\endcsname}}%
        \def\restatethm@ntheorem@getthmcountercsname@helper#1\@thm#2#3#4{#3}
    }{%
        \@ifpackageloaded{amsthm}{%
            % extract the right macro from \thmtype -> \@thm{????}{numberwithname}{thmname} (it's the third thing)
            \def\restatethm@getthmcountercsname#1{\edef\thethmcsname{\expandafter\expandafter\expandafter\@thirdoffour\csname#1\endcsname}}%
        }{%
            % extract the right macro from \thmtype -> \@thm{numberwithname}{thmname} (it's the second thing)
            \def\restatethm@getthmcountercsname#1{\edef\thethmcsname{\expandafter\expandafter\expandafter\@secondofthree\csname#1\endcsname}}%
        }%
    }%
}

% #1 -- environment type
% #2 -- name for restating
% #3 -- number
% #4 -- content
\newcommand{\@restatetheorem}[4]{%
  \expandafter\gdef\csname restatethis@#2\endcsname{%
    \begingroup
    % First we have to set the numerical label to the right value.
    \restatethm@getthmcountercsname{#1}
    \expandafter\def\csname the\thethmcsname\endcsname{#3}%
    % Print the theorem
    \begin{#1}#4\end{#1}%
    \endgroup
  }%
}

\newcommand{\restate}[1]{\csname restatethis@#1\endcsname} % Just use the stored macro
\makeatother

\theoremstyle{definition}
\newtheorem{corollary}[theorem]{Corollary}

\DeclareMathOperator{\argmax}{argmax}

\DeclareMathOperator{\Tr}{Tr}

\DeclareMathOperator{\id}{id}
\newcommand{\ketbra}[2]{\left|#1\left\rangle\right\langle#2\right|}
\newcommand{\1}{\ensuremath{\mathbbm{1}}}

\newcommand{\E}{\mathcal{E}}
\newcommand{\F}{\mathcal{F}}
\newcommand{\D}{\mathbf{D}}
\renewcommand{\P}{\mathcal{P}}

% From Bjarne

%
% Hilbert Space and Operators
%

% Fields

\newcommand{\reals}{\mathbb{R}}

\newcommand{\naturals}{\mathbb{N}}

% Braces
\newcommand{\br}[1]{\left(#1\right)}

% Hilbert Space
\newcommand{\HS}{\mathcal{H}}

% BoundedOperators
\NewDocumentCommand{\BO}{o}{\mathcal{B}\br{\IfValueTF{#1}{#1}{\HS}}}

% Density Matrices
\NewDocumentCommand{\DM}{o}{\mathcal{D}\br{\IfValueTF{#1}{#1}{\HS}}}

% Positive operators
\NewDocumentCommand{\pos}{o}{\mathscr{P}\!\br{\IfValueTF{#1}{#1}{\HS}}}

% Renyi
\newcommand{\renyi}{R\'enyi\xspace}

% Probability Distributions

% Norms
\DeclarePairedDelimiter\abs{\lvert}{\rvert}%
\DeclarePairedDelimiter\norm{\lVert}{\rVert}%
\makeatletter
\let\oldabs\abs
\def\abs{\@ifstar{\oldabs}{\oldabs*}}
\let\oldnorm\norm
\def\norm{\@ifstar{\oldnorm}{\oldnorm*}}
% Bounded (or traceclass) operators
\newcommand{\cB}{\mathcal{B}}
% Geometric Divergence
\newcommand{\Dg}{\widehat{D}_{\alpha}}
% Geometric trace function
\newcommand{\Sg}{\widehat{S}_{\alpha}}
% Reverse test
\newcommand{\Gb}{\bar{\Gamma}_n}
% tensor powers
\newcommand{\n}{^{\otimes n}}

\NewDocumentCommand{\cptp}{o o}{\mathrm{CPTP}\br{\IfValueTF{#1}{#1}{\DM[\HS_A]} \to \IfValueTF{#1}{#1}{\DM[\HS_B]}}}

% Bjarne's comments
\newcommand{\bbcomm}[1]{\textcolor{red}{[BB: {#1}]}}

%Robi comm

% General TODO

% For the abstract citation
\DeclareNameFormat{initgiven-family}{\usebibmacro{name:given-family}
      {\namepartfamily}
      {\namepartgiveni}
      {\namepartprefix}
      {\namepartsuffix}
  \usebibmacro{name:andothers}}
  
\newbibmacro*{cite:abscite}{%
\mkbibbrackets{\printtext[bibhyperref]{%
      \printnames[initgiven-family]{labelname}%
      \setunit{\addcomma\space}%
      %\printfield{title}%
      %\setunit{\addcomma\space}%
      \iffieldundef{journaltitle}{\iffieldundef{eprint}{\printfield{doi}}{\clearfield{eprintclass}\printfield{eprint}}}{\printfield{doi}}%
      %\iffieldundef{year}{}{\printtext[parens]{\printfield{year}}}%
      }}}

\DeclareCiteCommand{\abscite}
  {\usebibmacro{prenote}}%
  {\usebibmacro{citeindex}%
   \usebibmacro{cite:abscite}}
  {\multicitedelim}
  {\usebibmacro{postnote}}

% Make citations nicer that were imported from zotero
% Fixing some Zotero Issues
\AtEveryBibitem{
	\clearfield{note}
	\clearfield{issn}
	\clearfield{urlyear}
	\clearfield{urlmonth}
	\clearfield{eprintclass}
	\iffieldundef{journaltitle}{}{\clearfield{version}}
	\iffieldundef{journaltitle}{\iffieldundef{eprint}{}{\clearfield{doi}}}{}
}
\renewbibmacro*{url}{%
	\iffieldundef{doi}{\iffieldundef{eprint}{%
			\printfield{url}%
		}{}}{}
}
% Biblatex Formatting

\renewcommand*{\multicitedelim}{\addcomma\space}

\title{Infinite Dimensional Asymmetric Quantum Channel Discrimination}
\author{Bjarne Bergh\thanks{Department of Applied Mathematics and Theoretical Physics, University of Cambridge, United Kingdom} \thanks{Corresponding author: bb536@cam.ac.uk} \and Jan Kochanowski\thanks{Max-Planck-Institute of Quantum Optics, Garching, Germany} \and  Robert Salzmann\thanksmark{1} \and Nilanjana Datta\thanksmark{1}}
%\date{September 2022}

\addbibresource{bibliography.bib}
\addbibresource{bibliographyBB.bib}
\addbibresource{RobiBibi.bib}

\newcommand{\kb}[1]{|#1\rangle\!\langle#1|} %ketbra

\begin{document}

%\begin{Frontmatter}

\maketitle

\begin{abstract}
We study asymmetric binary channel discrimination, for qantum channels acting on separable Hilbert spaces. We establish quantum Stein's lemma for channels for both adaptive and parallel strategies, and show that under finiteness of the geometric R\'enyi divergence between the two channels for some $\alpha > 1$, adaptive strategies offer no asymptotic advantage over parallel ones. One major step in our argument is to demonstrate that the geometric R\'enyi divergence satisfies a chain rule and is additive for channels also in infinite dimensions. These results may be of independent interest. Furthermore, we not only show asymptotic equivalence of parallel and adaptive strategies, but explicitly construct a parallel strategy which approximates a given adaptive $n$-shot strategy, and give an explicit bound on the difference between the discrimination errors for these two strategies. This extends the finite dimensional result from ~\abscite{art:Bjarne}. Finally, this also allows us to conclude, that the chain rule for the Umegaki relative entropy in infinite dimensions, recently shown in~\abscite{fawzi_asymptotic_2023} given finiteness of the max divergence between the two channels, also holds under the weaker condition of finiteness of the geometric \renyi divergence. We give explicit examples of channels which show that these two finiteness conditions are not equivalent. 
\end{abstract}
%Along the way, we extend some important relative entropy inequalities %, which may be of independent interest, to the infinite dimensional setting. These may be of independent interest.

\tableofcontents

%\end{Frontmatter}

\section{Introduction}

The task of quantum channel discrimination can be described as follows: Given an unknown quantum channel as a black box and the side information that it is one of two possible channels, the task is to determine the channel's identity \cite{chiribella_memory_2008,duan_perfect_2009,hayashi_discrimination_2009,harrow_adaptive_2010}.
This task boils down to finding the optimal quantum state to send as an input to the channel and then also the best measurement to perform on the output of the channel. If we are given access to $n$ copies of the channel (i.e.~we are given $n$ identical black boxes, each of which can be used once); then there are different strategies (sometimes also called protocols) in which we could set up our decision experiment -- the so-called {\em{parallel}} and {\em{adaptive}} strategies.
In a parallel strategy one prepares a joint state, usually entangled between the input systems of all the $n$ copies of the channel and an additional ancillary (or reference or memory) system. This state is then fed as an input to all the $n$ channels at once (with the state of the ancillary system being left undisturbed). Finally, a binary positive operator-valued measure (POVM) is performed on the joint ouput state of the channels and the ancillary system in order to arrive at a decision for the channel's identity.
    In an adaptive strategy, on the other hand, one prepares an input state for a single copy of the channel (again possibly entangled with a ancillary system) which is fed into the first instance of the channel, with the state of the ancillary system being left undisturbed. The input to the next use of the channel is then chosen depending on the output of the first channel and the state of our ancillary system. This is done, most generally, by subjecting the latter to an arbitrary quantum operation (or channel), which we call a preparation operation. This step is repeated for each successive use of the channel until all the $n$ black-boxes have been used.
	 Then, a binary POVM is performed on the joint state of the output of the last instance of the channel and the ancillary system. See \autoref{fig:plot_adaptive} for a depiction of an adaptive strategy. Adaptive strategies are also sometimes called sequential, which is however not to be confused with the setting of sequential hypothesis testing \cite{martinez_vargas_quantum_2021, li_optimal_2022, li_sequential_2022}, where samples (i.e. states or channels) can be requested one by one. 
	 
	 One particularly interesting question is whether and to what degree adaptive strategies give an advantage over parallel ones. Note that any parallel strategy can be written as an adaptive strategy by taking all but one channel input as part of the ancillary system, and then choosing each preparation operation such that it extracts the next part of the joint input state for the next channel use and replaces it by the output of the previous channel use. However, the converse is not true and hence adaptive strategies are more general. Parallel strategies are conceptually a lot simpler than adaptive ones -- aside from the measurement, everything is specified just by the joint input state -- in contrast to adaptive strategies, in which after each channel use we may perform an arbitrary quantum operation to prepare the input to the next use of the channel. It is thus interesting to determine to what degree parallel strategies can still be optimal. This problem has been recently studied in finite-dimensional settings. There, it has been shown that in certain cases adaptive strategies can give an advantage over parallel ones. Specifically, in~\cite{harrow_adaptive_2010} the authors constructed an example in which an adaptive strategy with only two channel uses could be used to discriminate the channels with certainty, which is not possible with a parallel strategy, even if arbitrarily many channel uses are allowed.

	Interestingly, \emph{asymptotically}, there are multiple known cases in which adaptive strategies give no advantage over parallel ones, i.e.~the optimal exponential decay rate of the error probability per channel use is the same in the asymptotic limit. For example, this is the case both in the symmetric and asymmetric settings when the channels are classical \cite{hayashi_discrimination_2009} or classical-quantum \cite{wilde_amortized_2020,salek_when_2021}. For arbitrary quantum channels \emph{in finite dimensions}, the recently shown chain rule for the quantum relative entropy \cite{fang_chain_2020} and the characterization of asymmetric channel discrimination in terms of amortized relative entropy \cite{wilde_amortized_2020,wang_resource_2019}, also implies that in the asymmetric setting, in the asymptotic limit where we also require the type~I error to vanish, adaptive strategies give no advantage over parallel ones (i.e. the optimal asymptotic exponential decay rate of the type~II error per channel use is the same for parallel and adaptive strategies). This is in contrast to the symmetric setting in which the example of \cite{salek_when_2021} shows that there always is an advantage of adaptive strategies, also in the asymptotic limit. Furthermore, \cite{art:Bjarne} have shown explicitly how any adaptive strategy can be converted to a parallel strategy bounding the additional error rate that occurs because of this conversion, and showing how this additional error rate vanishes asymptotically, hence giving an operational version of the asymptotic equivalence of adaptive and parallel strategies.
	
	In this paper, our aim is to study quantum channel discrimination in the infinite dimensional setting. Infinite-dimensional, continuous variable (CV) quantum systems are of huge technological and experimental relevance in quantum information.  
	This is because quantum optics, which utilizes continuous quadrature amplitudes of the quantized electromagnetic field, has been shown to be a promising platform for efficient implementation of the essential steps in quantum communication protocols, namely, preparation, (unitary) manipulation, and measurement of (entangled) quantum states, (see e.g.~\cite{art:BraunsteinReview} for a review). Examples of such systems include collections of electromagnetic modes travelling along an optical fibre, and massive
	harmonic oscillators. Infinite dimensional quantum channels which are of particular relevance include bosonic Gaussian channels (see e.g.~\cite{art:Saikat-thesis} and references therein) and bosonic dephasing channels~(see e.g.~\cite{art:lami2022exact}). The study of quantum information in infinite dimensions has found applications in quantum communication \cite{art:Ref8-PhysRevA.63.032312,art:Ref9-Wolf-2007,art:Ref10-Takeoka-2014,art:Ref11-Pirandola-2017,art:Ref12-Wilde-2017,art:Ref13-Rosati-2018, art:lami2022exact}, the most noteworthy of them being the experimental realization of quantum teleportation for optical fields. Other applications include quantum computing and quantum error correction \cite{art:Ref20-PhysRevA.64.012310,art:Ref21-Mirrahimi-2014,art:Ref22, art:Ref23-PhysRevX.6.031006, art:Ref24-Guillaud-2019}, quantum simulations \cite{art:Ref19-Flurin-2017} and quantum
	sensing \cite{art:Ref15-96fcf8b716b24c65bf698d2f11d9a9fc,art:Ref16-PhysRevLett.121.160502, art:Ref17-PhysRevLett.86.5870, art:Ref18-McCormick-2019}

	To study channel discrimination in infinite dimensions, our approach will be to follow the path of \cite{art:Bjarne} and show that a similar $n$ shot result can also be obtained in infinite dimensions (\autoref{thm:main}). To achieve this, we have to extend multiple parts of the finite dimensional quantum Shannon theory framework that has been established over the years to separable Hilbert spaces, which we consider to be the main contribution of this paper. Amongst our extensions is the chain rule for the geometric \renyi entropy in infinite dimensions, and some various other entropy inequalities mostly involving the relation between one-shot quantities and \renyi entropies. 
	
	As a further application of this framework and techniques, we get asymptotic expressions for the asymmetric error exponent (i.e. quantum Stein's lemma) in separable Hilbert spaces.	By taking suitable limits, our one-shot result then allows us to conclude the asymptotic equivalence of adaptive and parallel strategies in terms of discrimination power also in infinite dimensions, given some finiteness condition on the geometric \renyi divergence. We comment on this finiteness condition (which is not needed to show equivalence of adaptive and parallel strategies in finite dimensions) in \autoref{sec:finiteness_cond}

\section{Notation and Preliminaries}
\subsection{Notation}
In this paper $\mathcal{H},\mathcal{K}$ denote complex (in general infinite dimensional) separable Hilbert spaces. $\cB(\mathcal{H})$ denotes the Banach space of bounded operators over $\mathcal{H}$ w.r.t the operator norm $\|\cdot\|$ and $\1$ will denote the identity operator in $\cB(\mathcal{H})$. $\cB_2(\mathcal{H})$ and $\cB_1(\mathcal{H})$ denote the Banach spaces of Hilbert-Schmidt and Trace-class operators on $\mathcal{H}$ with their usual norms $\|\cdot\|_2,\|\cdot\|_1$, respectively. $\P(\HS)$ denotes the set of positive trace-class operators on $\HS$ and $\mathcal{D}(\mathcal{H})$ denotes the set of density operators on $\mathcal{H}$, i.e. the positive operators of trace 1. We use $^*$ to denote the adjoint of an operator. For $A\in \cB(\mathcal{H})$, $A^*$ is the adjoint w.r.t\ the inner product of $\HS$, while for any super-operator $S \in \cB(\HS) \to \cB(\HS)$ we write $S^*$ for the adjoint with respect to the Hilbert-Schmidt inner product. The spectrum of an operator $A$ in $\cB(\mathcal{H})$ is denoted as $\sigma(A)$.
For self-adjoint operators $A\in \cB(\mathcal{H})$ their positive and negative parts are denoted as
$A_+$ and $ A_-$, respectively and the projections onto them with $\{A\}_+,\{A\}_-$ respectively. In a similar vain will $\{A=0\}$ denote the projections onto the kernel of $A$ and we we may  write $\{A\geq B\}:=\{A-B\}_+$ as the projector onto the support of the positive part of $(A-B)$, when $A,B\in \cB(\mathcal{H})$ are both positive. In this paper, $\log$ denotes the logarithm to base 2. \\
A quantum channel $\Lambda$ is defined to be a linear, completely positive, and trace preserving (CPTP) map $\P(\mathcal{H}_A)\to\P(\mathcal{K}_B)$, where $\mathcal{H}_A,\mathcal{K}_B$ are separable Hilbert spaces. The indices $A,B$ label the physical systems described by the Hilbert spaces  $\mathcal{H}_A,\mathcal{K}_B$, respectively, and we may simply write $\Lambda\equiv\Lambda_{A\to B}$. To indicate $\rho\in\mathcal{D}(\mathcal{H}_A)$, we may simply write $\rho_A$. The identity channel $\id:\P(\mathcal{H}_A)\to\P(\mathcal{H}_{A^\prime})$ maps $\rho_A\mapsto\rho_{A^\prime}$, when $\mathcal{H}_{A}$ is isomorphic to $\mathcal{H}_{A^\prime}$. We will often drop such identity channels when it is clear what systems a channel is acting on i.e. $(\Lambda_{A\to B}\otimes \id_{A`\to B`})(\rho_{AA`})\equiv \Lambda_{A\to B}(\rho_{AA`})\equiv\Lambda(\rho_{AA`})$. In this paper all POVMs that occur are finite (or even just binary). A finite POVM (positive operator valued measure) is a finite set of bounded and positive operators $\{M_i\}_i$ such that $\sum_i M_i = \1$. By measuring a state $\rho$ in a POVM $\{M_i\}$, we mean that one obtains outcome $i$ with probability $\Tr(\rho M_i)$. Occasionally we will also associate the quantum-classical chanel $\mathcal{M}: \P(\HS) \to \P[\HS_I]$ to the POVM $\{M_i\}$, where $\mathcal{M}(\rho) = \sum_i \Tr(\rho M_i) \ketbra{i}{i}$ maps the state to the classical probability distribution of measurement outcomes (and the $\ket{i}$ form a basis of $\HS_I$ whose dimension equals the number of POVM elements). The most general measurement for a binary state or channel discrimination can be described by a binary POVM $\{F,\1 -F\}$ where $0\le F\le \1.$ 

\subsection{Quantum Information measures}
In the following we will give a quick overview of what is known about certain quantum divergences (interchangeably called relative entropies) and distance measures in infinite dimensions. 

\subsubsection{Fidelity and Sine-distance}
The \textit{quantum} or \textit{Uhlmann fidelity} \cite{art:UHLMANN1976273} between two quantum states $\rho,\sigma \in\mathcal{D}(\mathcal{H})$, where $\mathcal{H}$ denotes a separable Hilbert space, is given by
\begin{align}
    F(\rho,\sigma):=\|\sqrt{\rho}\sqrt{\sigma}\|_1^2.
\end{align}
It is symmetric in its arguments and
can only increase under the action of a quantum channel $\Lambda$:
$F(\rho,\sigma) \leq F(\Lambda(\rho),\Lambda(\sigma))$, see e.g. \cite{art:PrinciplesOfQuantumCommunicationTheory2020}.
The \textit{Sine-distance} (or \textit{purified distance}) \cite{rastegin_relative_2002,rastegin_lower_2003,gilchrist_distance_2005,rastegin_sine_2006} between two states $\rho,\sigma$ is defined as
\begin{equation}
P(\rho,\sigma):=\sqrt{1-F(\rho,\sigma)}.
\end{equation}
It is a metric on $\mathcal{D}(\mathcal{H})$, so in particular, it satisfies the triangle inequality. It also satisfies a so-called {\em{data processing inequality}} (DPI): $P(\Lambda(\rho),\Lambda(\sigma))\leq P(\rho,\sigma)$ for any quantum channel $\Lambda$. Proofs of these properties follow from elementary properties of the fidelity and Uhlmann's theorem, and thus also hold in separable Hilbert spaces, see e.g. \cite{art:PrinciplesOfQuantumCommunicationTheory2020}.

Recall the well known relations between the Uhlmann Fidelity and the trace-distance between two states.
\begin{lemma}[Fuchs-Van-de-Graaf Inequality]\label{lem:FuchsVanDeGraaf}\cite{art:PrinciplesOfQuantumCommunicationTheory2020}
For two states $\rho,\sigma\in\mathcal{D}(\mathcal{H})$ on a separable Hilbert space $\mathcal{H}$ the following inequality holds:
\begin{align}
    1-\sqrt{F(\rho,\sigma)}\leq \frac{1}{2}\|\rho-\sigma\|_1\leq\sqrt{1-F(\rho,\sigma)}=P(\rho,\sigma).
\end{align}
\end{lemma}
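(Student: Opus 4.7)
The plan is to split the two-sided bound into its upper and lower halves, prove each by a short, standard route, and double-check that every ingredient survives in a separable Hilbert space. Both directions ultimately rely on Uhlmann's theorem and on the relation between the trace-class and Hilbert--Schmidt norms for positive operators; both of these are available in infinite dimensions, so no fundamentally new machinery is needed, only care with unboundedness of the support and the existence of purifications.

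For the upper bound $\tfrac{1}{2}\|\rho-\sigma\|_1 \leq \sqrt{1-F(\rho,\sigma)}$ I would reduce to the pure-state case. First invoke Uhlmann's theorem on $\HS \otimes \HS'$ (with $\HS'$ an auxiliary separable Hilbert space of sufficient dimension, e.g. isomorphic to $\HS$) to obtain purifications $|\psi\rangle, |\varphi\rangle$ of $\rho, \sigma$ with $|\langle\psi | \varphi\rangle|^2 = F(\rho,\sigma)$. For two rank-one states a direct eigenvalue computation gives $\tfrac{1}{2}\||\psi\rangle\!\langle\psi|-|\varphi\rangle\!\langle\varphi|\|_1 = \sqrt{1-|\langle\psi|\varphi\rangle|^2}$. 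Applying the data-processing inequality for the trace norm under the partial trace $\Tr_{\HS'}$ (which is a CPTP map between trace-class spaces even in infinite dimensions) yields
\[
\tfrac{1}{2}\|\rho-\sigma\|_1 \leq \tfrac{1}{2}\||\psi\rangle\!\langle\psi|-|\varphi\rangle\!\langle\varphi|\|_1 = \sqrt{1-F(\rho,\sigma)}.
\]

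For the lower bound $1-\sqrt{F(\rho,\sigma)} \leq \tfrac{1}{2}\|\rho-\sigma\|_1$ I would use a Powers--Størmer type estimate. Since $\sqrt{\rho},\sqrt{\sigma}$ are Hilbert--Schmidt, expanding the square gives
\[
\|\sqrt{\rho}-\sqrt{\sigma}\|_2^2 = \Tr(\rho) + \Tr(\sigma) - 2\Tr(\sqrt{\rho}\sqrt{\sigma}) \geq 2 - 2\|\sqrt{\rho}\sqrt{\sigma}\|_1 = 2\bigl(1-\sqrt{F(\rho,\sigma)}\bigr),
\]
where I used $\Tr(A) \leq \|A\|_1$ for any trace-class $A$ (here $A=\sqrt{\rho}\sqrt{\sigma}$). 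The remaining step is the Powers--Størmer inequality $\|\sqrt{\rho}-\sqrt{\sigma}\|_2^2 \leq \|\rho-\sigma\|_1$ for positive trace-class operators. This follows from the operator inequality $(A-B)^2 \leq |A^2 - B^2|$ for bounded positive $A,B$, applied to $A=\sqrt{\rho}$, $B=\sqrt{\sigma}$, and then taking traces; the finite-dimensional argument (via the spectral theorem and the Jordan decomposition of $A^2-B^2$) carries over verbatim once one notes that $\sqrt{\rho},\sqrt{\sigma}\in\cB_2(\HS)$ and $\rho-\sigma\in\cB_1(\HS)$ are all compact with spectral theorems available. Combining the two estimates produces the lower bound.

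The only delicate point, and the main obstacle to a completely self-contained presentation, is the Powers--Størmer step together with Uhlmann's theorem in separable Hilbert spaces; neither is technically hard, but both require treating possibly unbounded square-root resolutions carefully and, for Uhlmann, allowing the purifying space to be infinite dimensional. Both are by now well-documented in the infinite-dimensional literature (the paper's cited reference \cite{art:PrinciplesOfQuantumCommunicationTheory2020} contains them), so I would simply quote these two results and assemble the two inequalities as above.
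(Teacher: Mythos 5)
The paper does not actually prove this lemma: it cites \cite{art:PrinciplesOfQuantumCommunicationTheory2020} and adds only the remark that the proof relies on Uhlmann's theorem and therefore survives in separable Hilbert spaces. Your proposal fills in the standard argument, and both halves are sound: the upper bound via Uhlmann purifications, the exact pure-state formula $\tfrac12\|\kb{\psi}-\kb{\varphi}\|_1=\sqrt{1-|\bk{\psi}{\varphi}|^2}$, and contractivity of the trace norm under the partial trace is exactly the route the paper has in mind, and your lower bound via Powers--St\o rmer is a legitimate (and common) alternative to the textbook route, which instead uses the fact that the fidelity is achieved by a measurement and then invokes the classical Fuchs--van de Graaf inequality. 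Both ingredients (Uhlmann with an infinite-dimensional purifying system, Powers--St\o rmer for positive trace-class operators) are indeed available in separable Hilbert spaces, so the infinite-dimensional bookkeeping is fine; note in passing that $\sqrt{\rho},\sqrt{\sigma}$ are Hilbert--Schmidt, hence bounded, so there are no genuinely unbounded objects to worry about.

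The one step you should not lean on is the claimed operator inequality $(A-B)^2\leq|A^2-B^2|$ for positive $A,B$: this is false in general (it would imply $|A^{1/2}-B^{1/2}|\leq|A-B|^{1/2}$ by operator monotonicity of the square root, which is known to fail), so it cannot serve as the source of Powers--St\o rmer. The correct elementary proof is a trace-level argument: with $C=\sqrt{\rho}-\sqrt{\sigma}$ and $P_{\pm}$ the spectral projections of $C$, one has $\|\rho-\sigma\|_1\geq\Tr\big((\rho-\sigma)(P_+-P_-)\big)=\Tr\big((\sqrt{\rho}+\sqrt{\sigma})\,|C|\big)\geq\Tr(C^2)=\|\sqrt{\rho}-\sqrt{\sigma}\|_2^2$, where the last inequality follows because the difference of the two traces equals $2\Tr(\sqrt{\rho}\,C_-)+2\Tr(\sqrt{\sigma}\,C_+)\geq0$. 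Since you ultimately propose to quote Powers--St\o rmer from the literature rather than reprove it, your overall argument stands, but the parenthetical justification as written would not.
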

The proof of this Lemma essentially relies only on Uhlmann's theorem and thus holds true in arbitrary separable Hilbert spaces.

\subsubsection{Gentle Measurement Lemma}
The {\em{Gentle Measurement Lemma}}~\cite{art:winter1999coding} states that if a POVM measurement on a state almost always yields one specific outcome, the post-measurement state of this outcome is close to the original state.
\begin{lemma}[Gentle measurement Lemma]\cite{art:winter1999coding}
Let $\epsilon\in[0,1]$ and $\rho\in\mathcal{D}(\mathcal{H})$ be a quantum state on a separable Hilbert space $\mathcal{H}$. Let $F\in \cB(\mathcal{H})$ be such that $0\leq F\leq\1$ and $\Tr[F\rho]\geq 1-\epsilon$. Then the post-measurement state
\begin{align*}
    \rho^\prime = \frac{\sqrt{F}\rho\sqrt{F}}{\Tr[F\rho]}
\end{align*} satisfies
$P(\rho,\rho^\prime)\leq \sqrt{\epsilon}$.
\label{lem:GentleMeasurement}
\end{lemma}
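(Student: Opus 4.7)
The plan is to follow the standard finite-dimensional argument based on Uhlmann's theorem, checking that each step transfers to separable Hilbert spaces. Concretely, I would first purify $\rho$ by writing its spectral decomposition $\rho = \sum_i p_i \ketbra{i}{i}$ and defining $\ket{\psi} = \sum_i \sqrt{p_i}\ket{i}\ket{i} \in \HS \otimes \HS'$ with $\HS'$ another copy of $\HS$; this is a valid purification since $\HS$ is separable. Then the vector
\begin{equation*}
\ket{\psi'} = \frac{1}{\sqrt{\Tr[F\rho]}}\,(\sqrt{F} \otimes \1)\ket{\psi}
\end{equation*}
is a normalized purification of $\rho'$, because $\Tr_{\HS'}[\ketbra{\psi'}{\psi'}] = \sqrt{F}\rho\sqrt{F}/\Tr[F\rho] = \rho'$.

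By Uhlmann's theorem (which holds in separable Hilbert spaces, as noted in the paper), the fidelity between $\rho$ and $\rho'$ is bounded below by the squared overlap of any two of their purifications, so
\begin{equation*}
F(\rho,\rho') \;\geq\; |\braket{\psi|\psi'}|^2 \;=\; \frac{|\bra{\psi}(\sqrt{F}\otimes \1)\ket{\psi}|^2}{\Tr[F\rho]} \;=\; \frac{\Tr[\sqrt{F}\rho]^2}{\Tr[F\rho]}.
\end{equation*}
All the traces here are well defined because $\rho \in \cB_1(\HS)$ and $F, \sqrt{F} \in \cB(\HS)$. The next step uses operator monotonicity of $x \mapsto \sqrt{x}$ on $[0,1]$: since $0 \leq F \leq \1$ implies $F \leq \sqrt{F}$ via functional calculus (valid for bounded self-adjoint operators on separable Hilbert spaces), we get $\Tr[\sqrt{F}\rho] \geq \Tr[F\rho]$. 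Substituting gives
\begin{equation*}
F(\rho,\rho') \;\geq\; \Tr[F\rho] \;\geq\; 1-\epsilon,
\end{equation*}
and therefore $P(\rho,\rho') = \sqrt{1-F(\rho,\rho')} \leq \sqrt{\epsilon}$.

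There is no real obstacle here — the whole argument uses only Uhlmann's theorem, the definition of the purified distance, and functional calculus for bounded self-adjoint operators, all of which are available in the separable infinite-dimensional setting. The only point worth being careful about is the existence and normalization of the purifications, but this is handled by working in $\HS \otimes \HS'$ with $\HS'$ isomorphic to $\HS$ and using the spectral decomposition of the trace-class operator $\rho$.
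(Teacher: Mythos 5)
Your proof is correct and follows exactly the Uhlmann-theorem argument that the paper itself invokes (the paper only sketches this by remarking that the lemma "can again be proven using Uhlmann's theorem"). One tiny nitpick: the inequality $F \leq \sqrt{F}$ does not require operator monotonicity of the square root — it follows from ordinary functional calculus applied to the single self-adjoint operator $F$ with spectrum in $[0,1]$, since $\sqrt{x} - x \geq 0$ there — but this does not affect the validity of the step.
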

As this lemma can again be proven using Uhlmann's theorem it holds also in infinite dimensions.

\subsubsection{The relative modular operator}\label{sec:rel_mod_op}
The formalism of relative modular operators, $\Delta_{\rho,\sigma}$, provides a convenient way to define certain relative entropies in von Neumann algebras and thus, manifestly, in infinite dimensional Hilbert spaces. Its formalism also proves convenient for deriving inequalities for the Umegaki relative entropy~\cite{art:UmegakiRElEntOriginal10.2996/kmj/1138844604} and the Petz-R\'enyi relative entropies~\cite{petz_quasi-entropies_1986}. For its definition in general von Neumann algebras see e.g.~\cite{art:QuantumFDivergencesinVonNeumannAlgebrasI_Hiai_2018, art:Araki1975RelativeEO}. Here we give a more explicit definition for the infinite dimensional Hilbert space setting that we are considering from \cite{art:ArakiRelModOp1977}. 
Let $\rho,\sigma\in\cB_1(\mathcal{H})$, $\rho, \sigma \geq 0$ and define the operator $S_{\rho,\sigma}:D(S)\to \cB_2(\mathcal{H})$ as \cite{art:QuantumFDivergencesViaNussbaumSzkola,art:QuantumFDivergencesinVonNeumannAlgebrasI_Hiai_2018, art:ArakiRelModOp1977}
\begin{equation}
    S_{\rho,\sigma}:(A\sqrt{\sigma}+B\{\sigma=0\}) \mapsto\{\sigma>0\}A^*\sqrt{\rho}
\end{equation}
where 
\begin{equation*}
    D(S_{\rho,\sigma}):=\{A\sqrt{\sigma}|A\in \cB(\mathcal{H})\}+\{B\{\sigma=0\}|B\in \cB_2(\mathcal{H})\}\subset \cB_2(\mathcal{H}).
\end{equation*} 
It is shown in \cite{art:ArakiRelModOp1977} that $S_{\rho,\sigma}$ is a densely defined closable anti-linear operator. Denote its closure with $\overline{S}_{\rho,\sigma}$ and its adjoint with $S_{\rho,\sigma}^*$, which is also densely defined, anti-linear and closed. Now the \textit{relative modular operator} $\Delta_{\rho,\sigma}:D(\Delta_{\rho,\sigma})\subset \cB_2(\mathcal{H})\to \cB_2({\mathcal{H}})$ \cite{art:ArakiRelModOp1977} is defined as
\begin{equation}
      \Delta_{\rho,\sigma} := S_{\rho,\sigma}^*\overline{S}_{\rho,\sigma}.
\end{equation}
Hence it can be shown to be a densely defined self-adjoint operator, which is bounded from below and satisfies for $A \in \cB_2(\HS)$
\begin{equation}
    \Delta_{\rho,\sigma}(A \sigma )=\rho A \{\sigma > 0\}\,.
    \label{equ:RelModProp}
\end{equation}  
This follows from the fact that by definition we have
\begin{align*}
    \{X\sigma+Y\{\sigma>0\}|X\in \cB(\mathcal{H}), Y\in \cB_2(\mathcal{H})\} \subset D(\Delta_{\rho,\sigma})
\end{align*} and
\begin{align*}
    \Delta_{\rho,\sigma}(A \sigma) &= S^*\overline{S}(A\sqrt{\sigma}\sqrt{\sigma})=S^*(\{\sigma>0\}(A\sqrt{\sigma})^*\sqrt{\rho})= \sqrt{\rho}(\{\sigma>0\}A^*\sqrt{\rho})^*\{\sigma>0\}\\&=\rho A \{\sigma > 0 \}.
\end{align*}
Hence, this can be seen as the analogue of the finite dimensional definition $\Delta_{\rho,\sigma}(X):=\rho X\sigma^{-1}$. For a more formal treatment and details see \cite{art:QuantumFDivergencesinVonNeumannAlgebrasI_Hiai_2018, art:ArakiRelModOp1977}.

We denote its spectral measure with $\xi^{\Delta_{\rho,\sigma}}$, i.e. $\Delta_{\rho,\sigma}=\int_{[0,\infty)}\lambda\xi^{\Delta_{\rho,\sigma}}(d\lambda)$.
Note, that if $\rho, \sigma \in \DM$, the measure
\begin{align}
    \lambda\Tr[\sigma\xi^{\Delta_{\rho,\sigma}}(d\lambda)]=\Tr[\rho\xi^{\Delta_{\rho,\sigma}}(d\lambda)]
\end{align} is a probability measure.

\subsubsection{Quantum divergences}
In the following {$\rho,\sigma\in\P(\mathcal{H})$ are positive trace-class operators on a separable Hilbert space $\mathcal{H}$.
A \textit{generalized quantum divergence} is a map $\D:\P(\HS)\times\P(\HS)\to \reals$} %\rob{The way $\D$ is defined, i.e. on states on one Hilbert-space, we cant plug channels in as below} 
which satisfies the data-processing inequality (DPI), that is for any quantum channel $\Lambda:\P(\mathcal{H})\to \P(\mathcal{K})$ it holds that
\begin{align}
    \D(\Lambda(\rho)||\Lambda(\sigma)) \leq \D(\rho||\sigma),
\end{align} for any two states $\rho,\sigma\in\P(\mathcal{H})$.\footnote{We abuse notation a bit here, in that formally we have not defined $\D$ on $\P(\mathcal{K})$. One way to formally deal with this would be to require $\D$ to be defined on $\P(\HS)$ for all Hilbert spaces $\HS$. Equivalently, since all infinite dimensional separable Hilbert spaces are isomorphic (and hence unitarily equivalent) and we can embed any finite dimensional Hilbert space in an infinite dimensional one, we can argue that the given definition indeed covers all Hilbert spaces if we add that it should be invariant under joint unitary transformations and embeddings (this cans also easily be seen as necessary for the data-processing inequality to hold).} For any such quantum divergence, we can define its action on any two (not necessarily normalized) classical probability distributions $p, q \in \ell_1$, $p, q \geq 0$ as
\begin{equation}
    \D(p\|q) = \D\bigg(\sum_i p_i \ketbra{i}{i} \bigg\|\sum_i q_i \ketbra{i}{i}\bigg)
\end{equation}
where $\{\ket{i}\}_i$ is a countable orthonormal basis of $\HS$, and it is easy to see that the definition is independent of the choice of basis.

We will use the following divergences throughout the rest of the paper:

For $0<\epsilon\leq1$ the \textit{hypothesis testing divergence} is defined as
\begin{align}
    D_H^\epsilon(\rho\|\sigma):=-\log\left(\inf_{0\leq F\leq\1}\{\Tr[F\sigma]|\Tr[F\rho]\geq 1-\epsilon\}\right).
\end{align}
It is the negative logarithm of the minimal type II error, $\beta=\Tr[F\sigma]$, of a binary quantum hypothesis test between $\rho$ and $\sigma$ under the constraint that the type I error,  $\alpha=\Tr[(\1-F)\rho]$, is not larger than a fixed threshold value %prescribed probability of error 
$\epsilon>0$.
It satisfies the DPI by a standard argument, which is repeated in appendix \ref{app:HTD} for the reader's convenience. 
The \textit{max-relative entropy} \cite{art:DattaMaxRelEnt} is defined as 
\begin{align}
    D_\text{max}(\rho||\sigma) :=\inf\{\lambda\in\mathbb{R}|\rho\leq 2^\lambda\sigma\} = \log\|\sigma^{-\frac{1}{2}}\rho\sigma^{-\frac{1}{2}}\|.
\end{align}
It satisfies the DPI and the triangle inequality \cite{art:DattaMaxRelEnt}, a proof is included in appendix \ref{app:MaxRel}. \\
For $\epsilon\in(0,1)$, $\rho\in\mathcal{D}(\mathcal{H})$ define the $\epsilon$-Sine-distance-ball around $\rho$ as
\begin{align*}
    B^\epsilon(\rho):=\{\tilde\rho\in\mathcal{D}(\mathcal{H})|P(\rho,\tilde{\rho})\leq \epsilon\}.
\end{align*}
This is a convex subset of $\mathcal{D}(\mathcal{H})$, which is (norm)-compact if and only if $\mathcal{H}$ is finite dimensional. \\
Now, the \textit{smoothed max-relative entropy} \cite{art:DattaMaxRelEnt} is defined as 
\begin{align}
    D_\text{max}^\epsilon(\rho||\sigma) := \inf_{\tilde{\rho}\in B^\epsilon(\rho)}D_\text{max}(\tilde{\rho}||\sigma).
\end{align}
Note that this is sometimes defined differently in the literature, where the smoothing is sometimes taken over the $2\epsilon$-trace-norm-ball instead of the $\epsilon$-Sine-distance-ball, and may sometimes include sub-normalized states as well. 
The smoothed max-relative entropy also satisfies the DPI which follows directly from the DPI of the max relative entropy and the DPI of the Sine-distance.\\
The \text{standard-f-divergence} \cite{art:QuantumFDivergencesinVonNeumannAlgebrasI_Hiai_2018,art:QuantumFDivergencesViaNussbaumSzkola, art:Araki1975RelativeEO} for a convex (or concave) function $f:(0,\infty)\mapsto\mathbb{R}$ is defined as
\begin{equation} \label{def:Standard-f-divergence}
    D_f(\rho||\sigma):=\int_{(0,\infty)}f(\lambda)\langle\sqrt{\sigma}|\xi^{\Delta_{\rho,\sigma}}(d\lambda)|\sqrt{\sigma}\rangle + f(0)\Tr[\sigma\{\rho=0\}]+ f'(\infty)\Tr[\rho\{\sigma=0\}],
\end{equation}
where $\{\sigma=0\}$ is the projector onto the kernel of $\sigma$,  $\xi^{\Delta_{\rho,\sigma}}$ is the spectral measure of the relative modular operator $\Delta_{\rho,\sigma}$ on the Borel sigma-algebra on $\mathbb{R}$ introduced above, $f'(\infty):=\lim_{t \to\infty}\frac{f(t)}{t}$, $f(0):=\lim_{t\to0}f(t)$, and the inner product here is the Hilbert-Schmidt inner product $\langle A|B\rangle:=\Tr[A^*B]$ for $A,B\in \cB_2(\mathcal{H})$.
Note that for convex functions for which $f(0)=0$ and for states $\rho, \sigma$ for which supp$(\rho)\subset$ supp$(\sigma)$ it follows that 
\begin{align*}
    D_f(\rho||\sigma):=\int_{(0,\infty)}f(\lambda)\langle\sqrt{\sigma}|\xi^{\Delta_{\rho,\sigma}}(d\lambda)|\sqrt{\sigma}\rangle = \int_{(0,\infty)}f(\lambda)\Tr[\sigma\xi^{\Delta_{\rho,\sigma}}(d\lambda)] & = \langle\sqrt{\sigma}|f(\Delta_{\rho,\sigma})|\sqrt{\sigma}\rangle \\ &=  \langle\sqrt{\rho}|f(\Delta_{\rho,\sigma})\Delta_{\rho,\sigma}^{-1}|\sqrt{\rho}\rangle  > -\infty.
    \end{align*}
The advantage of this expression is that it is manifestly well-defined in separable Hilbert spaces and allows for relatively simple  manipulation.\footnote{Note that the integral in \eqref{def:Standard-f-divergence} is the same as $\int_{(0,\infty)}\frac{f(\lambda)}{\lambda}\langle\sqrt{\rho}|\xi^{\Delta_{\rho,\sigma}}(d\lambda)|\sqrt{\rho}\rangle$.} \\
For the choice $f(\lambda)=\lambda\log\lambda$, the standard-f-divergence coincides with the important \textit{Umegaki relative entropy} \cite{petz_quasi-entropies_1986,art:QuantumFDivergencesinVonNeumannAlgebrasI_Hiai_2018}
\begin{equation}
   D(\rho||\sigma) := D_f(\rho||\sigma)= \begin{cases}
   \langle\sqrt{\rho}|\log\Delta_{\rho,\sigma}|\sqrt{\rho}\rangle=\Tr[\rho(\log\rho-\log\sigma)]
   ,\footnotemark & \text{if supp}\rho\subset\text{supp}\sigma \\ 
   \infty & \text{otherwise} \end{cases}
\end{equation}
which satisfies the DPI \cite{art:LindbaldOriginalDPIUmeghakiDiv1975CMaPh..40..147L}\footnotetext{More precisely, the trace is evaluated in the eigenbasis of $\rho$ which gives $D(\rho\|\sigma) = \sum_{i,j} |\langle e_i, f_j \rangle|^2 p_i (\log p_i - \log q_j)$, where $\rho = \sum_i p_i \kb{e_i}$ and $\sigma = \sum_j q_j \kb{f_j}$ being  the spectral decompositions of $\rho$ and $\sigma$.
   As detailed in \cite{Lindblad_RelativeEntropy_1973} the convexity of $x\mapsto x \log x$ implies that all terms of this sum are non-negative, which makes the expression well-defined.}\footnote{A slightly more readable proof of the DPI in infinite dimensions is given in \cite{art:DPIforUmeghakiPTPMapsM_ller_Hermes_2017}.}.  \\
Choosing $f_\alpha(\lambda)=\lambda^{\alpha}$ for $\alpha\in(0,1)\cup(1,\infty)$ we get the $\alpha$-\textit{Petz-Rényi} \cite{art:QuantumFDivergencesinVonNeumannAlgebrasI_Hiai_2018} divergence defined as
\begin{equation}
    D_\alpha(\rho||\sigma):= \begin{cases}
   \frac{1}{\alpha-1}\log D_{f_\alpha}(\rho||\sigma) = \frac{1}{\alpha-1}\log\langle\sqrt{\rho}|\Delta_{\rho,\sigma}^{\alpha-1}|\sqrt{\rho}\rangle & \text{if } \alpha\in(0,1)  \text{, or } \alpha\in(1,\infty) \\ &\text{and supp}\rho\subset\text{supp}\sigma \\ 
   \infty & \text{otherwise}. \end{cases}
\end{equation}
Note that in finite dimensions, this leads to
\begin{equation}
   D_\alpha(\rho||\sigma) = \frac{1}{\alpha-1}\log\Tr[\rho^\alpha\sigma^{1-\alpha}]\,.
\end{equation}

They can be shown to be monotonically increasing in $\alpha$ on $(0,1)\cup(1,\infty)$ \cite{art:Jen_ov__2018_1}, and satisfies
$\lim_{\alpha\nearrow 1}D_\alpha(\rho||\sigma)=D(\rho||\sigma)$ and $\lim_{\alpha\searrow 1}D_\alpha(\rho||\sigma)=D(\rho||\sigma)$ if $\exists \alpha_0>1$ such that $D_{\alpha_0}(\rho||\sigma)<\infty$ \cite{art:SandwichedRenyInInfDim_RenyDivergencesAsWeightedNonCommutative_Berta_2018}.

\iffalse
For the \textit{sandwiched Rényi relative entropy}, defined in finite dimensions as
\begin{equation}
    \tilde{D}_\alpha(\rho||\sigma):=\frac{1}{\alpha-1}\log\Tr[(\sigma^{\frac{1-\alpha}{2\alpha}}\rho\sigma^{\frac{1-\alpha}{2\alpha}})^\alpha]
    \label{def:SandwichedRenyiDivergence},
\end{equation} a priori different extensions to von Neumann algebras have been proposed, see \cite{art:SandwichedRenyInInfDim_RenyDivergencesAsWeightedNonCommutative_Berta_2018,art:Jen_ov__2018_1,art:Jen_ov__2021_2}. However, they turn out to be equivalent \cite{art:Jen_ov__2018_1} and reproduce \eqref{def:SandwichedRenyiDivergence} in finite-dimensional Hilbert spaces (and thus also separable ones). 
These satisfy the DPI for $\alpha$ in $[\frac{1}{2},1)\cup(1,\infty)$, are monotonically increasing in $\alpha$ in $[\frac{1}{2},1)\cup(1,\infty)$, and are upper bounded by the Petz-\renyi divergence of equal index, i.e. $\tilde{D}_\alpha(\rho||\sigma)\leq D_\alpha(\rho||\sigma)$ \cite{art:SandwichedRenyInInfDim_RenyDivergencesAsWeightedNonCommutative_Berta_2018,art:Jen_ov__2018_1,art:Jen_ov__2021_2}. \\
\fi

The \textit{geometric Renyi divergence} (also refered to as the \textit{maximal Renyi divergence})  \cite{art:MAXIMALFDIVERGENCEmatsumoto2018new} to any  $\alpha \in [0, 1) \cup (1, \infty)$ can be defined in infinite dimensions through minimal reverse tests \cite{hiai_quantum_2019}. For two positive trace-class operators $\rho, \sigma \in \P(\HS)$, we say that $(\Gamma, g, h, \mu)$ is a reverse test of $(\rho, \sigma)$, if $(X, \mu)$ is a $\sigma$-finite measure space, $g, h \in L^1(X, \mu)$ with $g, h \geq 0$, and $\Gamma$ is a positive trace-preserving map $\Gamma: L^1(X, \mu) \to \cB_1(\HS)$ such that $\Gamma(g) = \rho$, $\Gamma(h) = \sigma$. By $\Gamma$ being positive, we mean that $\Gamma(f) \geq 0$ for any $f \geq0$, and by $\Gamma$ being trace-preserving we mean that $\Tr(\Gamma(f)) = \int f d\mu$. 

We can then define the geometric \renyi trace function via the following optimization problem
\begin{equation}
    \widehat{S}_{\alpha}(\rho\|\sigma) \coloneqq \min_{\Gamma, g, h, \mu}\{ S_{\alpha}^{\mu}(g\|h)\; |\; (\Gamma, g, h, \mu) \text{ is a reverse test of } (\rho, \sigma)\}
\end{equation}
where 
\begin{equation}
    S_{\alpha}^{\mu}(g\|h) = \int h \br{g \over h}^\alpha d \mu\,,
\end{equation}
The geometric \renyi divergence is then defined as
\begin{equation}
    \widehat{D}_{\alpha}(\rho\|\sigma) \coloneqq \frac{1}{\alpha - 1} \log \widehat{S}_\alpha(\rho\|\sigma)\,.
\end{equation}

In \cite{hiai_quantum_2019}, this is studied in detail in the case where $x \mapsto x^{\alpha}$ is operator convex (i.e. $\alpha \in (1, 2]$), and for most of our applications we will restrict to this case and use many of the properties established there. It is easy to see that for any $\alpha$ the geometric \renyi divergence is the largest quantum $\alpha$-\renyi divergence that still satisfies the data-processing inequality. In the range $\alpha \in (1, 2]$, we have the following inequalities:
\begin{align}\label{eq:relation_renyi_entropies}
    D_\alpha(\rho||\sigma)\leq \widehat{D}_\alpha(\rho||\sigma)\leq D_\text{max}(\rho||\sigma).
\end{align}

\subsubsection{Quantum channel divergences}\label{sec:quantum_channel_divergences}

Given a quantum state divergence $\D$, there are different ways to extend it to a divergence on quantum channels. If $\D$ is a quantum state divergence and $\Lambda_1,\Lambda_2:\P(\mathcal{H}_A)\to\P(\mathcal{K})$ are two quantum channels, then the \textit{(stabilized) quantum channel divergence} is defined as
\begin{align}
    \D(\Lambda_1||\Lambda_2):=\sup_{\rho\in\mathcal{D}(\mathcal{H}_R\otimes\mathcal{H}_A)}\D((\id_R\otimes\Lambda_1)(\rho)||(\id_R\otimes\Lambda_2)(\rho)),
\end{align} where the supremum is in principle taken over all possible auxiliary Hilbert spaces $\mathcal{H}_R$ of arbitrary dimension and all states therein. One can show by a standard argument that the DPI of the divergence $\D$ ensures that the supremum can be restricted to pure states and $\mathcal{H}_R$ may be taken to be isomorphic to $\mathcal{H}_A$ by Schmidt decomposition.
For a given divergence $\D$ its \textit{regularized quantum channel divergence} is defined as 
\begin{align*}
    \D^{\text{reg}}(\Lambda_1||\Lambda_2):=\sup_{n\in\naturals}\frac{1}{n}\D(\Lambda_1^{\otimes n}||\Lambda_2^{\otimes n}).
\end{align*}
Note that if $\D$ is superadditive (i.e.\ $\D(\rho_1\otimes\rho_2\|\sigma_1 \otimes \sigma_2) \geq \D(\rho_1\|\sigma_1) + \D(\rho_2\|\sigma_2)$), the supremum over $n$ can be replaced with a limit $n \to \infty$.
\subsection{Quantum channel discrimination, discrimination strategies, and notation}

In binary hypothesis testing, if $H_0$ and $H_1$ denote the {\em{null hypothesis}} and the {\em{alternate hypothesis}}, respectively, then the type I error ($\alpha$) is the probability of accepting $H_1$ when $H_0$ is true, while the type II error ($\beta$) is the probability of accepting $H_0$ when $H_1$ is true.

Quantum channel discrimination is the binary hypothesis testing task in which $H_0:\E$ and $H_1:\F$, where $\E$ and $\F$ are two quantum channels.
In the asymmetric setting, the aim is to minimise the type II error under the condition that the type I error is below some prescribed %probability of error
threshold value $\epsilon>0$. 

There are two main ways in which we can set up our decision experiment when allowed $n$ uses of the black-box channel. \\
\\
\textit{Parallel} (or \textit{non-adaptive}) quantum channel discrimination, for a given pair of channels $\mathcal{E}$ and $\mathcal{F}$, allows only parallel uses of the given (unknown) channel. It effectively amounts to discriminating between a single use of $\mathcal{E}^{\otimes n}$ or $\mathcal{F}^{\otimes n}$ (for some $n \in \mathbb{N}$), using a single joint input state over all the $n$ parallel uses of the channel and some ancilla system. Generically this input state is entangled between the $n$ channel inputs and the ancilla system. It follows from the definition of the hypothesis testing divergence that the decay rate of the optimal type II error per channel use, when using such a parallel setup, is given by
\begin{align}
    \frac{1}{n}D_H^\epsilon(\mathcal{E}^{\otimes n}||\mathcal{F}^{\otimes n})\,.
\end{align}
By the infinite dimensional version of Quantum Stein's lemma for channels (\autoref{thm:steins} below) we show that in the limits $n\to\infty$ and $\epsilon\to0$, this is equivalent to the regularized channel divergence of these two channels. \\
\textit{Adaptive} (or \textit{sequential}) quantum channel discrimination strategies are more general. They use the $n$ black-box channels (all identical but unknown) sequentially, and allow for the input state for any use of the channel to be chosen based on the outputs of previous channel uses. Specifically, we can perform arbitrary manipulations on the previous output state and some additional ancilla system, described by preparation channels, to determine the next input state. For an adaptive strategy that discriminates between $\mathcal{E}_{A\to B}$ and $\mathcal{F}_{A\to B}$ using $n$ queries, we denote the set of these preparation channels as $\{\Lambda_{i;B_iR_i\to A_{i+1}R_{i+1}}\}_{i=2}^n$, where the $\{R_i\}_{i=1}^n$ are some ancillary quantum systems of (possibly) unbounded dimension. We call the initial input state $\rho_1=\sigma_1\in\mathcal{D}(\mathcal{H}_{A_1R})$. 
Given the channel $\mathcal{E}$, the output state after an $n$ round adaptive strategy is given by $ \E(\rho_n) = (\id_{R_n}\otimes\mathcal{E}_{A_n \to B_n })(\rho_{n; A_n R_n}) := \mathcal{E}\circ \Lambda_n\circ\mathcal{E}\circ ... \circ \Lambda_2\circ\mathcal{E}(\rho_1)$, and similarly for $\mathcal{F}(\sigma_n)$ if the channel was $\mathcal{F}$.
Here we define
\begin{align}
   &\rho_i:=\Lambda_i(\mathcal{E}(\rho_{i-1})), \\
   &\sigma_i:=\Lambda_i(\mathcal{F}(\sigma_{i-1})), 
\end{align} for $i\in\{2,...,n\}$.
\begin{figure}
    \centering
    \includegraphics{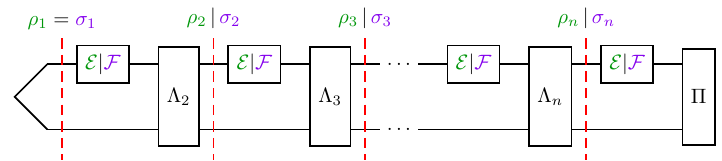}
    \caption[width=\linewidth]{Illustration of a general adaptive protocol with $n$ uses of a black-box channel. The top row makes use of the given black-box $\E|\F$, which is either $\E$ or $\F$, while the bottom row depicts the ancilla system $R$. At various stages in the protocol, the green states $\rho_i$ occur if the channel is $\E$ and the purple states $\sigma_i$ occur if the channel is $\F$; here $i \in {1, 2, \ldots n}$. The intermediate \textit{preparation channels} are denoted by $\Lambda_i$ and their action results in the preparation of the input state for the $i^{th}$ instance of the black-box channel depending on the output of the $(i-1)^{th}$ instance. This input state could be entangled with the ancilla system $R$. The final state $\E(\rho_n)$ or $\F(\sigma_n)$ is then subjected to a binary measurement described by the POVM $\Pi$.}
    \label{fig:plot_adaptive}
\end{figure}
For such an adaptive strategy the decay rate in 
$n$ of the optimal type II error of this discrimination task is then given by
\begin{align}
   \sup_{\rho_1,\{\Lambda_i\}_i} \frac{1}{n}D_H^\epsilon(\mathcal{E}(\rho_n)||\mathcal{F}(\sigma_n))
\end{align} where the supremum is over all possible initial states $\rho_1=\sigma_1$ and all preparation channels $\{\Lambda_i\}_{i=2}^n$.

\subsection{Previous finite dimensional results}
To facilitate comparison with the infinite dimensional results we state in the next section, we include here a list of what has previously been proven for quantum channel discrimination in finite dimensions.

The following Stein's lemma established the optimal asymmetric error exponent (i.e. decay rate of the type II error in the number of channel uses) for adaptive and parallel channel discrimination strategies.
\begin{theorem}[Stein`s lemma for quantum channels, \cite{wang_resource_2019, wilde_amortized_2020}]
	Let $\HS_A$ and $\HS_B$ be two finite dimensional Hilbert spaces, and $\E$, $\F : \P(\HS_A) \to \P(\HS_B)$ be two quantum channels.
	The asymmetric asymptotic error exponent of discirminating these two channels with a parallel strategy is then given by
	\begin{equation}
			\lim_{\varepsilon \to 0}\lim_{n \to \infty} {1 \over n} D_H^{\varepsilon}(\E\n\|\F\n) = D^{\mathrm{reg}}(\E\|\F)
	\end{equation}
	Moreover, the asymmetric asymptotic error exponent of discriminating these two channels with an adaptive strategy is given by
	\begin{equation}
		D^A(\E\|\F) = \sup_{\rho_{RA}, \sigma_{RA}}\big( D(\E(\rho)\|\F(\sigma)) - D(\rho\|\sigma)\big)
	\end{equation}
	where the supremum goes over all possible sizes of reference systems $R$. 
\end{theorem}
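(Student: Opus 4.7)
The plan is to handle the parallel and adaptive parts separately, relying on the quantum state Stein's lemma, the chain rule for the Umegaki relative entropy, and standard relations between the hypothesis testing divergence and the relative entropy.

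For the parallel part, the achievability direction proceeds via a blocking argument: fix $k \in \naturals$ and pick an input state $\rho_k \in \mathcal{D}(\HS_R \otimes \HS_A^{\otimes k})$ that nearly saturates $D(\E\n[k]\|\F\n[k])$. Feeding the i.i.d.\ state $\rho_k^{\otimes m}$ into $(\E\n[k])^{\otimes m}$ produces outputs $(\E\n[k](\rho_k))^{\otimes m}$ and $(\F\n[k](\rho_k))^{\otimes m}$, to which the ordinary quantum Stein's lemma for states applies, giving the asymptotic rate $\tfrac{1}{k}D(\E\n[k](\rho_k)\|\F\n[k](\rho_k))$ per channel use. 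Taking $\rho_k$ near-optimal and then $k\to\infty$ yields $D^{\mathrm{reg}}(\E\|\F)$. The converse is straightforward: use the standard inequality $D_H^{\varepsilon}(\rho\|\sigma) \leq \tfrac{1}{1-\varepsilon}\bigl(D(\rho\|\sigma) + h(\varepsilon)\bigr)$, combined with the definition of $D(\E\n\|\F\n)$ via a supremum over input states, to obtain
\begin{equation*}
\tfrac{1}{n} D_H^{\varepsilon}(\E\n\|\F\n) \;\leq\; \tfrac{1}{1-\varepsilon}\Bigl(\tfrac{1}{n} D(\E\n\|\F\n) + \tfrac{h(\varepsilon)}{n}\Bigr),
\end{equation*}
and let $n\to\infty$ followed by $\varepsilon\to 0$ to recover $D^{\mathrm{reg}}(\E\|\F)$.

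For the adaptive part, the central ingredient is the chain rule for the Umegaki relative entropy, which asserts that for any two channels $\mathcal{N},\mathcal{M}$ and any states,
\begin{equation*}
D\bigl(\E\circ\mathcal{N}(\rho)\,\big\|\,\F\circ\mathcal{M}(\sigma)\bigr) \;\leq\; D\bigl(\mathcal{N}(\rho)\|\mathcal{M}(\sigma)\bigr) + D^A(\E\|\F).
\end{equation*}
Applying this iteratively to an adaptive protocol with preparation channels $\{\Lambda_i\}$, and using DPI of $D$ under the $\Lambda_i$'s at each step, gives $D(\E(\rho_n)\|\F(\sigma_n)) \leq n D^A(\E\|\F)$. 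Combining with $D_H^{\varepsilon}\leq \tfrac{1}{1-\varepsilon}(D + h(\varepsilon))$ then yields the converse. For achievability, note that parallel strategies are a special case of adaptive ones, so the adaptive rate is at least $D^{\mathrm{reg}}(\E\|\F)$. The chain rule applied to $\E\n$ vs.\ $\F\n$ shows $D^{\mathrm{reg}}(\E\|\F)\leq D^A(\E\|\F)$; the reverse direction $D^A\leq D^{\mathrm{reg}}$ follows from a super-additivity / amortization argument on $n$-fold tensor copies, establishing $D^A(\E\|\F) = D^{\mathrm{reg}}(\E\|\F)$ and hence that adaptive strategies asymptotically achieve the parallel rate.

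The main obstacle is the chain rule inequality, which is genuinely non-elementary: it requires variational representations of the relative entropy and delicate convex-analytic arguments, and was the key innovation of Fang and Fawzi. Everything else in the proof—the blocking achievability, the $D_H^{\varepsilon}$-vs-$D$ converse inequality, and the reduction of adaptive to parallel rates—is by now essentially standard quantum Shannon theory machinery and slots in around this single hard lemma.
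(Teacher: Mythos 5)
The paper does not actually prove this theorem --- it is quoted from \cite{wang_resource_2019, wilde_amortized_2020} purely as background for the infinite-dimensional generalizations --- so there is no in-paper proof to compare against line by line. Your treatment of the \emph{parallel} part is the standard and correct argument: blocking plus the state Stein's lemma for achievability, and $D_H^{\varepsilon}(\rho\|\sigma)\le \tfrac{1}{1-\varepsilon}\big(D(\rho\|\sigma)+h(\varepsilon)\big)$ for the converse (the latter is exactly what the paper reuses in its Appendix~B for the infinite-dimensional upper bound; the paper's own lower bound instead goes through Petz--\renyi additivity and the $D_{\max}^{\epsilon}$/$D_H^{\epsilon}$ inequalities, but your route is equally valid in finite dimensions).

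The adaptive part, however, scrambles the logical roles of two different inequalities. The displayed ``chain rule''
\begin{equation*}
D\big(\E\circ\mathcal{N}(\rho)\,\big\|\,\F\circ\mathcal{M}(\sigma)\big)\;\le\;D\big(\mathcal{N}(\rho)\|\mathcal{M}(\sigma)\big)+D^A(\E\|\F)
\end{equation*}
is \emph{not} the Fang--Fawzi chain rule: with $D^A$ on the right-hand side it is an immediate consequence of the definition of the amortized divergence (set $\omega=\mathcal{N}(\rho)$, $\tau=\mathcal{M}(\sigma)$ in the supremum defining $D^A$). It is therefore not ``genuinely non-elementary'' and requires no variational representations; telescoping it through the preparation channels gives $D(\E(\rho_n)\|\F(\sigma_n))\le n\,D^A(\E\|\F)$ and hence the adaptive converse, with no hard input at all. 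The genuinely hard statement is the one with $D^{\mathrm{reg}}$ on the right-hand side, equivalently the direction $D^A(\E\|\F)\le D^{\mathrm{reg}}(\E\|\F)$ --- and this is precisely the step you dismiss as following ``from a super-additivity / amortization argument on $n$-fold tensor copies.'' It does not: super-additivity/telescoping only yields the easy direction $D^{\mathrm{reg}}\le D^A$, while $D^A\le D^{\mathrm{reg}}$ \emph{is} the Fang--Fawzi chain rule, which the paper states as a separate theorem immediately after this one. As written, your achievability argument for the adaptive exponent (adaptive $\ge D^{\mathrm{reg}}=D^A$) therefore rests on an equality whose hard half you have attributed to an elementary argument that cannot deliver it, while the elementary inequality has been promoted to the role of the deep lemma. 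Either invoke the chain rule with $D^{\mathrm{reg}}$ explicitly for $D^A\le D^{\mathrm{reg}}$, or give a direct adaptive protocol achieving $D^A$; with the dependencies as you have stated them, the adaptive achievability does not close. (For contrast, note that the present paper's infinite-dimensional Proposition on adaptive Stein deliberately avoids the amortized quantity altogether and derives the equivalence from the one-shot conversion theorem.)
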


The equivalence of adaptive and parallel error exponents was then shown using the following chain rule for the quantum relative entropy. 
\begin{theorem}[Chain Rule, \cite{fang_chain_2020}]
		Let $\HS_A$ and $\HS_B$ be two finite dimensional Hilbert spaces, and $\E$, $\F : \P(\HS_A) \to \P(\HS_B)$ be two quantum channels.
		Then, for all states $\rho, \sigma \in \DM[\HS_A]$
		\begin{equation}
			D(\E(\rho)\|\F(\sigma)) \leq D(\rho\|\sigma) + D^{\mathrm{reg}}(\E\|\F)
		\end{equation}
		which implies that $D^{\mathrm{reg}} = D^A$.
\end{theorem}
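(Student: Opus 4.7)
The plan is to prove the chain rule via the geometric \renyi divergence $\widehat{D}_\alpha$, following the strategy of \cite{fang_chain_2020}. Three properties of $\widehat{D}_\alpha$ together yield the theorem: (i) it satisfies a chain rule at the state level, (ii) it is additive under tensor products of channels, so its channel version requires no regularization, and (iii) it converges to the Umegaki relative entropy as $\alpha \to 1^+$. In addition, for $\alpha > 1$ one has the pointwise bound $D(\rho\|\sigma) \leq \widehat{D}_\alpha(\rho\|\sigma)$ inherited from the ordering of Rényi divergences.

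First I would establish the state-level chain rule for $\alpha \in (1, 2]$: for any channels $\E, \F : \P(\HS_A) \to \P(\HS_B)$ and states $\rho, \sigma \in \DM[\HS_A]$,
\begin{equation*}
\widehat{D}_\alpha(\E(\rho)\|\F(\sigma)) \leq \widehat{D}_\alpha(\rho\|\sigma) + \widehat{D}_\alpha(\E\|\F).
\end{equation*}
The argument uses the characterization of $\widehat{D}_\alpha$ through minimal reverse tests (equivalently, via the Belavkin--Staszewski weighted geometric mean), together with the operator convexity of $x \mapsto x^\alpha$ on this range. Concretely, one composes an optimal reverse test of $(\rho, \sigma)$ with an optimal reverse test of $(\E, \F)$ realised on the channel outputs to obtain a candidate reverse test for $(\E(\rho), \F(\sigma))$ whose classical $\alpha$-Rényi divergence splits additively into state and channel contributions.

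Second I would prove additivity of $\widehat{D}_\alpha$ for quantum channels,
\begin{equation*}
\widehat{D}_\alpha(\E_1 \otimes \E_2 \| \F_1 \otimes \F_2) = \widehat{D}_\alpha(\E_1\|\F_1) + \widehat{D}_\alpha(\E_2\|\F_2),
\end{equation*}
so that $\widehat{D}_\alpha^{\mathrm{reg}}(\E\|\F) = \widehat{D}_\alpha(\E\|\F)$. The ``$\geq$'' direction is immediate from product inputs; the ``$\leq$'' direction exploits additivity of the state-level $\widehat{D}_\alpha$ under tensor products and the reduction of the channel-level supremum to pure inputs, whose purifications tensorise. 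Combining the chain rule with additivity and the inequality $D \leq \widehat{D}_\alpha$ gives, for every $\alpha \in (1, 2]$,
\begin{equation*}
D(\E(\rho)\|\F(\sigma)) \leq \widehat{D}_\alpha(\rho\|\sigma) + \widehat{D}_\alpha(\E\|\F).
\end{equation*}

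The main obstacle is the final step: taking $\alpha \to 1^+$ and identifying $\inf_{\alpha > 1} \widehat{D}_\alpha(\E\|\F)$ with $D^{\mathrm{reg}}(\E\|\F)$. The state-level limit $\widehat{D}_\alpha(\rho\|\sigma) \to D(\rho\|\sigma)$ is standard in finite dimensions by dominated convergence applied to the spectral representation. The channel-level limit is more delicate because it requires interchanging the supremum over input states with $\alpha \to 1^+$; this can be handled by applying the state-level inequality to $\E^{\otimes n}$ and $\F^{\otimes n}$ on a fixed product input $\rho^{\otimes n}, \sigma^{\otimes n}$, dividing by $n$ and sending $n \to \infty$ before $\alpha \to 1^+$, which by additivity converts the geometric \renyi channel term into the parallel-strategy Umegaki divergence $\frac{1}{n} D(\E^{\otimes n}\|\F^{\otimes n})$ and yields $D^{\mathrm{reg}}(\E\|\F)$ in the limit. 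The equivalence $D^{\mathrm{reg}} = D^A$ follows since the chain rule gives $D^A \leq D^{\mathrm{reg}}$ while the reverse inequality is immediate from the definitions.
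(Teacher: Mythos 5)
Your route has a fatal gap at the limit $\alpha \to 1^{+}$. The geometric (maximal) \renyi divergence does \emph{not} converge to the Umegaki relative entropy as $\alpha \to 1^{+}$; it converges to the Belavkin--Staszewski relative entropy $\widehat{D}(\rho\|\sigma) = \Tr[\rho \log(\sigma^{-1/2}\rho\sigma^{-1/2})]$, which satisfies $\widehat{D}(\rho\|\sigma) \geq D(\rho\|\sigma)$ with equality essentially only for commuting states. (The paper's own remark on the Belavkin--Staszewski chain rule, obtained as the maximal $f$-divergence for $f(x) = x\log x$, reflects exactly this.) Consequently your chain of inequalities terminates at
\begin{equation*}
    D(\E(\rho)\|\F(\sigma)) \leq \widehat{D}(\rho\|\sigma) + \widehat{D}(\E\|\F),
\end{equation*}
which is strictly weaker than the theorem: the inequality $D \leq \widehat{D}_{\alpha}$ that you invoke helps on the left-hand side but goes the \emph{wrong way} for the state term on the right, and there is no mechanism to pull $\widehat{D}_{\alpha}(\rho\|\sigma)$ down to $D(\rho\|\sigma)$. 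The same problem afflicts the channel term: your proposed fix of applying the state-level inequality to $\E^{\otimes n}$, $\F^{\otimes n}$ on product inputs and dividing by $n$ is vacuous, because by the very additivity you established both sides just rescale by $n$ and the channel term remains $\widehat{D}_{\alpha}(\E\|\F)$, which upper-bounds $D^{\mathrm{reg}}(\E\|\F)$ rather than being bounded by it.

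The actual proof (in \cite{fang_chain_2020}, and mirrored by this paper's proof of the infinite-dimensional version, \autoref{thm:InfDimChainRule}) takes a different route: one first proves a \emph{one-shot} chain rule for the smoothed max-relative entropy of the form
\begin{equation*}
    D_{\max}^{\epsilon+\epsilon'}\big((\E(\rho))^{\otimes m}\big\|(\F(\sigma))^{\otimes m}\big) \leq \sup_{\nu} D_{\max}^{\epsilon'}\big((\E(\nu))^{\otimes m}\big\|(\F(\nu))^{\otimes m}\big) + m\, D_{\max}^{\epsilon/m}(\rho\|\sigma)
\end{equation*}
(\autoref{lemma:BjarneLemma8}), and then uses the asymptotic equipartition property for $D_{\max}^{\epsilon}$ (\autoref{lem:InfDImDmaxAEP}) to convert each smoothed max term into an Umegaki relative entropy plus error terms vanishing as $m \to \infty$; the supremum over $\nu$ then produces exactly $D(\E^{\otimes n}\|\F^{\otimes n})/n \to D^{\mathrm{reg}}(\E\|\F)$. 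The geometric \renyi chain rule that you correctly sketch in your first two steps is still needed, but only to bound the AEP error constants $c_{\gamma}$ uniformly along the adaptive/tensor-power sequence --- not as the vehicle that carries the inequality to the Umegaki relative entropy.
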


While this shows the equivalence of the asymptotic rates for adaptive and parallel strategies, it does not allow any conclusions about the relation of adaptive and parallel strategies for finite number of channel uses. Such a relation for finite number of channel uses is given by the following \enquote{one-shot} result:
\begin{theorem}[{\cite[Corollary 4]{art:Bjarne}}]\label{thm:bjarne_corr}
	Let $\HS_A$ and $\HS_B$ be two finite dimensional Hilbert spaces, and let $\E, \F: \P(\HS_A)  \to \P(\HS_B)$ be two quantum channels such that $D_{\max}(\E\|\F) <  \infty$. Let there be an adaptive discrimination protocol with $n$ channel uses,  specified by the final states $\E(\rho_n)$ and $\F(\sigma_n)$ (see above for a more thorough explanation of this notation). Then, for any $\alpha_a$ , $\alpha_p \in (0,1)$ there exists a parallel input state $\nu_m \in \DM[\HS_A^{\otimes m} \otimes \HS_{A'}^{\otimes m}]$ such that
    \begin{equation}
        	    \frac{1}{m} D_H^{\alpha_p}(\E^{\otimes m}(\nu_m)\|\F^{\otimes m}(\nu_m))  \geq \frac{1 - \alpha_a}{n} D_H^{\alpha_a}(\E(\rho_n)\|\F(\sigma_n)) - \frac{C n}{\sqrt{m}}\log\br{8\over \alpha_p} - \frac{1}{n}.
    \end{equation}
    That is, the type~II error rate of the parallel protocol is essentially at least as good as the adaptive one modulo an additional error term, which decays as $m \to \infty$. 
    We have
    \begin{equation}
        C \coloneqq 7 \log(2^{D_2(\E\|\F)} + 2) = 7 \log(2^{\widehat{D}_2(\E\|\F)} + 2)\leq 7 \log(2^{D_{\max}(\E\|\F)} + 2)\,.
    \end{equation}
    
\end{theorem}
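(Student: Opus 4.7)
The plan is to bound the adaptive rate from above using the chain rule for the geometric \renyi divergence $\widehat{D}_\alpha$ (for $\alpha\in(1,2]$), and to exhibit a matching parallel lower bound by taking a tensor-product input and invoking a quantum second-order (AEP-type) bound on iid hypothesis testing. The $(1-\alpha_a)$ factor on the right-hand side will come from the one-shot inequality $D_H^\epsilon(\rho\|\sigma)\le (D(\rho\|\sigma)+h(\epsilon))/(1-\epsilon)$ (with $h$ the binary entropy) combined with $D\le\widehat{D}_\alpha$; the $Cn/\sqrt{m}$ term will arise as the AEP correction on the parallel side with variance controlled through $\widehat{D}_2(\E\|\F)$; and the $1/n$ correction will absorb the leftover $h(\alpha_a)/n$ from the one-shot conversion.

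\textbf{Adaptive side.} Iterate the chain rule $\widehat{D}_\alpha(\E(\rho)\|\F(\sigma))\le\widehat{D}_\alpha(\rho\|\sigma)+\widehat{D}_\alpha(\E\|\F)$ over the $n$ channel uses: since each preparation channel $\Lambda_i$ acts identically on both branches, the DPI for $\widehat{D}_\alpha$ ensures it cannot increase the state divergence, while every use of the unknown channel contributes at most $\widehat{D}_\alpha(\E\|\F)$. This yields
\[\widehat{D}_\alpha(\E(\rho_n)\|\F(\sigma_n))\le n\,\widehat{D}_\alpha(\E\|\F).\]
Combining with $D_H^{\alpha_a}\le(D+h(\alpha_a))/(1-\alpha_a)$ and $D\le\widehat{D}_\alpha$ then gives
\[\frac{1-\alpha_a}{n}\,D_H^{\alpha_a}(\E(\rho_n)\|\F(\sigma_n))\le\widehat{D}_\alpha(\E\|\F)+\frac{h(\alpha_a)}{n(1-\alpha_a)}.\]

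\textbf{Parallel side.} By compactness of the state space in finite dimensions, there exists a pure $\ket{\phi}_{AA'}$ achieving $\widehat{D}_\alpha(\E\|\F)=\widehat{D}_\alpha(\E(\phi)\|\F(\phi))$. Take $\nu_m=\kb{\phi}^{\otimes m}$ so that the channel outputs $\E(\phi)^{\otimes m}$ and $\F(\phi)^{\otimes m}$ are iid. A quantum second-order asymptotic bound (following Nagaoka, Hayashi, Li, Tomamichel) then yields
\[\frac{1}{m}\,D_H^{\alpha_p}(\E(\phi)^{\otimes m}\|\F(\phi)^{\otimes m})\ge D(\E(\phi)\|\F(\phi))-C\sqrt{\frac{\log(1/\alpha_p)}{m}},\]
where the relative-entropy variance can be bounded by $O((\log(2^{\widehat{D}_2(\E\|\F)}+2))^2)$, producing exactly the constant $C$ in the statement. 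Finally, use the expansion $\widehat{D}_\alpha-D=O((\alpha-1)V)$ as $\alpha\searrow 1$ and choose $\alpha=1+\Theta(1/\sqrt{m})$ to replace $D(\E(\phi)\|\F(\phi))$ by $\widehat{D}_\alpha(\E\|\F)$ at the cost of another $O(C/\sqrt{m})$ error, then chain with the adaptive bound.

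\textbf{Main obstacle.} The delicate point is the simultaneous control of three competing error sources: the $h(\alpha_a)/n$ from the $D_H$-to-$D$ conversion, the $(\alpha-1)V$ gap between $\widehat{D}_\alpha$ and $D$ which shrinks as $\alpha\to 1$, and the AEP $\sqrt{V/m}$ correction which grows as $\alpha\to 1$; balancing them with a single choice of $\alpha$ produces the exact form of the constants in the statement. A separate technical step is bounding the relative-entropy variance $V(\E(\phi)\|\F(\phi))$ by an operator quantity involving $\widehat{D}_2$ rather than by the potentially much larger $D_\text{max}$, which is what makes $C=7\log(2^{\widehat{D}_2(\E\|\F)}+2)$ the correct constant and will ultimately allow the infinite-dimensional extension pursued in the rest of the paper.
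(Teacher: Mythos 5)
Your proposal does not follow the paper's route, and it contains a gap that I do not believe can be repaired. The actual argument (from \cite{art:Bjarne}, reproduced in infinite dimensions in the proof of \autoref{thm:main}) never upper-bounds the adaptive protocol by a \emph{channel} quantity. Instead it telescopes $D(\E(\rho_n)\|\F(\sigma_n)) \leq \sum_k [D(\E(\rho_k)\|\F(\sigma_k)) - D(\rho_k\|\sigma_k)] \leq n(D(\E(\rho_l)\|\F(\sigma_l)) - D(\rho_l\|\sigma_l))$ for the maximizing step $l$, converts this \emph{difference} into a difference of smoothed max-relative entropies on $m$ copies via the two-sided AEP (\autoref{lem:InfDImDmaxAEP}), and then applies the smoothed-$D_{\max}$ chain rule (\autoref{lemma:BjarneLemma8}) to peel off $D_{\max}^{\epsilon'}(\E^{\otimes m}(\nu)\|\F^{\otimes m}(\nu))$ for a \emph{concrete} state $\nu$ built from $\rho_l^{\otimes m}$; this $\nu$ is the parallel input state of the theorem. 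The geometric \renyi chain rule enters only to bound the AEP error constants $c_\gamma$ by $C n$, which is where $\widehat{D}_2(\E\|\F)$ appears.

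The gap in your approach is on the parallel side. You upper-bound the adaptive rate by $\Dg(\E\|\F) + h(\alpha_a)/n$ and then try to show that a tensor-power input achieves $\Dg(\E\|\F)$ up to $O(1/\sqrt{m})$. This cannot work for two reasons. First, the best rate any parallel (indeed any) strategy can achieve asymptotically is $D^{\mathrm{reg}}(\E\|\F)$ (\autoref{thm:steins}), and $D^{\mathrm{reg}}(\E\|\F) \leq \Dg(\E\|\F)$ with possibly \emph{strict} inequality for every $\alpha > 1$; so no choice of $\nu_m$ can close the gap to your adaptive upper bound. Second, the expansion $\Dg - D = O((\alpha-1)V)$ that you invoke to pass from $D(\E(\phi)\|\F(\phi))$ back to $\Dg(\E\|\F)$ is false for the geometric \renyi divergence: as $\alpha \searrow 1$, $\Dg$ converges to the Belavkin--Staszewski relative entropy, which is in general strictly larger than the Umegaki relative entropy $D$, so the gap $\Dg(\E(\phi)\|\F(\phi)) - D(\E(\phi)\|\F(\phi))$ does not vanish as $\alpha \to 1$. (Such an expansion holds for the Petz or sandwiched divergences, but those do not satisfy the chain rule you need on the adaptive side.) The resolution, as in the paper, is to keep the comparison at the level of the actual states of the adaptive protocol via \autoref{lemma:BjarneLemma8}, and to use $\Dg$ only for the second-order error terms.
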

As the proof of an infinite dimensional version of this one-shot result constitutes a large part of this paper, let us briefly discuss the technical steps required for the proof, and how we generalized them to infinite dimensions.
For the derivation of this result in \cite{art:Bjarne} the following inequalities \eqref{equ:ImpIneq1}-\eqref{equ:ImpIneq5} were instrumental. 

\begin{equation}
	D_\text{max}^\epsilon(\rho||\sigma)\leq D_H^{1-\epsilon^2}(\rho||\sigma)+\log\left(\frac{1}{1-\epsilon^2}\right)
	\label{equ:ImpIneq1}
\end{equation}
For $\alpha\in(1,\infty)$ it holds that
\begin{equation}
	D_\text{max}^\epsilon(\rho||\sigma)\leq D_\alpha(\rho||\sigma)+\frac{2}{\alpha-1}\log\left(\frac{1}{\epsilon}\right)+\log\left(\frac{1}{1-\epsilon^2}\right).
	\label{equ:ImpIneq2}
\end{equation}
For $\alpha\in(0,1)$ it holds that
\begin{equation}
	D_\text{max}^\epsilon(\rho||\sigma) \geq D_\alpha(\rho||\sigma) + \frac{2}{\alpha-1}\log\left(\frac{1}{1-\epsilon}\right).
	\label{equ:ImpIneq3}
\end{equation}

\begin{equation}
	D_H^\epsilon(\rho||\sigma)\leq \frac{1}{1-\epsilon}[D(\rho||\sigma)+h(\epsilon)]
	\label{equ:ImpIneq4}
\end{equation}
where $h(\epsilon)=-\epsilon\log\epsilon-(1-\epsilon)\log(1-\epsilon)$ is the binary entropy.

For any two quantum channels $\mathcal{E},\mathcal{F}$ on finite dimensional Hilbert spaces and for any $\epsilon,\epsilon^\prime$, we can find a state $\nu$, s.t.
\begin{equation}
	D_\text{max}^{\epsilon+\epsilon^\prime}(\mathcal{E}(\rho)||\mathcal{F}(\sigma)) \leq D_\text{max}^{\epsilon^\prime}(\mathcal{E}(\nu)||\mathcal{F}(\nu)) + D_\text{max}^{\epsilon}(\rho||\sigma).
	\label{equ:ImpIneq5}
\end{equation}

The original proofs of equations \eqref{equ:ImpIneq1} and \eqref{equ:ImpIneq2}, given in \cite{art:MinimaxMaxDivergenceAnshu_2019,wang_resource_states_2019}, respectively, are based on a minimax expression for the smoothed max-relative entropy \cite{art:MinimaxMaxDivergenceAnshu_2019} which requires Sion's minimax theorem. However, the generalisation of this proof technique to infinite dimensional Hilbert spaces is not straightforward. Hence, we give a new proof of these inequalities in Corollary \ref{COR:1} and \ref{COR:2} respectively. Even though the original proof of \eqref{equ:ImpIneq3} in \cite{wang_resource_states_2019} was not done explicitly in infinite dimensions, the same proof works for the latter case. Its proof (in infinite dimensions) is spelled out in Lemma~\ref{lemma:SmoothedDmaxLowerBoundWW}. The proof of inequality \eqref{equ:ImpIneq4} is a convenient application of the DPI for the Umegaki relative entropy (as shown in \cite{art:LindbaldOriginalDPIUmeghakiDiv1975CMaPh..40..147L}), and thus also directly holds in infinite dimensions. A proof may be found in \cite{wang_one-shot_2012}. Equation \eqref{equ:ImpIneq5} (proved in \cite[Lemma 8]{art:Bjarne}) will have to be slightly modified in the infinite dimensional setting, via the same proof idea as in \cite{art:Bjarne}; its extension can be found in Lemma \ref{lemma:BjarneLemma8}.

\section{Main results and infinite dimensional entropic inequalities}
Throughout this section, we will assume that $\HS_A$, $\HS_B$ are two separable Hilbert spaces and $\E$, $\F : \P(\HS_A) \to \P(\HS_B)$ are two CPTP maps such that $\Dg(\E\|\F) < \infty$ for some $\alpha > 1$, where we will discuss this last condition in detail towards the end of this section. We then get the following results regarding the discrimination of these two quantum channels in infinite dimensions.

\begin{theorem}[Stein's lemma for infinite dimensional channels]\label{thm:steins}
	Let $\E$, $\F: \P(\HS_A) \to \P(\HS_B)$ be two quantum channels between separable Hilbert spaces $\HS_A$, $\HS_B$. Then, the asymmetric asymptotic error exponent of discriminating these two channels with a parallel strategy is given by
	\begin{equation}
		\lim_{\varepsilon \to 0}\lim_{n \to \infty} {1 \over n} D_H^{\varepsilon}(\E\n\|\F\n) = D^{\mathrm{reg}}(\E\|\F)\,.
	\end{equation}
	Moreover, if $\Dg(\E\|\F) < \infty$ for some $\alpha > 1$, then this is also equal to the asymptotic error exponent using adaptive strategies and hence adaptive strategies offer no asymptotic advantage.
\end{theorem}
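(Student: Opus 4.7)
The theorem splits into the parallel statement (unconditional) and the adaptive equivalence (requiring the finiteness hypothesis on $\widehat{D}_\alpha$). For the \emph{parallel} half, the upper bound
\[
\limsup_{n\to\infty}\tfrac{1}{n}D_H^{\varepsilon}(\mathcal{E}\n\|\mathcal{F}\n) \le \tfrac{1}{1-\varepsilon}D^{\mathrm{reg}}(\mathcal{E}\|\mathcal{F})
\]
follows by inserting an arbitrary input state into inequality~\eqref{equ:ImpIneq4}, taking a supremum over inputs, and invoking the definition of the stabilized channel divergence; sending $\varepsilon\to 0$ finishes this direction. For the matching lower bound, I would fix $m\in\mathbb{N}$ and pick $\rho\in\mathcal{D}(\mathcal{H}_R\otimes\mathcal{H}_A^{\otimes m})$ almost attaining $\tfrac{1}{m}D(\mathcal{E}^{\otimes m}\|\mathcal{F}^{\otimes m})$. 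Feeding $\rho^{\otimes k}$ into $n=mk$ parallel channels reduces the problem to i.i.d.\ state discrimination between the product states $\mathcal{E}^{\otimes m}(\rho)^{\otimes k}$ and $\mathcal{F}^{\otimes m}(\rho)^{\otimes k}$; the standard quantum Stein lemma for states on separable Hilbert spaces then yields $\liminf_n \tfrac{1}{n}D_H^{\varepsilon}(\mathcal{E}\n\|\mathcal{F}\n) \ge \tfrac{1}{m}D(\mathcal{E}^{\otimes m}(\rho)\|\mathcal{F}^{\otimes m}(\rho))$, and taking suprema over $\rho$ and $m$ completes this half.

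For the \emph{adaptive} half, the inclusion of parallel into adaptive strategies gives ``adaptive $\ge$ parallel'' immediately, so only the reverse inequality requires $\widehat{D}_\alpha(\mathcal{E}\|\mathcal{F})<\infty$. The plan is to invoke the infinite-dimensional one-shot conversion result (\autoref{thm:main}), the paper's main technical contribution and the direct analogue of \autoref{thm:bjarne_corr}. Given any adaptive strategy of length $n$ and parameters $\alpha_a,\alpha_p\in(0,1)$, it produces a parallel input state $\nu_m$ satisfying
\begin{equation*}
\tfrac{1-\alpha_a}{n} D_H^{\alpha_a}(\mathcal{E}(\rho_n)\|\mathcal{F}(\sigma_n)) \le \tfrac{1}{m} D_H^{\alpha_p}(\mathcal{E}^{\otimes m}(\nu_m)\|\mathcal{F}^{\otimes m}(\nu_m)) + \tfrac{Cn}{\sqrt{m}}\log\br{\tfrac{8}{\alpha_p}} + \tfrac{1}{n},
\end{equation*}
where the constant $C$ is finite precisely because $\widehat{D}_\alpha(\mathcal{E}\|\mathcal{F})<\infty$. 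Choosing $m=n^3$ makes the penalty $C/\sqrt{n}$ vanish, so passing to a supremum over adaptive strategies on the left and over $\nu_m$ on the right, then sending $n\to\infty$, $\alpha_p\to 0$, and $\alpha_a\to 0$ in turn, and using the already-established parallel half to identify the remaining limit with $D^{\mathrm{reg}}(\mathcal{E}\|\mathcal{F})$, yields the adaptive rate $\le D^{\mathrm{reg}}(\mathcal{E}\|\mathcal{F})$ and closes the circle.

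The main obstacle is not in any of these limit manipulations but in the prerequisite one-shot conversion. Its proof rests on a chain rule and an additivity property for the geometric \renyi channel divergence in separable Hilbert spaces, both of which are comparatively routine in finite dimensions but require substantial new work in infinite dimensions. These would be established using the reverse-test characterisation of $\widehat{D}_\alpha$ together with careful approximation by finite-rank states and a limit passage $\alpha\searrow 1$ from $\widehat{D}_\alpha$ back to the Umegaki relative entropy. The role of the hypothesis $\widehat{D}_\alpha(\mathcal{E}\|\mathcal{F})<\infty$ is twofold: it keeps the constant $C$ in the one-shot conversion finite, and it underwrites the continuity and limit arguments needed inside the proof of \autoref{thm:main}.
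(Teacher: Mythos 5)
Your proposal is correct in outline, and the upper bound for the parallel rate as well as the entire adaptive half coincide with the paper's argument: the upper bound is exactly the application of \eqref{equ:ImpIneq4} in \autoref{app:CQSLUpperBound}, and the adaptive equivalence is obtained in \autoref{prop:adaptive_stein} by feeding \autoref{thm:main} into the already-established parallel statement with the same ordering of suprema and limits you describe. Where you genuinely diverge is the parallel \emph{achievability} direction. You reduce to i.i.d.\ discrimination of the states $\E^{\otimes m}(\rho)$ vs.\ $\F^{\otimes m}(\rho)$ and then invoke the state-level quantum Stein's lemma on separable Hilbert spaces as a black box. The paper deliberately avoids this: it notes that the standard finite-dimensional lower-bound proof of \cite{wang_resource_2019} ``does not evidently hold'' in infinite dimensions, and instead derives in \autoref{prop:parallel_stein} the one-shot bound $D_H^\epsilon(\rho\|\sigma) \geq D_\alpha(\rho\|\sigma) - \log(1/\epsilon) + \tfrac{2}{\alpha-1}\log\bigl(\tfrac{1}{1-\sqrt{1-\epsilon}}\bigr)$ for $\alpha\in(0,1)$ by chaining \eqref{equ:ImpIneq1NEW} with \eqref{equ:ImpINeq3REP}, then uses additivity of the Petz--\renyi divergence and $\lim_{\alpha\uparrow 1}D_\alpha = D$ to recover the regularized relative entropy. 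In effect the paper re-proves the direct part of the i.i.d.\ state Stein's lemma along the way, which keeps the argument self-contained; your route is shorter but shifts the burden onto verifying that the achievability part of the state Stein's lemma (including the case where $D(\E^{\otimes m}(\rho)\|\F^{\otimes m}(\rho))=\infty$) is actually available in the separable-Hilbert-space literature, which is precisely the point the authors were being cautious about. Your closing paragraph correctly identifies that the real work sits inside \autoref{thm:main} and the chain rule/additivity of $\widehat{D}_\alpha$ (\autoref{thm:geometric_chain_rule}), though the paper's proof of the latter proceeds via discretized reverse tests and an adaptation of the finite-dimensional chain-rule argument rather than via a limit $\alpha\searrow 1$ back to the Umegaki entropy, which is only needed later for \autoref{thm:InfDimChainRule}.
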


This establishes the role of the regularised channel relative entropy as the asymptotic asymmetric discrimination exponent also in infinite dimensions. Additionally to this asymptotic result, we have the following relation between adaptive and parallel exponents for finite number of channel uses:

\begin{theorem}[One-shot conversion of adaptive to parallel strategies]
\label{thm:main}
Let $\mathcal{E},\mathcal{F}:\P(\mathcal{H}_A)\to\P(\mathcal{H}_B)$ be two quantum channels between separable Hilbert spaces $\mathcal{H}_A$ and $\mathcal{H}_B$, s.t.  $\widehat{D}_{\alpha}(\mathcal{E}\|\mathcal{F})<\infty$ for some $\alpha>1$. Then for any given adaptive discrimination strategy specified by the final states $\E(\rho_n)$ and $\F(\sigma_n)$ (see above for a more thorough explanation of this notation) and for any $\alpha_a, \alpha_p\in (0,1)$ and any $\mu > 0$ there exists a parallel input state $\nu \in \DM[\HS_A^{\otimes n} \otimes \HS_{A'}^{\otimes n}]$ such that

\begin{align}
    \frac{1}{n}D^{\alpha_a}_H(\mathcal{E}(\rho_n)\|\mathcal{F}(\sigma_n))\leq \frac{1}{1-\alpha_a}\frac{1}{m}D^{\alpha_p}_H(\mathcal{E}^{\otimes m}(\nu)\|\mathcal{F}^{\otimes m}(\nu))+f(\alpha_a,\alpha_p,m,n,\mu),
\end{align}
where the error term is
\begin{align}
    f(\alpha_a,\alpha_p,m,n,\mu)=\frac{1}{1-\alpha_a}\left[\frac{Cn}{\sqrt{m}}\log\left(\frac{8}{\alpha_p}\right)+\frac{1}{m}\left[\log\frac{1}{\alpha_p}-\log\left(1-\frac{\alpha_p}{4}\right)+\mu\right]+\frac{h(\alpha_a)}{n}\right]
\end{align}
%\begin{align}
%    -\frac{1}{m}\log(\beta_P(\alpha_p))\geq -\frac{1-\alpha_a}{n}\log(\beta_A(\alpha_a))-\frac{Cn}{\sqrt{m}}\log(8\alpha_p^{-1})-\frac{1}{n}-\frac{\eta}{m}.
%\end{align}
and
\begin{equation}
    C \leq 8 \inf_{\alpha \in (1,2]} \frac{1}{\alpha - 1} \log\left(2^{(\alpha - 1) \widehat{D}_{\alpha}(\E\|\F)} + 2\right)\,.
\end{equation}

This means that the type II error scaling of the parallel strategy is at least as good as the one of the adaptive strategy up to an error term vanishing for $m,n\to\infty$.

%\begin{align}
%    \frac{1}{n}D^{\alpha_a}_H(\mathcal{E}(\rho_n)\|\mathcal{F}(\sigma_n))\leq \frac{1}{1-\alpha_a}\frac{1}{m}D^{\alpha_p}_H(\mathcal{E}^{\otimes n}(\nu)\|\mathcal{F}^{\otimes n}(\nu))+\frac{1}{1-\alpha_a}[\frac{c_l}{\sqrt{m}}\sqrt{\log\frac{4}{\alpha_p}}+\frac{1}{m}[\log\frac{1}{\alpha_p}-\log(1-\frac{\alpha_p}{4})+\mu]+\frac{h(\alpha_a)}{n}].
%\end{align}
\end{theorem}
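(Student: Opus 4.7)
The plan is to extend the proof of the finite-dimensional Theorem~\ref{thm:bjarne_corr} to the infinite-dimensional setting, substituting each ingredient by its infinite-dimensional analog established elsewhere in this paper: the smoothed max-divergence bounds \eqref{equ:ImpIneq1}--\eqref{equ:ImpIneq3} (Lemma~\ref{lemma:SmoothedDmaxLowerBoundWW}, Corollaries~\ref{COR:1} and \ref{COR:2}), the Umegaki--hypothesis-testing inequality \eqref{equ:ImpIneq4}, the state-to-channel chain rule for the smoothed max-divergence \eqref{equ:ImpIneq5} in the form of Lemma~\ref{lemma:BjarneLemma8}, and the additivity and chain rule for the geometric R\'enyi divergence $\widehat{D}_\alpha$ of channels, proved for infinite dimensions earlier in this paper.

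First I would upper-bound the adaptive hypothesis-testing divergence by the Umegaki relative entropy via \eqref{equ:ImpIneq4}, whose proof reduces to the DPI of the Umegaki divergence and hence transfers directly to separable Hilbert spaces. This yields
\begin{equation*}
\frac{1}{n}D_H^{\alpha_a}(\mathcal{E}(\rho_n)\|\mathcal{F}(\sigma_n)) \leq \frac{1}{1-\alpha_a}\left[\frac{1}{n}D(\mathcal{E}(\rho_n)\|\mathcal{F}(\sigma_n)) + \frac{h(\alpha_a)}{n}\right],
\end{equation*}
which already accounts for the global factor $\frac{1}{1-\alpha_a}$ and the $\frac{h(\alpha_a)}{n}$ term of $f$. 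Next, using \eqref{equ:ImpIneq1} (Corollary~\ref{COR:1}) to bound the Umegaki entropy in terms of a smoothed max-divergence, I would iteratively invoke the infinite-dimensional state-to-channel chain rule (Lemma~\ref{lemma:BjarneLemma8}) to unroll the adaptive protocol one preparation channel at a time, absorbing each $\Lambda_i$ into a chosen single-shot state $\nu_i$. Because the sine-distance ball $B^\epsilon(\rho)$ is not norm-compact in infinite dimensions, one must work with $\mu$-approximate minimisers of the smoothed max-divergence -- this is the source of the auxiliary $\mu$ parameter in $f$.

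Second, I would package the resulting sum $\sum_{i=1}^{n} D_\text{max}^{\varepsilon_i}(\mathcal{E}(\nu_i)\|\mathcal{F}(\nu_i))$ into a single parallel $m$-shot hypothesis-testing divergence. The construction chooses $\nu \in \DM[\HS_A^{\otimes m}\otimes\HS_{A'}^{\otimes m}]$ as an appropriate tensor-product/mixture of the purifications of the $\nu_i$, applies \eqref{equ:ImpIneq2} (Corollary~\ref{COR:2}) to pass from the smoothed max-divergence to $\widehat{D}_\alpha$, and then exploits additivity of $\widehat{D}_\alpha$ on tensor products (established in infinite dimensions earlier in the paper) to collapse the $n$ single-shot terms into $n\widehat{D}_\alpha(\mathcal{E}\|\mathcal{F})$. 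Finally, \eqref{equ:ImpIneq3}/Lemma~\ref{lemma:SmoothedDmaxLowerBoundWW} converts the smoothed max-divergence back into $D_H^{\alpha_p}(\mathcal{E}^{\otimes m}(\nu)\|\mathcal{F}^{\otimes m}(\nu))$. Choosing the smoothing schedule $\varepsilon_i \sim 1/\sqrt{m}$ balances per-iteration corrections and, after optimising over $\alpha\in(1,2]$ in \eqref{equ:ImpIneq2}, produces the term $\frac{Cn}{\sqrt{m}}\log(8/\alpha_p)$ with the stated infimum representation of $C$; the residual $\frac{1}{m}\bigl[\log(1/\alpha_p) - \log(1-\alpha_p/4)\bigr]$ is accounted for by the final $D_\text{max}^\varepsilon \to D_H^{\alpha_p}$ conversion step.

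The main obstacle is the infinite-dimensional version of \eqref{equ:ImpIneq5}: the non-compactness of $B^\epsilon(\rho)$ forces one to work with $\mu$-approximate minimisers, and care must be taken that the accumulated $\mu$-overhead across the $n$ adaptive iterations is absorbable into a single $\frac{\mu}{m}$ contribution (rather than, say, $\frac{n\mu}{m}$). A secondary technical point is that the chain rule and additivity of $\widehat{D}_\alpha$ rely on operator convexity of $x\mapsto x^\alpha$ on $(1,2]$, so the final constant $C$ is only optimised over this range; this is also why the standing hypothesis $\widehat{D}_\alpha(\mathcal{E}\|\mathcal{F})<\infty$ for some $\alpha>1$ cannot be relaxed.
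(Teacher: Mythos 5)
Your proposal has the right opening move (inequality \eqref{equ:ImpIneq4} to pass from $D_H^{\alpha_a}$ to the Umegaki relative entropy, producing the $\frac{1}{1-\alpha_a}$ factor and the $h(\alpha_a)/n$ term), but after that it diverges from a workable argument in two essential ways. First, and most importantly, you have no mechanism to get from the Umegaki relative entropy to smoothed max-divergences: Corollary~\ref{COR:1} upper-bounds $D_{\max}^{\epsilon}$ by $D_H^{1-\epsilon^2}$, it does not bound $D$ by a smoothed max-divergence, and on a single copy the gap between $D(\rho\|\sigma)$ and $D_{\max}^{\epsilon}(\rho\|\sigma)$ does not vanish. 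The paper's proof closes this gap with the quantitative asymptotic equipartition property (Lemma~\ref{lem:InfDImDmaxAEP}, built from \eqref{equ:ImpIneq2NEW}, \eqref{equ:ImpINeq3REP} and the continuity estimate of Lemma~\ref{lemma:Bjarne9}), which converts $D(\rho\|\sigma)$ into $\frac{1}{m}D_{\max}^{\epsilon}(\rho^{\otimes m}\|\sigma^{\otimes m})$ up to a $c_{\gamma}/\sqrt{m}$ correction; it is this correction, with $c_{\gamma}$ bounded by $n\,\widehat{c}_{\gamma}(\E\|\F)$ via the geometric R\'enyi chain rule, that produces the $\frac{Cn}{\sqrt{m}}\log(8/\alpha_p)$ term. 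The AEP is entirely absent from your outline, so the chain of inequalities cannot close.

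Second, your plan to unroll the protocol by iterating Lemma~\ref{lemma:BjarneLemma8} once per preparation channel, collecting $n$ states $\nu_i$ and tensoring them, is not what is needed and would run into the accumulation problem you yourself flag (smoothing parameters and $\mu$'s adding up over $n$ rounds, and the per-round terms being \emph{one-shot} channel max-divergences rather than the $m$-fold quantities that regularize correctly). The paper instead telescopes $D(\E(\rho_n)\|\F(\sigma_n))\le\sum_{k}[D(\E(\rho_k)\|\F(\sigma_k))-D(\rho_k\|\sigma_k)]\le n\,[D(\E(\rho_l)\|\F(\sigma_l))-D(\rho_l\|\sigma_l)]$ using only the DPI, reduces to the single best round $l$, applies the AEP to both terms of that difference on $m$-fold tensor powers, and then invokes Lemma~\ref{lemma:BjarneLemma8} exactly \emph{once} (on $\E(\rho_l)^{\otimes m}$) so that the $D_{\max}^{\epsilon}(\rho_l^{\otimes m}\|\sigma_l^{\otimes m})$ terms cancel; the parallel input $\nu$ is a single smoothing of $\rho_l^{\otimes m}$, not a tensor product over rounds, and only a single additive $\mu$ ever appears. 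Finally, a smaller point: the conversion to $D_H^{\alpha_p}$ at the end is done by Corollary~\ref{COR:1}, not by Lemma~\ref{lemma:SmoothedDmaxLowerBoundWW}, which instead enters the lower-bound half of the AEP.
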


\begin{remark}
Note that \cite[Theorem 6]{art:Bjarne} also proved slightly tighter, but much more complicated version of \autoref{thm:bjarne_corr}. A similar generalization to infinite dimensions along the lines of our proof of \autoref{thm:main} also holds for this more complicated version. 
\end{remark}

We will use \autoref{thm:main} to prove the asymptotic equivalence of adaptive and parallel strategies in \autoref{thm:steins}, which means that unlike the original argument in finite dimensions \cite{fang_chain_2020, wang_resource_2019}, we do not go the way of establishing an amortized expression for the adaptive exponent and then applying a chain rule. However, we still consider the chain rule for the quantum relative entropy to be of independent interest, and we are able to establish it using our techniques under the following fairly weak condition:
\begin{theorem} \label{thm:InfDimChainRule}
	Let $\mathcal{H},\mathcal{K}$ be separable Hilbert spaces, and $\E, \F : \P(\HS) \to \P(\mathcal{K})$ be two quantum channels, such that $\Dg(\E\|\F) < \infty$ for some $\alpha > 1$. Then,  for any two states $\rho,\sigma\in\mathcal{D}(\mathcal{H})$ it holds that:
	\begin{equation}
		D(\E(\rho)\|\F(\sigma)) \leq D(\rho\|\sigma)+ D^{\mathrm{reg}}(\E\|\F).
	\end{equation}
\end{theorem}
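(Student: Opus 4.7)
The case $D(\rho\|\sigma) = \infty$ is trivial, so assume $D(\rho\|\sigma) < \infty$. The plan mirrors the finite-dimensional amortization approach: use Stein's lemma for states to rewrite the left-hand side as a hypothesis-testing quantity, invoke the infinite-dimensional chain rule for the smoothed max-divergence (Lemma~\ref{lemma:BjarneLemma8}, extending \eqref{equ:ImpIneq5}) to decouple state and channel contributions, and then close the limit using Stein's lemma for channels (\autoref{thm:steins}).

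Starting from Stein's lemma for states applied to $(\E(\rho),\F(\sigma))$,
\[
D(\E(\rho)\|\F(\sigma)) \;=\; \lim_{\varepsilon \to 0}\lim_{n \to \infty}\tfrac{1}{n}\,D_H^{\varepsilon}\!\left(\E\n(\rho\n) \,\big\|\, \F\n(\sigma\n)\right),
\]
I would convert $D_H^{\varepsilon}$ into a smoothed max-divergence by the standard Fuchs--Van-de-Graaf argument $D_H^{\varepsilon}(\tau\|\omega) \le D_{\max}^{\eta}(\tau\|\omega) + \log\tfrac{1}{1-\varepsilon-2\eta}$ for $\eta < (1-\varepsilon)/2$, and apply Lemma~\ref{lemma:BjarneLemma8} with splitting $\eta = \varepsilon_1 + \varepsilon_2$ to obtain, for some parallel input $\nu_n \in \DM[\HS_A\n \otimes \HS_R]$,
\[
\tfrac{1}{n} D_H^{\varepsilon}\!\left(\E\n(\rho\n) \,\|\, \F\n(\sigma\n)\right) \;\le\; \tfrac{1}{n} D_{\max}^{\varepsilon_2}\!\left(\E\n(\nu_n) \,\|\, \F\n(\nu_n)\right) + \tfrac{1}{n} D_{\max}^{\varepsilon_1}\!\left(\rho\n \,\|\, \sigma\n\right) + O(1/n),
\]
which decouples a pure state contribution from a pure channel contribution.

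The two pieces are then analysed asymptotically by different ingredients. For the \emph{state} piece, the infinite-dim AEP $\tfrac{1}{n}D_{\max}^{\varepsilon_1}(\rho\n\|\sigma\n) \to D(\rho\|\sigma)$ as $n\to\infty, \varepsilon_1\to 0$ follows by sandwiching $D_{\max}^{\varepsilon_1}$ between $D_H^{\cdot}$ via \eqref{equ:ImpIneq1} and invoking Stein's lemma for states together with \eqref{equ:ImpIneq4}; the assumption $D(\rho\|\sigma)<\infty$ is all that is needed. For the \emph{channel} piece, combine \eqref{equ:ImpIneq1}, the parallel-input monotonicity $D_H^{\delta}(\E\n(\nu_n)\|\F\n(\nu_n)) \le D_H^{\delta}(\E\n\|\F\n)$, and \autoref{thm:steins} to obtain $\tfrac{1}{n} D_{\max}^{\varepsilon_2}(\E\n(\nu_n)\|\F\n(\nu_n)) \to D^{\mathrm{reg}}(\E\|\F)$ in the combined limit. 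The hypothesis $\Dg(\E\|\F)<\infty$ enters critically here: via \eqref{equ:ImpIneq2} (applied with $D_\alpha \le \widehat{D}_\alpha$) and the additivity plus chain rule for $\widehat{D}_\alpha$ established earlier in the paper, it yields the strong-converse-type control on $D_H^{\delta}(\E\n\|\F\n)$ that allows $\varepsilon_2 \to 0$ to be taken after $n\to\infty$ without the usual multiplicative $1/(1-\varepsilon_2)$ blow-up. Adding the two asymptotic bounds and then sending $\varepsilon \to 0$ in the Stein rewriting yields the claimed $D(\E(\rho)\|\F(\sigma)) \le D(\rho\|\sigma) + D^{\mathrm{reg}}(\E\|\F)$.

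The anticipated main obstacle is the channel-side asymptotic: a naïve Rényi estimate $\tfrac{1}{n}D_{\max}^{\varepsilon_2}(\E\n(\nu_n)\|\F\n(\nu_n)) \le \widehat{D}_\alpha(\E\|\F) + O(1/n)$ degrades, as $\alpha \to 1^+$, to the Belavkin--Staszewski divergence $\widehat{D}_1(\E\|\F)$, which is in general strictly larger than $D^{\mathrm{reg}}(\E\|\F)$. The whole reason one can weaken the hypothesis $D_{\max}(\E\|\F)<\infty$ of \abscite{fawzi_asymptotic_2023} to $\Dg(\E\|\F)<\infty$ is that the strong-converse machinery behind \autoref{thm:steins}, together with the additivity of $\widehat{D}_\alpha$ for channels, still survives this relaxation and closes the gap; making this passage precise, and verifying that the order of the limits $n\to\infty$ then $\varepsilon_2\to 0$ really does deliver $D^{\mathrm{reg}}(\E\|\F)$ rather than a larger Rényi quantity, is the only delicate step. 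Everything else — the state-side AEP, the Fuchs--Van-de-Graaf conversion between $D_H$ and $D_{\max}^\eta$, and the chain rule for smoothed $D_{\max}$ — is at this point either standard or already in hand from the earlier sections of the paper.
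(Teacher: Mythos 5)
Your overall architecture (decouple the state and channel contributions via the smoothed-max chain rule of Lemma~\ref{lemma:BjarneLemma8}, then analyse the two pieces asymptotically) is the right family of ideas and is indeed close in spirit to the paper's proof, but both of your asymptotic steps rest on strong-converse statements that are not available. On the state side, the claim that $\tfrac1n D_{\max}^{\varepsilon_1}(\rho^{\otimes n}\|\sigma^{\otimes n})\to D(\rho\|\sigma)$ follows from \eqref{equ:ImpIneq1}, Stein's lemma and \eqref{equ:ImpIneq4} with \enquote{only $D(\rho\|\sigma)<\infty$ needed} does not hold up: chaining \eqref{equ:ImpIneq1} with \eqref{equ:ImpIneq4} gives $\tfrac1n D_{\max}^{\varepsilon_1}(\rho^{\otimes n}\|\sigma^{\otimes n})\lesssim \varepsilon_1^{-2}\,D(\rho\|\sigma)+o(1)$, and the $\varepsilon_1^{-2}$ prefactor blows up precisely when you send $\varepsilon_1\to0$; removing it is exactly the strong converse of quantum Stein's lemma, which the paper explicitly lists as open in infinite dimensions. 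The AEP that is actually available (Lemma~\ref{lem:InfDImDmaxAEP}) goes through the Petz--R\'enyi continuity bound and requires $c_\gamma(\rho\|\sigma)<\infty$, i.e.\ $D_{1+\gamma}(\rho\|\sigma)<\infty$ for some $\gamma>0$, which --- as the paper's own example of the distributions $p$ and $r$ shows --- is \emph{not} implied by $D(\rho\|\sigma)<\infty$. On the channel side you correctly diagnose the obstacle (the naive R\'enyi estimate degrades to the Belavkin--Staszewski divergence as $\alpha\to1^+$), but the fix you invoke, \enquote{the strong-converse machinery behind \autoref{thm:steins}}, does not exist: \autoref{thm:steins} is proved only as a weak converse, so it cannot control $\tfrac1n D_{\max}^{\varepsilon_2}$ or $\tfrac1n D_H^{\delta}$ for $\delta$ bounded away from $0$.

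The paper closes both gaps with a structurally different limit that avoids strong converses altogether. It applies Lemma~\ref{lemma:BjarneLemma8} to the \emph{block} channel $\E^{\otimes n}$ and then takes $m$ i.i.d.\ copies of the whole block, so the channel piece becomes $\tfrac1{nm}D_{\max}^{\epsilon'}\bigl((\E^{\otimes n}(\nu))^{\otimes m}\|(\F^{\otimes n}(\nu))^{\otimes m}\bigr)$. The AEP is then used only in $m$, on i.i.d.\ copies of fixed output states, which lands exactly on $\tfrac1n D(\E^{\otimes n}(\nu)\|\F^{\otimes n}(\nu))\le \tfrac1n D(\E^{\otimes n}\|\F^{\otimes n})\le D^{\mathrm{reg}}(\E\|\F)$ --- no R\'enyi limit $\alpha\to1^+$ and no strong converse are needed. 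The AEP error term, of order $c_\gamma(\E^{\otimes n}(\nu)\|\F^{\otimes n}(\nu))/\sqrt m\le n\,\widehat c_\gamma(\E\|\F)/\sqrt m$ uniformly in $\nu$ by the chain rule and additivity of the geometric R\'enyi divergence (this is the only place the hypothesis $\Dg(\E\|\F)<\infty$ enters), is then eliminated by choosing $m=n^3$ and letting $n\to\infty$. To salvage your route you would have to either prove the infinite-dimensional strong converses you are implicitly invoking, or reorganize the limits along these two-parameter lines.
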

\begin{remark}
	In the updated version of \cite{fawzi_asymptotic_2023} the authors prove this chain rule under the stronger condition $D_{\max}(\E\|\F) < \infty$, and hence \autoref{thm:InfDimChainRule} can be seen as an improvement of their result. Please see our remarks at the end of this section for a discussion on why this less restrictive finiteness condition can be relevant.
\end{remark}

A key technical ingredient of our argument, and a major contribution of this paper is the following chain rule for the geometric \renyi divergence, which we prove in \autoref{sec:geometric}:
\restate{geometricChainRule}

\subsection{Finiteness condition}\label{sec:finiteness_cond}
As mentioned previously, we are able to show the equivalence of adaptive and parallel discrimination strategies in infinite dimensions only under the condition that $\Dg(\E\|\F) < \infty$ for some $\alpha > 1$, whereas in finite dimensions no such condition was needed.
The reason for this is that in finite dimensions, most relative entropies are infinite at the same time, and only if a support-condition is violated. Specifically,
\begin{equation} \label{eq:finite_dim_divergence_infinite}
	D(\rho\|\sigma) = \infty \Leftrightarrow \mathbb{D}_{\alpha}(\rho\|\sigma) = \infty \; \forall \alpha > 1 \Leftrightarrow D_{\max}(\rho\|\sigma) = \infty \Leftrightarrow \rho \not\ll \sigma
\end{equation}
where $\mathbb{D}_{\alpha}$ is any quantum $\alpha$-\renyi relative entropy (i.e. any function that satisfies the data-processing inequality (DPI) and reduces to the classical $\alpha$-\renyi entropy on commuting states).  Hence, if our finiteness condition $\Dg(\E\|\F) < \infty$ is violated in finite dimensions, also $D(\E\|\F) = \infty$, which then implies that a parallel strategy (even a product strategy that just repeats the same input state) can achieve an asymmetric error exponent of infinity, and hence there cannot be any asymptotic adaptive advantage. 

In infinite dimensions, however, while the divergences in \eqref{eq:finite_dim_divergence_infinite} will also all be infinite if the support condition is violated, this is not the only possibility, and additionally infinity will occur if \enquote{one of the two states decays faster than the other}. 
This appears already classically. As an example, consider the following three (unnormalized) classical probability distributions on $\naturals$: $p=\{p_n\}, q=\{q_n\}$ and $r=\{p_n\}$, with
\begin{equation}
	p_n = \frac{2^{-n}}{n^2} \qquad q_n = \frac{p_n}{n} \qquad r_n = p_n 2^{-2^n}
\end{equation}
We could easily make these three probability distributions normalized by redefining them with a normalization factor, however this does not change the finiteness of the divergences, so we omit the normalization factors to keep the following calculations simpler. We have $p_n, q_n, r_n > 0$ for all $n$, and hence they satisfy the support condition. On the other hand, it is easy to see that $p_n/q_n = n$ is unbounded, and hence $D_{\max}(p\|q) = \infty$, whereas 
\begin{equation}
	D_{\alpha}(p\|q) = \frac{1}{\alpha - 1} \log\sum_n q_n \br{p_n \over q_n}^{\alpha} = \frac{1}{\alpha - 1} \log \sum_n 2^{-n} n^{\alpha - 3}
\end{equation}
is finite for all $\alpha > 1$. Similarly, 
\begin{equation}
	D_{\alpha}(p\|r) = \frac{1}{\alpha - 1} \log\sum_n 2^{(\alpha - 1) 2^{n} - n} n^{- 2}
\end{equation}
is clearly infinite for all $\alpha > 1$, whereas 
\begin{equation}
	D(p\|r) = \sum_n p_n \log\br{p_n \over r_n} = \sum_n {1 \over n^2}(1 - 2^{-n})
\end{equation}
is finite. Roughly speaking, in infinite dimensions the finiteness of these divergences is governed by how the tails of the distributions $p$ and $q$ are related to each other. Finiteness of $D_{\max}$ requires $q_n$ to be at most a constant factor smaller than $p_n$. Assuming, that $p_n$ is such that also the sequence $p'_n = p_n n^{\beta}$ is summable for some $\beta > 0$, then $D_{\alpha}(p\|q)$ will be finite for some $\alpha > 1$ if $q_n$ is at most a polynomial in $n$ smaller than $p_n$ (i.e. $p_n/q_n = \mathcal{O}(n^{\beta/\alpha})$), and $D(p\|q)$ will be finite if $q_n$ is at most exponentially smaller than $p_n$, i.e. $p_n/q_n = \mathcal{O}(2^{n^{\beta}})$. 

\subsubsection{Different conditions for quantum channel discrimination}
Given that in infinite dimensions finiteness conditions involving different relative entropies are no longer equivalent, we would like to demonstrate, in this section and the next, that there are examples of quantum channels (which are interesting in the context of channel discrimination) that satisfy some but not all of the finiteness conditions. Specifically, given two quantum channels $\E$ and $\F$ we will be looking at the the following three finiteness conditions 
\begin{enumerate}[label=(\alph*)]
    \item $D_{\max}(\E\|\F) < \infty$
    \item $\Dg(\E\|\F) < \infty$ for some $\alpha>1$
    \item $D^{\text{reg}}(\E\|\F) < \infty$
\end{enumerate}

In the context of comparing adaptive and parallel strategies, the third condition would be the desired one, as it amounts to imposing no condition at all. This follows from the fact that $D^{\mathrm{reg}} = \infty$ implies that both adaptive and parallel strategies are able to achieve a rate of infinity, and so one then obtains asymptotic equivalence of adaptive and parallel strategies for all channels. 

By choosing replacer channels (i.e. channels that output a specific state independent of the input) outputting classical states corresponding to the probability distributions $p, q$ and $r$ defined above, we see that there exist classical channels for which these three finiteness conditions are not equivalent. However, these channels are not particularly interesting from the perspective of channel discrimination and for studying the relation between adaptive and parallel strategies. This is because for replacer channels, the input state and hence also the chosen strategy is irrelevant already on the one-shot level. Additionally, for classical channels also on continuous systems (which form the classical analogue of an infite dimensional quantum systems) it is known that adaptivity does not give any asymptotic advantage, without requiring any finiteness conditions \cite{hayashi_discrimination_2009}.

Hence, in the next subsection, we give examples of fully quantum channels that are not replacer channels for which the three finiteness conditions are not equivalent. This illustrates first, that there are interesting channels satisfying only condition (b) but not (a), and hence being able to establish the equivalence of adaptive and parallel strategies under condition (b) is significant. Additionally, we find channels for which only condition (c) holds and for which we are currently unable to show the equivalence. 

We %would 
conjecture that, in fact, %in the end 
no condition %should be 
is necessary also in infinite dimensions, and the asymptotic equivalence of adaptive and parallel strategies does hold for all channels also in infinite dimensions. However, we currently do not have the necessary tools to prove this conjecture. We would also like to highlight that we believe the condition (a) to be very restrictive in practice. Specifically, with this condition the data-processing inequality implies that in \emph{all bases} the diagonal elements of $\E(\nu)$ and $\F(\nu)$ have to decay while differing at most by a constant factor, and this too for all input states $\nu$. For channels which do not just output states on some finite-dimensional subspace we expect this to be the case only for very specific examples.
 
\subsubsection{Fully quantum examples}

For any state $\tau \in \DM[\HS_A]$ and any $\lambda \in [0,1]$, define the generalized depolarizing channel $\Lambda^\lambda_\tau: \P(\HS_A) \to \P(\HS_A)$ as
\begin{equation}
    \Lambda^\lambda_\tau(\omega) \coloneqq (1-\lambda) \omega + \lambda \Tr(\omega) \tau \,.
\end{equation}

\begin{lemma}\label{lem:generalized_depolarizing_upper_bound}
    For any two states $\rho, \sigma \in \DM[\HS_A]$, any $\lambda \in [0,1]$ and any quantum divergence $\D$ (i.e. any function of two positive trace-class operators that satisfies the data-processing inequality) and any $n \in \naturals$, it holds that
    \begin{equation}
        \D((\Lambda^\lambda_\rho)^{\otimes n}\|(\Lambda^\lambda_\sigma)^{\otimes n}) \leq \D(\rho^{\otimes n}\|\sigma^{\otimes n})
    \end{equation}
    where the channel divergence on the left-hand side is defined as in \autoref{sec:quantum_channel_divergences}.
\end{lemma}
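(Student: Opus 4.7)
The plan is to exploit the observation that the generalized depolarizing channel $\Lambda^\lambda_\tau$ admits a simple simulation by a fixed, $\tau$-independent CPTP map applied to the input together with a fresh ancilla prepared in $\tau$. Introducing an auxiliary copy $\HS_B \cong \HS_A$, define $\mathcal{N} : \P(\HS_A \otimes \HS_B) \to \P(\HS_A)$ by
$$
  \mathcal{N}(\xi_{AB}) \;\coloneqq\; (1-\lambda)\Tr_B(\xi_{AB}) + \lambda \,\Tr_A(\xi_{AB}),
$$
where in the second term the output state on $\HS_B$ is canonically identified with a state on $\HS_A$. As a convex combination of partial traces composed with a fixed isomorphism, $\mathcal{N}$ is manifestly CPTP, and a direct computation shows that for any state $\psi_{RA} \in \DM$ and any $\tau \in \DM[\HS_A]$,
$$
  (\id_R \otimes \mathcal{N})(\psi_{RA} \otimes \tau_B) \;=\; (1-\lambda)\psi_{RA} + \lambda \,\psi_R \otimes \tau \;=\; (\id_R \otimes \Lambda^\lambda_\tau)(\psi_{RA}).
$$

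Iterating this simulation across $n$ independent copies, with fresh ancillas $B_1, \ldots, B_n$ each in $\tau$, one obtains for every $\psi_{RA^n} \in \DM$
$$
  (\id_R \otimes (\Lambda^\lambda_\tau)^{\otimes n})(\psi_{RA^n}) \;=\; (\id_R \otimes \mathcal{N}^{\otimes n})\bigl(\psi_{RA^n} \otimes \tau^{\otimes n}_{B^n}\bigr)
$$
(with a trivial reordering of the $A$- and $B$-factors understood). Applying this identity with $\tau = \rho$ and $\tau = \sigma$ and then invoking the DPI of $\D$ for the fixed CPTP map $\id_R \otimes \mathcal{N}^{\otimes n}$ yields
$$
  \D\bigl((\id_R \otimes (\Lambda^\lambda_\rho)^{\otimes n})(\psi) \,\big\|\, (\id_R \otimes (\Lambda^\lambda_\sigma)^{\otimes n})(\psi)\bigr) \;\leq\; \D\bigl(\psi \otimes \rho^{\otimes n} \,\big\|\, \psi \otimes \sigma^{\otimes n}\bigr).
$$

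To finish, I will show that the spectator factor $\psi$ is inert, i.e.\ $\D(\psi \otimes \rho^{\otimes n} \| \psi \otimes \sigma^{\otimes n}) = \D(\rho^{\otimes n} \| \sigma^{\otimes n})$. The inequality ``$\leq$'' follows by applying DPI to the CPTP preparation channel $\omega \mapsto \psi \otimes \omega$, while the converse ``$\geq$'' follows by applying DPI to the partial trace that discards the $\psi$-registers. Taking the supremum over all input states $\psi_{RA^n}$ on the left-hand side and using the definition of the stabilised channel divergence then gives the claimed bound.

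The argument presents no substantive obstacle: once one spots the representation $\Lambda^\lambda_\tau(\cdot) = \mathcal{N}(\cdot \otimes \tau)$ with $\mathcal{N}$ independent of $\tau$, the lemma reduces to two clean applications of the data-processing inequality that $\D$ is assumed to satisfy. The only point worth noting is that DPI is used across several separable Hilbert spaces (including $\HS_R \otimes \HS_A^{\otimes n} \otimes \HS_B^{\otimes n}$ of unbounded dimension), which is exactly what the definition of a generalised quantum divergence in this paper affords.
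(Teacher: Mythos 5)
Your proof is correct. It rests on the same underlying idea as the paper's proof --- realise $(\id_R \otimes (\Lambda^\lambda_\tau)^{\otimes n})(\psi)$ as a single CPTP map, independent of whether $\tau = \rho$ or $\tau = \sigma$, applied to an input that carries the $\tau$-dependence as $\tau^{\otimes n}$, and then invoke the DPI --- but the concrete construction differs. The paper absorbs the reference state $\omega_{R^nA^n}$ into a channel $\mathcal{M}: A^n \to R^nA^n$ written as an explicit sum over the $2^n$ subsets $S \subset \{1,\dots,n\}$ with binomial weights $\lambda^{|S|}(1-\lambda)^{|S^c|}$, so that $\mathcal{M}(\rho^{\otimes n})$ is exactly the channel output and DPI applies in one shot to the pair $(\rho^{\otimes n}, \sigma^{\otimes n})$. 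You instead simulate each single copy by the $\tau$-independent map $\mathcal{N}(\xi_{AB}) = (1-\lambda)\Tr_B(\xi) + \lambda \Tr_A(\xi)$ acting on the input together with a fresh ancilla in state $\tau$, and tensorise; this avoids the subset expansion entirely and makes the $n$-copy case an immediate corollary of the $n=1$ case, at the cost of the (easy, and correctly handled) extra step that the spectator factor $\psi$ satisfies $\D(\psi \otimes \rho^{\otimes n}\|\psi \otimes \sigma^{\otimes n}) = \D(\rho^{\otimes n}\|\sigma^{\otimes n})$ --- a step the paper also implicitly uses in its first line for $n=1$. Both arguments use only the DPI for the assumed generalised divergence across separable Hilbert spaces, so neither requires anything beyond the paper's standing assumptions; your version is arguably the cleaner of the two for $n > 1$.
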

\begin{proof}
    For the sake of illustration, consider first the case $n = 1$. For any normalized density matrix $\omega = \omega_{RA}$ we then have
    \begin{align}
        \D(\rho\|\sigma) &= \D(\omega_R \otimes \rho\|\omega_R \otimes \sigma) \\
        &\geq \D(\Lambda^{1-\lambda}_{\omega_{RA}}(\omega_R \otimes \rho)\|\Lambda^{1-\lambda}_{\omega_{RA}}(\omega_R \otimes \sigma)) \\
        &= \D(\lambda (\omega_R \otimes \rho)  + (1-\lambda) \omega_{RA}\|\lambda (\omega_R \otimes \sigma)  + (1-\lambda) \omega_{RA}) \\
        &= \D((\id_R \otimes \Lambda^\lambda_\rho)(\omega_{RA})\|(\id_R \otimes \Lambda^\lambda_\sigma)(\omega_{RA})),
    \end{align}
    where the first line also follows from the data-processing inequality, by using channels that add or trace out the additional state $\omega_{R}$. 
    As this holds for all states $\omega_{RA}$, it then also holds for the supremum over all states.
    
    For $n > 1$, the argument is essentially the same. For a subset $S \subset \{1, 2, ..., n\}$ and a state $\tau_{A^n} = \tau_{A_1 ... A_n}$ we write $\tau_{A_S}$ for $\Tr_{A_{S^c}}(\tau_{A^n})$, where we trace out all the systems $A_i$ whose index is not in $S$, and $S^c$ is the complement of $S$. Given any state $\omega_{R^nA^n}$, we can then define a channel $\mathcal{M}: A^n \to R^n A^n$ via
    \begin{equation}
        \mathcal{M}(\tau_{A^n}) \coloneqq \sum_{S \subset \{1, ..., n\}} \lambda^{|S|}(1 - \lambda)^{|S^c|} \tau_{A_S} \otimes \omega_{R_S} \otimes \omega_{R_{S^c}A_{S^c}}\,,
    \end{equation}
    with the idea being that 
    \begin{align}
        (\id_R \otimes \Lambda^\lambda_{\rho})^{\otimes n}(\omega_{R^n A^n}) &= \mathcal{M}(\rho^{\otimes n}) \\
        (\id_R \otimes \Lambda^\lambda_{\sigma})^{\otimes n}(\omega_{R^n A^n}) &= \mathcal{M}(\sigma^{\otimes n}) 
    \end{align}
    and hence the claim follows again from an application of the data-processing inequality:
    \begin{equation}
        \D(\rho^{\otimes n}\|\sigma^{\otimes n}) \geq \D(\mathcal{M}(\rho^{\otimes n})\|\mathcal{M}(\sigma^{\otimes n})) = \D((\id_R \otimes \Lambda^\lambda_{\rho})^{\otimes n}(\omega_{R^n A^n})\|(\id_R \otimes \Lambda^\lambda_{\sigma})^{\otimes n}(\omega_{R^n A^n}))\,.
    \end{equation}
\end{proof}
Note specifically that for any divergence $\D$ that is additive on states (e.g. $\D = D$), this implies 
\begin{equation}
    \D^{\mathrm{reg}}(\Lambda^\lambda_\rho\|\Lambda^\lambda_\sigma) \leq \D(\rho\|\sigma)\,.
\end{equation}

To continue constructing our examples, given any orthonormal basis $\{\ket{a_i}\}_{i = 0}^{\infty}$ of a Hilbert space, we construct the basis $\{\ket{b_i}\}_{i = 0}^{\infty}$ as
\begin{equation}
    \ket{b_i} \coloneqq \begin{cases}{1 \over \sqrt{2}}(\ket{a_i} + \ket{a_{i + 1}}) & i \text{ even } \\ {1 \over \sqrt{2}}(\ket{a_{i - 1}} - \ket{a_i}) & i \text{ odd\,. }\end{cases}
\end{equation}

For any classical probability distribution $p \in \ell_1$ we then define the states
\begin{equation}
    \rho_p \coloneqq \sum_i p_i \ketbra{a_i}{a_i} \qquad \sigma_p \coloneqq \sum_i p_i \ketbra{b_i}{b_i}\,.
\end{equation}

\begin{lemma}
Let $\D$ be one of $D, \Dg$ (with $\alpha \in (1,2]$) or $D_{\max}$, and $p, q \in \ell_1$ be normalized probability distributions. Then, for all $\lambda \in (0, 1]$
\begin{equation}
     \D(\rho_p\|\sigma_q) = \infty \Rightarrow \D(\Lambda^\lambda_{\rho_p}\|\Lambda^\lambda_{\sigma_q}) = \infty\,.
\end{equation}
\end{lemma}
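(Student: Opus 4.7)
The plan is to exploit that both $\rho_p$ and $\sigma_q$ are block-diagonal in the pair decomposition $\HS_A = \bigoplus_{k\geq 0} V_k$, where $V_k := \mathrm{span}\{|a_{2k}\rangle, |a_{2k+1}\rangle\}$: $\rho_p$ because it is diagonal in $\{|a_i\rangle\}$, and $\sigma_q$ because each $|b_i\rangle$ is supported entirely in one pair. Accordingly $\rho_p = \bigoplus_k \rho_p^{(k)}$ and $\sigma_q = \bigoplus_k \sigma_q^{(k)}$ with $\rho_p^{(k)}, \sigma_q^{(k)}$ acting on the two-dimensional $V_k$. Since the channel divergence is a supremum over input states, it suffices in each case to exhibit a single input state $\nu$ on $\HS_A$ for which $\D(\Lambda^\lambda_{\rho_p}(\nu)\|\Lambda^\lambda_{\sigma_q}(\nu)) = \infty$.

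I would split into two cases. If $\mathrm{supp}(\rho_p) \not\subset \mathrm{supp}(\sigma_q)$, I would take $\nu = \sigma_q$, giving $\Lambda^\lambda_{\sigma_q}(\sigma_q) = \sigma_q$ and $\Lambda^\lambda_{\rho_p}(\sigma_q) = (1-\lambda)\sigma_q + \lambda\rho_p$. The support of the latter strictly contains $\mathrm{supp}(\sigma_q)$, so all three divergences of these output states equal $\infty$ by the standard support-failure convention.

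In the remaining case $\mathrm{supp}(\rho_p) \subset \mathrm{supp}(\sigma_q)$ with $\D(\rho_p\|\sigma_q) = \infty$, I would take $\nu = \kb{a_0}$, so that
\[
    X := \Lambda^\lambda_{\rho_p}(\kb{a_0}) = \bigoplus_k X_k, \qquad Y := \Lambda^\lambda_{\sigma_q}(\kb{a_0}) = \bigoplus_k Y_k,
\]
with $X_k = \lambda\rho_p^{(k)}$ and $Y_k = \lambda\sigma_q^{(k)}$ for all $k \geq 1$. For each of the three divergences the block-diagonal structure gives the decomposition $D_{\max}(X\|Y) = \sup_k D_{\max}(X_k\|Y_k)$, $D(X\|Y) = \sum_k \tilde D(X_k\|Y_k)$, and $\widehat{S}_\alpha(X\|Y) = \sum_k \widehat{S}_\alpha(X_k\|Y_k)$, where $\tilde D$ denotes the natural extension of the Umegaki relative entropy to sub-normalized positive operators. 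Block $0$ is two-dimensional and has matching supports on both sides, contributing a finite quantity. For $k \geq 1$ positive scaling by $\lambda$ pulls out ($D_{\max}$ is scale-invariant, $\tilde D$ and $\widehat{S}_\alpha$ scale linearly), so the tail (summed or supremized over $k \geq 1$) equals, up to a positive $\lambda$-dependent factor, the same tail of $\D(\rho_p^{(k)}\|\sigma_q^{(k)})$. Since $\D(\rho_p\|\sigma_q)$ admits the same block decomposition and equals $\infty$ while block $0$ is finite, the tail must be infinite; hence $\D(X\|Y) = \infty$.

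In both cases an input witnessing infinite output divergence has been exhibited, proving $\D(\Lambda^\lambda_{\rho_p}\|\Lambda^\lambda_{\sigma_q}) = \infty$.

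The hardest part will be justifying the block-diagonal decomposition of $D$ and $\Dg$ rigorously in a separable Hilbert space. For $D$ this follows from the standard-$f$-divergence representation~\eqref{def:Standard-f-divergence} together with the fact that the relative modular operator of a pair of block-diagonal states factorises over the blocks. For $\Dg$ in the operator-convex range $\alpha \in (1,2]$ one uses the operator-theoretic identification $\widehat{S}_\alpha(\rho\|\sigma) = \Tr[\sigma\,(\sigma^{-1/2}\rho\sigma^{-1/2})^\alpha]$, rigorously defined on $\mathrm{supp}(\sigma)$ via the relative modular operator formalism of Section~\ref{sec:rel_mod_op} (following~\cite{hiai_quantum_2019}), where both the operator inside the $\alpha$-th power and the trace pairing respect the block decomposition.
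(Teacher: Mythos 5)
Your proposal is correct, but it takes a genuinely different route from the paper's proof, which treats the three divergences non-uniformly: for $D_{\max}$ the paper feeds in $\nu=\sigma_q$ and uses operator monotonicity in the first argument; for $D$ it uses the same input together with almost-concavity of $D$ in the first argument (\autoref{lem:entropy_almost_concave}), and in both of these cases it never uses the block structure of $\rho_p,\sigma_q$ at all; only for $\Dg$ (where $t\mapsto t^{\alpha}$ fails to be operator monotone) does it resort to a structural argument, feeding in $\kb{b_j}$ for some $j$ with $q_j>0$, pinching onto the single two-dimensional block containing $j$ versus its complement, and running a dichotomy (either the complement contributes $\infty$, or the finite block forces $\rho_p\not\ll\sigma_q$ and hence a support failure of the outputs). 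Your argument instead handles all three divergences uniformly: after disposing of the support-failure case with $\nu=\sigma_q$, you feed in $\kb{a_0}$ and use the full countable pair-block decomposition, the scaling identities, and finiteness of the (two-dimensional, absolutely continuous) block $0$ to localize the infinity in the tail. This buys uniformity and arguably more transparency, at the cost of needing the countable direct-sum additivity of $D$ and $\Sg$ in infinite dimensions — for $\Sg$ this is exactly \autoref{lem:generalized_direct_sum_decomp}, and for $D$ it follows from Lindblad's double-sum formula (quoted in the paper's footnote and in \autoref{lem:entropy_almost_concave}) since eigenbases of block-diagonal operators can be chosen block by block; you correctly identify this as the step requiring care, and your proposed route through the relative modular operator also works. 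One could also shorten your $D$ and $D_{\max}$ cases to a two-block (block $0$ versus complement) decomposition, closer in spirit to the paper's $\Dg$ argument, which avoids the infinite sum entirely.
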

\begin{proof}
    Throughout this proof we will write $\rho \coloneqq \rho_p$ and $\sigma \coloneqq \sigma_q$ for simplicity. Let us start with just chosing the reference system of the input state as trivial and picking the input state $\omega = \sigma$. Then
    \begin{equation}
        \D(\Lambda^\lambda_{\rho}(\omega)\|\Lambda^\lambda_{\sigma}(\omega)) = \D((1 - \lambda)\sigma + \lambda \rho\|\sigma)
    \end{equation}
    If $\D = D_{\max}$ we can use the monotonicity of $D_{\max}$ in the first variable to conclude that 
    \begin{equation}
        \D((1 - \lambda)\sigma + \lambda \rho\|\sigma) \geq \D(\lambda \rho \| \sigma) = \infty\,.
    \end{equation}
    If $\D = D$, we can use the almost-concavity of $D$ in the first argument (we provide a proof in \autoref{lem:entropy_almost_concave}) to find
    \begin{equation}
        \D((1 - \lambda)\sigma + \lambda \rho\|\sigma) \geq (1 - \lambda)\D(\sigma\|\sigma) + \lambda \D(\rho\|\sigma) - h(\lambda) = \infty
    \end{equation}
    where $h$ is the binary entropy. 
    \newcommand{\M}{\mathcal{M}}
Note that so far we have not used any of the structure of the states $\rho$ and $\sigma$. \\
For $\D = \Dg$ we could use a similar monotonicity argument for $\alpha \in [0,1]$, but for $\alpha > 1$, the function $t \mapsto t^{\alpha}$ is unfortunately not operator monotone, so we need a different argument, which is where the assumptions on the states come in. Let $j$ be such that $q_j > 0$ and then pick the state $\omega = \ketbra{b_j}{b_j}$. Let also $k_1, k_2$ be the odd and even index of the block associated to $j$, i.e. $k_1 = 2\left\lfloor {j \over 2}\right\rfloor$, $k_2 =2\left\lfloor {j \over 2}\right\rfloor + 1$. Consider the subspace of our Hilbert space spanned by $\{\ket{a_i}\}_{i=k_1, k_2}$, and its orthogonal complement (this is the same as the span of $\{\ket{b_i}\}_{i=k_1, k_2}$ and its orthogonal complement), and let $\M$ be the POVM measurement channel between these two subspaces, i.e. $\M(\nu) = \sum_{i = 1,2} \Pi_i\nu\Pi_i$, where the $\Pi_1$ projects onto the subspace, and $\Pi_2$ projects onto its orthogonal complement. Then, 
    \begin{align}
        \D(\Lambda^\lambda_\rho(\omega)\|\Lambda^\lambda_\sigma(\omega)) &\geq \D(\M \circ\Lambda^\lambda_\rho(\omega)\|\M \circ \Lambda^\lambda_\sigma(\omega)) \\ &\geq \D(\Pi_2\Lambda^\lambda_\rho(\omega)\Pi_2\|\Pi_2\Lambda^\lambda_\sigma(\omega)\Pi_2)\\ &= \D(\lambda \Pi_2\rho\Pi_2\|\lambda\Pi_2 \sigma\Pi_2),\label{eq:example_geom_infinite_last_line}
    \end{align}
    where we used the data-processing inequality,  \autoref{lem:generalized_direct_sum_decomp}, and the fact that $\omega$ lies in the kernel of $\Pi_2$. Since $\D(\rho\|\sigma)$ is infinite and $\rho$ and $\sigma$ are block-diagonal in the decomposition, either the term in \eqref{eq:example_geom_infinite_last_line} has to be infinite (which proves our statement), or $\D(\Pi_1\rho\Pi_1\|\Pi_1\sigma\Pi_1)$ is infinite, which (since the subspace is finite dimensional) implies that $\Pi_1\rho\Pi_1 \not\ll \Pi_1\sigma\Pi_1$, which implies $\rho \not \ll \sigma$. Since $q_j > 0$, we have $\omega \ll \sigma$, and hence this also implies $\Lambda^\lambda_\rho(\omega) \not \ll \Lambda^\lambda_\sigma(\omega)$ and so also the channel divergence has to be infinite.
\end{proof}

\begin{lemma}
Let $\D$ be one of $D, \Dg$ (with $\alpha \in (1,2]$) or $D_{\max}$. Let $p, q \in \ell_1$ be positive, and define the following variants of q, which take the minimum/maximum over a block of two indices:
\begin{equation}
    q^{\uparrow}_i \coloneqq \max\{q_{2\lfloor{i \over 2}\rfloor}, q_{2\lfloor{i \over 2}\rfloor +1}\} \qquad
    q^{\downarrow}_i \coloneqq \min\{q_{2\lfloor{i \over 2}\rfloor}, q_{2\lfloor{i \over 2}\rfloor +1}\}\,.
\end{equation}
Then,
\begin{equation}\label{eq:example_up_downarrow}
\D(p\|q^{\uparrow}) \leq \D(\rho_p\|\sigma_q) \leq \D(p\|q^{\downarrow})\,.
\end{equation}
\end{lemma}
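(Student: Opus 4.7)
The plan is to exploit the fact that both $\rho_p$ and $\sigma_q$ decompose as orthogonal direct sums of two-dimensional blocks. Setting $\HS_k := \mathrm{span}\{\ket{a_{2k}}, \ket{a_{2k+1}}\} = \mathrm{span}\{\ket{b_{2k}}, \ket{b_{2k+1}}\}$ we obtain $\HS = \bigoplus_k \HS_k$, $\rho_p = \bigoplus_k \rho_k$ and $\sigma_q = \bigoplus_k \sigma_k$, with $\rho_k = p_{2k}\kb{a_{2k}} + p_{2k+1}\kb{a_{2k+1}}$ and $\sigma_k = q_{2k}\kb{b_{2k}} + q_{2k+1}\kb{b_{2k+1}}$. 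Because the sequences $q^{\uparrow}$ and $q^{\downarrow}$ are constant on each block, the classical divergences on both sides of \eqref{eq:example_up_downarrow} split the same way. Since each of the three divergences combines orthogonal-block contributions monotonically (additively for $D$, as $\tfrac{1}{\alpha-1}$ times the logarithm of $\sum_k \Sg$ for $\Dg$, and as the supremum for $D_{\max}$), it suffices to prove both inequalities on a single two-dimensional block.

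The key auxiliary fact is monotonicity of each $\D$ in its second argument: if $0 \leq \sigma \leq \sigma'$ then $\D(\rho\|\sigma') \leq \D(\rho\|\sigma)$. For $D_{\max}$ this is immediate from its definition, for $D$ it follows from operator monotonicity of $\log$, and for $\Dg$ it can be extracted from DPI together with direct-sum additivity by considering $\rho \oplus 0$ and $\sigma \oplus (\sigma' - \sigma)$ on $\HS \oplus \HS$ and applying the partial trace over the auxiliary two-valued index. Equipped with this, the lower bound follows from pinching: let $\Pi_k(\omega) := \kb{a_{2k}} \omega \kb{a_{2k}} + \kb{a_{2k+1}} \omega \kb{a_{2k+1}}$. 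Since $\rho_k$ is already $a$-diagonal, $\Pi_k(\rho_k) = \rho_k$, and because $|\braket{a_i | b_j}|^2 = \tfrac{1}{2}$ for all $i, j \in \{2k, 2k+1\}$, a direct computation gives $\Pi_k(\sigma_k) = \bar{q}_k \, \1_{\HS_k}$ with $\bar{q}_k := \tfrac{1}{2}(q_{2k} + q_{2k+1})$. The DPI then yields $\D(\rho_k\|\sigma_k) \geq \D(p_k \,\|\, \bar{q}_k \1_{\HS_k})$, and the second-argument monotonicity together with $\bar{q}_k \leq \max(q_{2k}, q_{2k+1}) = q^{\uparrow}_{2k}$ upgrades the right-hand side to $\D(p_k\|q^{\uparrow}_k)$ (which is read off as a classical divergence since both operators are diagonal in the $a$-basis).

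For the upper bound, observe that the spectrum of $\sigma_k$ is $\{q_{2k}, q_{2k+1}\}$, so $\sigma_k \geq \min(q_{2k}, q_{2k+1}) \1_{\HS_k} = q^{\downarrow}_{2k} \1_{\HS_k}$ in the operator order. Monotonicity then gives $\D(\rho_k\|\sigma_k) \leq \D(\rho_k \,\|\, q^{\downarrow}_{2k} \1_{\HS_k})$, and because both operators on the right are diagonal in the $a$-basis this quantum divergence equals the classical $\D(p_k\|q^{\downarrow}_k)$. Aggregating over $k$ closes the argument. The only mildly non-routine step is the second-argument monotonicity of $\Dg$: as $x \mapsto x^{\alpha}$ is not operator monotone for $\alpha > 1$ no direct formula manipulation is available, so one genuinely has to route through the DPI / direct-sum auxiliary construction sketched above; this is the main (minor) technical obstacle of the proof.
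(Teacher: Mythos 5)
Your proof is correct, and it rests on the same key lemma as the paper's: anti-monotonicity of each divergence in its second argument. The paper's argument is shorter, though: it never decomposes into blocks and never pinches. It simply observes that $\sigma_{q^\downarrow} \leq \sigma_q \leq \sigma_{q^\uparrow}$ as operators on the whole space, that $\sigma_{q^\uparrow} = \rho_{q^\uparrow}$ and $\sigma_{q^\downarrow} = \rho_{q^\downarrow}$ (since $q^{\uparrow/\downarrow}$ is constant on each two-dimensional block, these operators are multiples of the identity there and hence diagonal in \emph{both} bases), and then applies anti-monotonicity once in each direction; the resulting divergences are between commuting operators and so are classical. Your lower bound instead routes through the pinching channel, which lands you on the block average $\bar q_k$ and forces an extra comparison $\bar q_k \leq q^{\uparrow}_{2k}$ --- this works but is a detour, since $\sigma_q \leq \sigma_{q^\uparrow} = \rho_{q^\uparrow}$ already does the job without DPI. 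Your block-by-block aggregation is also avoidable (and for $D$ it quietly relies on the summands being comparable termwise in possibly non-absolutely-convergent sums, which the global argument sidesteps). On the other hand, your derivation of the anti-monotonicity of $\Dg$ for $\alpha>1$ --- embedding $\rho \oplus 0$ and $\sigma \oplus (\sigma'-\sigma)$, using $\Sg(0\|\tau)=0$ and direct-sum additivity, then tracing out the two-valued index --- is a genuinely different and arguably cleaner argument than the paper's Lemma on this point, which goes through the identity $\Sg(\rho\|\sigma)=\widehat S_{1-\alpha}(\sigma\|\rho)$, operator anti-monotonicity of $t\mapsto t^{p}$ for $p\in[-1,0]$, and an $\varepsilon$-regularization limit.
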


\begin{proof}
    All the three divergences we consider are anti-monotonous in the second variable. For $D_{\max}$ this is obvious from the definition, for $D$ this is shown in \cite[Theorem 4.1]{hiai_quantum_2019}, and for $\Dg$ we show it in \autoref{lem:geometric_anti_monotonicity}. It is easy to see that $\sigma_{q^\downarrow} \leq \sigma_q \leq \sigma_{q^\uparrow}$, but $\sigma_{q^\downarrow} = \rho_{q^\downarrow}$, and $\sigma_{q^\uparrow} = \rho_{q^\uparrow}$, which implies the desired statement. 
\end{proof}

The probability distributions $p, q, r$ defined in the previous section are such that the three terms in \eqref{eq:example_up_downarrow} are all either finite or infinite at the same time. Hence, all these lemmas together imply that for $\D$ one of $D, \Dg$ (with $\alpha \in (1,2]$) or $D_{\max}$, $\lambda \in (0,1]$ and $p, q$ one of these three probability distributions it holds that
\begin{equation}
    \D(\Lambda^\lambda_{\rho_p}\|\Lambda^\lambda_{\sigma_q}) = \infty \Leftrightarrow  \D(p\|q) = \infty\,.
\end{equation}

Additionally, since $D \leq D^{\mathrm{reg}}$ and the upper bound in \autoref{lem:generalized_depolarizing_upper_bound} also includes the statement for tensor products of the channels, we get
\begin{equation}
    D^{\mathrm{reg}}(\Lambda^\lambda_{\rho_p}\|\Lambda^\lambda_{\sigma_q}) = \infty \Leftrightarrow  D(p\|q) = \infty\,.
\end{equation}

Specifically, for $\lambda \in (0,1]$ and
\begin{equation}
    p_n = {2^{-n}n^{-2} \over \sum_{k} 2^{-k}k^{-2}} \qquad q_n = {2^{-n}n^{-3} \over \sum_{k} 2^{-k}k^{-3}}
\end{equation}
we have $\Dg(\Lambda^\lambda_{\rho_p}\|\Lambda^\lambda_{\sigma_q}) < \infty$ for all $\alpha \in (1, 2]$, while $D_{\max}(\Lambda^\lambda_{\rho_p}\|\Lambda^\lambda_{\sigma_q}) = \infty$. 
Similarly, for 
\begin{equation}
    r_n = {2^{-n}2^{-2^n}n^{-2} \over \sum_k 2^{-k}2^{-2^k}k^{-2}}
\end{equation}
we have $D^{\mathrm{reg}}(\Lambda^\lambda_{\rho_p}\|\Lambda^\lambda_{\sigma_r}) < \infty$, while $\Dg(\Lambda^\lambda_{\rho_p}\|\Lambda^\lambda_{\sigma_r}) = \infty$ for all $\alpha \in (1,2]$ (and hence also for all $\alpha > 1$, as the geometric \renyi divergence is easily seen to be increasing in $\alpha$) and also $D_{\max}(\Lambda^\lambda_{\rho_p}\|\Lambda^\lambda_{\sigma_q}) = \infty$.

\section{Proofs of Main Results}
In this section we establish the necessary tools and then prove our infinite dimensional results. We start with a section on the geometric \renyi divergence in infinite dimensions, where we show that it satisfies a chain rule, similarly to what has previously been proven in finite dimensions. Secondly, we extend some inequalities involving quantum divergences to separable Hilbert spaces in \autoref{sec:InfExtensions}. These may be of independent interest. In \autoref{sec:InfQSL} we prove quantum Stein's lemma for in separable Hilbert spaces. and the one-shot result \autoref{thm:main} in \autoref{sec:ProofofMain}.
\subsection{The Geometric \renyi divergence in infinite dimensions}
\label{sec:geometric}
% Notation used in the proofs
\newcommand{\hb}{\bar{h}_n}
\newcommand{\gb}{\bar{g}_n}
\newcommand{\Sf}{\widehat{S}_f}

Let us recall the definition of the geometric \renyi divergence as stated in the mathematical preliminaries section:
For $\rho, \sigma \in \P(\HS)$, we say that $(\Gamma, g, h, \mu)$ is a reverse test of $(\rho, \sigma)$, if $(X, \mu)$ is a $\sigma$-finite measure space, $g, h \in L^1(X, \mu)$ with $g, h \geq 0$, and $\Gamma$ is a positive trace-preserving map $\Gamma: L^1(X, \mu) \to \cB_1(\HS)$ such that $\Gamma(g) = \rho$, $\Gamma(h) = \sigma$. By $\Gamma$ being positive, we mean that $\Gamma(f) \geq 0$ for any $f \geq0$, and by $\Gamma$ being trace-preserving we mean that $\Tr(\Gamma(f)) = \int f d\mu$. 

\begin{definition}[\cite{hiai_quantum_2019}]
	In infinite dimensions, for $\rho, \sigma \in \P(\HS)$ and any $\alpha \in [0, \infty)$, the geometric \renyi trace function can be defined via the following optimization problem
	\begin{equation}
		\widehat{S}_{\alpha}(\rho\|\sigma) \coloneqq \min_{\Gamma, g, h, \mu}\{ S_{\alpha}^{\mu}(g\|h)\; |\; (\Gamma, g, h, \mu) \text{ is a reverse test of } (\rho, \sigma)\}
	\end{equation}
	where 
	\begin{equation}
		S_{\alpha}^{\mu}(g\|h) = \int h \br{g \over h}^\alpha d \mu\,,
	\end{equation}
	The geometric \renyi divergence is then defined as
	\begin{equation}
		\widehat{D}_{\alpha}(\rho\|\sigma) \coloneqq \frac{1}{\alpha - 1} \log \widehat{S}_\alpha(\rho\|\sigma)\,.
	\end{equation}
\end{definition}

This can be seen as a special case of what is called a maximal $f$-divergence for $f(\lambda) = \lambda^{\alpha}$. A lot is known about these divergences when $f$ is operator convex (e.g.~if $\alpha \in [1,2]$), in which case the solution to the optimization problem can be explicitly characterized \cite{hiai_quantum_2019}. 

Our main theorem of this section is the following:

\begin{restatethis}{theorem}{geometricChainRule}\label{thm:geometric_chain_rule}
	For all $\alpha \in [0,1) \cup (1,2]$, the geometric \renyi divergence satsifies the chain rule, i.e. for all states $\rho, \sigma \in \DM$ and any two channels $\E, \F : \P(\HS) \to \P(\mathcal{K})$ we have that
	\begin{equation}
		\Dg(\E(\rho)\|\F(\sigma)) \leq \Dg(\rho\|\sigma)+ \Dg(\E\|\F)\,.
	\end{equation}
	This also directly implies the addivity of the geometric \renyi channel divergence, i.e. for all channels $\E_1, \F_1: \P(\HS_1) \to \P(\mathcal{K}_1)$, $\E_2, \F_2: \P(\HS_{2}) \to \P(\mathcal{K}_2)$  it holds that
	\begin{equation}
	    \Dg(\E_1 \otimes \E_2 \| \F_1 \otimes \F_2) = \Dg(\E_1\|\F_1) + \Dg(\E_2\|\F_2).
	\end{equation}
\end{restatethis} 
We show this by proving that at least for some restricted set of states, the optimization in the reverse tests can be restricted to probability distributions on a countable set, rather than continuous $L^1$ functions (although the optimum might not necessarily be achieved anymore in this case), which then allows us to adapt the finite dimensional chain rule proof of \cite{berta_chain_2022} to show the desired result. To simplify the argument also in the case where $\alpha \in [0,1),$ we show this first step in the slightly more general setting of maximal $f$-divergences for an operator convex function $f$ defined as follows:
\begin{equation}
	\widehat{S}_f(\rho\|\sigma) \coloneqq \min_{\Gamma, g, h, \mu}\{ S_{f}^{\mu}(g\|h)\; |\; (\Gamma, g, h, \mu) \text{ is a reverse test of } (\rho, \sigma)\}
\end{equation}
where
\begin{equation}
	S_{f}^{\mu}(g\|h) = \int h f\br{g \over h} d \mu\,,
\end{equation}

Furthermore, for $\rho, \sigma \in \P(\HS)$, we say ($\Gamma$, $p$, $q$) is a \emph{discrete} reverse test for ($\rho$, $\sigma$), if $p, q \in \ell_1$, $p, q \geq 0$, and $\Gamma: \ell_1 \to \cB(\HS)$ is a linear, positive, and trace-preserving (in the sense of $\Tr(\Gamma(r)) = \sum_i r_i$ for all $r \in \ell_1$) map such that $\Gamma(p) = \rho$, $\Gamma(q) = \sigma$. 

\begin{lemma}\label{lem:discrete_reverse_tests}
	Let $\rho, \sigma \in \DM$ be such that $\exists c \geq 0 $ s.t. $\rho \leq c \sigma \leq c^2 \rho$. Let $f$ be a continuous operator convex function on $[0, \infty)$ for which there exists a finite function $\tilde{f}$, such that $\abs{f(\alpha x)} \leq \tilde{f}(\alpha) \abs{f(x)}$ for all $x, \alpha \in [0, \infty)$. Then, the maximal $f$ divergence can expressed as the following optimization problem:
	\begin{equation}
		\Sf(\rho\|\sigma) = \inf_{\Gamma, p, q} \{S_{f}(p\|q)\; |\; (\Gamma, p, q) \text{ is a discrete reverse test of } (\rho, \sigma) \}
	\end{equation}
	where 
	\begin{equation}
		S_f(p\|q) \coloneqq \sum_i q_i f\br{p_i \over q_i}
	\end{equation}
	
\end{lemma}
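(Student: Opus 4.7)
The inequality $\widehat{S}_f(\rho\|\sigma) \leq \inf_{\Gamma, p, q} S_f(p\|q)$ is immediate: every discrete reverse test $(\Gamma, p, q)$ is a special case of a general reverse test, obtained by taking $(X, \mu) = (\naturals, \text{counting measure})$, $g = p$, $h = q$, and the same $\Gamma$ viewed as a map $L^1(\naturals) = \ell_1 \to \cB_1(\HS)$. So the infimum on the right-hand side bounds the infimum on the left from above.

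For the reverse inequality, fix $\epsilon > 0$ and start from a near-optimal general reverse test $(\Gamma, g, h, \mu)$ with $S_f^\mu(g\|h) \leq \widehat{S}_f(\rho\|\sigma) + \epsilon/2$. The sandwich condition $\rho \leq c\sigma \leq c^2\rho$ guarantees that $\rho$ and $\sigma$ have the same support and that, without loss of generality (after a truncation/redistribution argument that uses the operator convexity of $f$, i.e. the joint convexity of the perspective $(g,h) \mapsto h f(g/h)$), we may assume $g/h$ is essentially bounded in $[c^{-1}, c]$. Partition the space into countably many measurable pieces $\{A_i\}$ on which $g/h$ varies by at most $\delta$. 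Set
\begin{equation*}
q_i := \int_{A_i} h \, d\mu, \qquad p_i := \lambda_i q_i,
\end{equation*}
where $\lambda_i$ is a representative value of $g/h$ on $A_i$, and define $\Gamma'(e_i) := q_i^{-1}\Gamma(h \mathbf{1}_{A_i})$ extended linearly. Then $\Gamma': \ell_1 \to \cB_1(\HS)$ is positive and trace-preserving with $\Gamma'(q) = \sigma$ exactly, and $\Gamma'(p) = \Gamma(\tilde g)$, where $\tilde g = \sum_i \lambda_i h \mathbf{1}_{A_i}$ is a step-function approximation of $g$, so $\Gamma'(p) \to \rho$ in trace norm as $\delta \to 0$.

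The main obstacle is upgrading this to \emph{exact} equality $\Gamma'(p) = \rho$. I plan to handle this by enlarging the alphabet with two auxiliary atoms $e^\star_1, e^\star_2$ carrying small masses $r_1,r_2$ in $\tilde p$ and $s_1,s_2$ in $\tilde q$, mapped by $\Gamma'$ to correction states $\tau_1, \tau_2$ chosen so that
\begin{equation*}
(1{-}r)\Gamma'(p) + r_1\tau_1 + r_2\tau_2 = \rho, \qquad (1{-}s)\Gamma'(q) + s_1\tau_1 + s_2\tau_2 = \sigma,
\end{equation*}
with $r = r_1+r_2, s = s_1+s_2$. Since $\rho$ and $\sigma$ are mutually dominated with constant $c$ and the discretization error is small, the four positive-operator residuals can be split into the positive parts absorbed by $\tau_1, \tau_2$, keeping the ratios $r_i/s_i$ bounded. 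To estimate $S_f(\tilde p\|\tilde q)$, the contribution from the old atoms equals $(1-s)\sum_i q_i f\!\left(\tfrac{1-r}{1-s}\cdot \tfrac{p_i}{q_i}\right)$, which by the growth hypothesis $|f(\alpha x)| \leq \tilde f(\alpha)|f(x)|$ differs from $S_f(p\|q)$ by at most a factor $\tilde f\!\left(\tfrac{1-r}{1-s}\right) \to 1$ as $r,s\to 0$; the auxiliary atoms contribute $s_1 f(r_1/s_1) + s_2 f(r_2/s_2)$, which vanishes as $s_1, s_2 \to 0$ since $f$ is continuous and the ratios stay bounded. Finally, the purely discretization-induced error $|S_f(p\|q) - S_f^\mu(g\|h)|$ is controlled by uniform continuity of $f$ on $[c^{-1},c]$ and again the growth condition. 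Choosing $\delta$ and $r_i, s_i$ small enough so that the combined error is at most $\epsilon/2$ produces a \emph{bona fide} discrete reverse test of $(\rho, \sigma)$ with $S_f(\tilde p\|\tilde q) \leq \widehat{S}_f(\rho\|\sigma) + \epsilon$, completing the proof.
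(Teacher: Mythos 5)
Your plan follows the same broad route as the paper's proof --- reduce to a continuous reverse test, discretize it, and repair exactness with a small number of auxiliary atoms --- and the easy inclusion is handled the same way. However, two steps are justified by arguments that do not hold up as stated.

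First, the reduction ``WLOG $g/h$ is essentially bounded in $[c^{-1},c]$''. Truncating the set where $g/h$ leaves $[c^{-1},c]$ and redistributing via joint convexity of the perspective lowers the objective but destroys the identities $\Gamma(g)=\rho$, $\Gamma(h)=\sigma$, so the sketch is circular: restoring exactness after redistribution is precisely the repair problem you face later. The statement is true, but the clean justification is structural, via \cite[Theorem 6.3]{hiai_quantum_2019}: the minimum is attained by the canonical reverse test built from the spectral data of $\sigma^{-1/2}\rho\sigma^{-1/2}$, whose spectrum lies in $[c^{-1},c]$ under the sandwich condition. (The paper starts directly from this canonical test, with $g(t)=t$, $h(t)=1-t$ on $[0,1]$, which is why it never needs such a reduction.)

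Second, and more seriously, the repair step. Let $\Delta:=\rho-\Gamma'(p)\ge 0$ be the residual. Your two auxiliary atoms must satisfy $r_1\tau_1+r_2\tau_2=r\rho+(1-r)\Delta$ and $s_1\tau_1+s_2\tau_2=s\sigma$; if all ratios obeyed $K^{-1}\le r_i/s_i\le K$, this would force $K^{-1}s\,\sigma\le r\rho+(1-r)\Delta\le Ks\,\sigma$ as \emph{operator} inequalities. Knowing only that the discretization error is small in trace norm does not give the required upper bound, so the claim that ``the ratios $r_i/s_i$ stay bounded'' does not follow from what you have established, and the conclusion that $\sum_i s_i f(r_i/s_i)$ vanishes ``since $f$ is continuous and the ratios stay bounded'' is unsupported; with unbounded ratios one must trade the smallness of $s_i$ against the growth of $f$, which is exactly the difficulty. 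The step can be rescued: choosing $\lambda_i=\inf_{A_i}(g/h)$ gives $0\le g-\tilde g\le\delta h$ pointwise and hence the operator inequality $\Delta\le\delta\sigma$, and taking $r$ and $s$ both of order $\delta$ then yields $c^{-1}s\,\sigma\le r\rho+(1-r)\Delta\le(c+1)s\,\sigma$, after which bounded ratios and masses of order $\delta$ are available and the auxiliary contribution is $O(\delta)$. (The paper instead pre-shrinks the $\sigma$-side approximant by $1-1/\sqrt{n}$ so that the $\sigma$-residual dominates the $\rho$-residual, and engineers the auxiliary atoms to have ratios exactly $1$ and $0$, making their contribution trivially $O(\Tr\Delta)\max\{\abs{f(0)},\abs{f(1)}\}$.) As written, this step is a genuine gap.
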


\begin{proof}
	For any measure $\mu$ on $[0,1]$, we say $(\Gamma, g, h, \mu)$ is a piecewise reverse test of $(\rho, \sigma)$ if $(\Gamma, g, h, \mu)$ is a reverse test, and additionally there exists a countable partition of $[0,1]$ into disjoint $\mu$-measurable sets $\{A_i\}$ of non-zero measure, such that 
	\begin{equation}\label{eq:piecewise_reverse_test}
		g = \sum_{i=1}^{\infty} g^{(i)} 1_{A_i} \qquad h = \sum_{i=1}^{\infty} h^{(i)} 1_{A_i} \,.
	\end{equation}
	where the $g^{(i)}$ and $h^{(i)}$ are constants.
	We first show the following statement:
	\begin{equation}\label{eq:div_with_piecewise}
		\Sf(\rho\|\sigma) = \inf_{\Gamma, g, h, \mu} \{S^{\mu}_{f}(g\|h)\; |\; (\Gamma, g, h, \mu) \text{ is a piecewise reverse test of } (\rho, \sigma)\}
	\end{equation}
	
	Since any piecewise reverse test is also a reverse test, we only have to show that there exists a sequence of piecewise reverse tests that converges to the optimum value. By \cite[Theorem 6.3]{hiai_quantum_2019}, since the function $f$ is operator convex, we can restrict to the case where the measure space $X$ is [0,1], and furthermore the optimum reverse test can be chosen as $g(t) = t$, $h(t) = 1-t$, $t \in [0,1]$, for some suitable $\Gamma$ and $\mu$ (see \cite{hiai_quantum_2019} for the exact expression of $\Gamma$ and $\mu$). 
	
	For $n \geq 2$, consider the following piecewise aproximation $g_n$ of $g(t) = t$: 
	\begin{equation}
		g_n(t) = \sum_{k = 1}^{\infty} 
		\frac{1}{n (k+1)} 1_{(\frac{1}{n(k+1)}, \frac{1}{n k }]}(t) +  \sum_{i = 1}^{n - 1}\frac{i}{n} 1_{(i/n, (i+1)/n]}(t)
	\end{equation}
	This satisfies the following properties:
	\begin{enumerate}
		\item $|g_n(t) - g(t)| \leq \frac{1}{n} \qquad \forall t \in [0,1]$
		\item $\frac{1}{2} g(t) \leq g_n(t) \leq g(t) \qquad \forall t \in [0,1]$\,.
	\end{enumerate}
	As $h(t) = g(1-t)$, we then set $h_n(t) \coloneqq g_n(1-t)$. See \autoref{fig:geometric_approximation} for an illustration of $g_n(t)$.
	
	\begin{figure}[htbp]
	    \centering
	    \includegraphics[width=\linewidth]{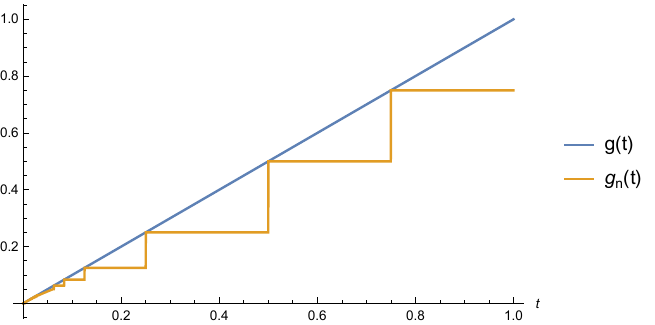}
	    \caption{Illustration of the piecewise approximation $g_n(t)$ of $g(t) = t$.}
	    \label{fig:geometric_approximation}
	\end{figure}
	
	The motivation for this approximation is that we want $g_n$ and $h_n$ to approximate $g$ and $h$ from below, but we also would like $g_n/h_n$ to be well-defined everywhere except for $t = 1$ (where also $g/h$ is infinite), hence the need for an infinite series at least in the definition of $h_n$. Note that the measurability of these functions follows from the measurability of $g$ and $h$, and (since $n \geq 2$) we also have $\frac{g_n}{h_n} \leq 2 \frac{g}{h}$. 
	These approximations will no longer form a reverse test, since for example in general $\Gamma(g_n) < \rho$, but the goal is now to modify these further to construct a $(\Gb, \gb, \hb)$ that does form a piecewise reverse test. Define
	\newcommand{\Dn}{\Delta^{(n)}}
	\begin{align}
		\Dn_g &\coloneqq \Gamma(g) - \Gamma(g_n) \\
		\Dn_h &\coloneqq \Gamma(h) - \Gamma(h_n)\,.
	\end{align}
	By the linearity and positivity of $\Gamma$, we have that these are positive operators. Moreover,
	\begin{equation}
		\Dn_g = \Gamma(g - g_n) \leq \Gamma\left(\frac{1}{n} 1_{[0,1]}\right) = \frac{1}{n} \Gamma(g + h) = \frac{1}{n}(\rho + \sigma) \leq \frac{1 + c}{n} \sigma \,.
	\end{equation}
	and similarly for $\Dn_h$ (remember that $g(t) = t$, $h(t) = 1-t$). Subsequently we will want to ensure that $\Dn_g \leq \Dn_h$.
	This will generally not be satisfied, but we can achieve this by replacing $h_n$ with $\tilde{h}_n = \left(1 - {1 \over \sqrt{n}}\right) h_n$. This is still a piecewise approximiation of $h$ that converges pointwise in the limit $n \to \infty$, but also
	\begin{equation}
		\Dn_{\tilde{h}} = \left(1 - {1 \over \sqrt{n}}\right) \Dn_h + {1 \over \sqrt{n}} \sigma \geq {1 \over \sqrt{n}} \sigma \geq \frac{\sqrt{n}}{1 + c} \Dn_g\,.
	\end{equation}
	For $n$ large enough, $\frac{\sqrt{n}}{1 + c} \geq 1$ and then $\Dn_g \leq \Dn_{\tilde{h}}$.
	We will construct our piecewise reverse test by extending $g_n$ and $h_n$ to functions on $[0,2]$ and setting suitable values on $(1,2]$ (this still proves \eqref{eq:div_with_piecewise}, since functions on $[0,1]$ and $[0,2]$ are obviously equivalent).
	As a measure on $[0,2]$, we choose $\nu \coloneqq \mu \oplus \lambda$, where $\lambda$ is the Lebesgue measure on $[1,2]$, and we mean by the notation that when integrating functions w.r.t.\ $\nu$, we integrate with $\mu$ over $[0,1]$ and with $\lambda$ over $[1,2]$, i.e.
	\begin{equation}
		\int_0^2 f d\nu = \int_0^1 f d \mu + \int_1^2 f d\lambda \,.
	\end{equation}
	Let
	\begin{equation}
		x \coloneqq \frac{\Tr(\Dn_g)}{\Tr(\Dn_{\tilde{h}})} \in [0,1]
	\end{equation}    
	and if $x < 1$, define the state $\omega$ by:
	\begin{equation}
		(1 - x) \omega \coloneqq \frac{\Dn_{\tilde{h}}}{\Tr(\Dn_{\tilde{h}})} - x \frac{\Dn_g}{\Tr(\Dn_g)} = \frac{1}{\Tr(\Dn_{\tilde{h}})} (\Dn_{\tilde{h}} - \Dn_g) \geq 0
	\end{equation}
	which satisfies $\Tr(\omega) = 1$. We employ the convention ${0 \over 0} = 0$, i.e.~if $\Dn_g = 0$, the term $(\Dn_g/ \Tr(\Dn_g))$ is zero. If $x = 1$, $\omega$ turns out to be irrelevant, so we can just set $\omega = 0$.
	Define further the following two normalized functions on $[0,2]$:
	\begin{align}
		\gb &= g_n 1_{[0,1]} + \frac{\Tr(\Dn_g)}{x} 1_{(1, 1+x]} = g_n 1_{[0,1]} + \Tr(\Dn_{\tilde{h}}) 1_{(1, 1+x]}  \\
		\hb &= \tilde{h}_n 1_{[0,1]} + \Tr(\Dn_{\tilde{h}}) 1_{(1,2]}
	\end{align}
	and the following map from $L^1$ functions on $[0,2]$ to bounded operators
	\begin{equation}
		\Gb(f) \coloneqq \Gamma(f 1_{[0,1]}) + \frac{\Dn_g}{\Tr(\Dn_g)} \int_{1}^{1 + x} f d \lambda + \omega \int_{1 + x}^{2} f d\lambda \,.
	\end{equation}
	This is positive, in the sense that $\Gb(f) \geq 0$ if $f \geq 0$, and normalization-preserving in the sense that
	\begin{equation}
		\Tr(\Gb(f)) = \int_{0}^2 f d\nu \,.
	\end{equation}
	It also satisfies
	\begin{align}
		\Gb(\gb) &= \Gamma(g_n) + \Dn_g x \frac{\Tr(\Dn_{\tilde{h}})}{\Tr(\Dn_g)} =  \Gamma(g_n) + \Dn_g = \rho \\
		\Gb(\hb) &= \Gamma(\tilde{h}_n) + \Dn_g x \frac{\Tr(\Dn_{\tilde{h}})}{\Tr(\Dn_g)} + (1-x) \omega \Tr(\Dn_{\tilde{h}})  = \Gamma(\tilde{h}_n) + \Tr(\Dn_{\tilde{h}}) \frac{\Dn_{\tilde{h}}}{\Tr(\Dn_{\tilde{h}})} = \sigma
	\end{align}
	and hence $(\Gb, \gb, \hb, \nu)$ is a piecewise reverse test of $(\rho, \sigma)$. It remains to show that
	\begin{equation}
		S^{\nu}_{f}(\gb\|\hb) \xrightarrow{n \to \infty} S_{f}^{\mu}(g\|h) \,.
	\end{equation}        
	On $(1,2]$, we have
	\begin{equation}
		\int_1^2 \abs{\hb f\br{\gb \over \hb}} d \lambda = \int_1^2 \Tr(\Dn_{\tilde{h}}) \abs{f\br{\Tr(\Dn_{\tilde{h}}) 1_{[1, 1+x]} \over \Tr(\Dn_{\tilde{h}})}} d \lambda \leq C \Tr(\Dn_{\tilde{h}}) = C \int_0^1 (h - \tilde{h}_n) d \mu \to 0
	\end{equation}
	by monotone convergence, where $C = \max\{\abs{f(0)}, \abs{f(1)}\} < \infty$. On $[0,1]$ we have
	\begin{equation}
		{g_n \over \tilde{h}_n} \leq \frac{2}{1-{1 \over \sqrt{n}}} \frac{g}{h} \leq 4 \frac{g}{h}
	\end{equation}
	and thus by the assumption on $f$:
	\begin{equation}
		 \abs{\tilde{h}_n f\br{\frac{g_n}{\tilde{h}_n}}}  \leq  \tilde{f}(4) \abs{h f\br{\frac{g}{h}}} \,.
	\end{equation}
	If $S^{\mu}_{f}(g\|h) = \infty$ there is nothing to show, otherwise by dominated convergence (and continuity of $f$)
	\begin{equation}
		\int_0^1 \tilde{h}_n f \br{\frac{g_n}{\tilde{h}_n}} d \mu \to \int_0^1 h f\br{\frac{g}{h}} d \mu = S^{\mu}_{f}(g\|h)
	\end{equation}
	
	This completes the proof of \eqref{eq:div_with_piecewise}. 
	Finally, we still have to show that the formulation in terms of piecewise reverse tests is equivalent to the formulation in terms of discrete reverse tests. Let ($\Gamma$, $p$, $q$) be a discrete reverse test of $(\rho, \sigma)$. Then let $\mu$ be the Lebesgue measure on $[0,1]$, and $\{A_i\}_{i = 1}^{\infty}$ be a countable collection of disjoint measurable subsets of $[0,1]$, such that $\mu(A_i) > 0$, for all $i$ and $\sum_i \mu(A_i) = 1$. Define $g' \coloneqq \sum_i \frac{1_{A_i}}{\mu(A_i)} p_i$, $h' \coloneqq \sum_i \frac{1_{A_i}}{\mu(A_i)} q_i$ and $\Gamma'(f) \coloneqq \sum_{i} \Gamma_i \int_{A_i} f d\mu$ for any $f \in L^1$. Then, $\Gamma'(g') = \Gamma(p) = \rho$ and $\Gamma'(h') = \Gamma(q) = \sigma$, and so $(\Gamma', g', h', \mu)$ is a piecewise reverse test, and additionally also $S_{\alpha}(p\|q) = S_{f}^{\mu}(g'\|h')$.
	Conversely, if $(\Gamma', g', h', \mu)$ is a piecewise reverse test as in \eqref{eq:piecewise_reverse_test}, then define $p, q \in \ell_1$ by $p_i \coloneqq g'^{(i)} \mu(A_i)$, $q_i \coloneqq h'^{(i)} \mu(A_i)$, and $\Gamma_i \coloneqq \frac{1}{\mu(A_i)}\Gamma'(1_{A_i})$, which defines a discrete reverse test again achieving the same value in the optimization problem.
	
\end{proof}
\begin{lemma}\label{lem:geometric_scalar}
	For any $\alpha \in [0, \infty)$, any states $\rho, \sigma \in \P(\HS_A)$, and any two positive real numbers $p, q$ the geometric trace function satisfies
	\begin{equation}
		\Sg(p \rho\|q \sigma) = p^{\alpha} q^{1 - \alpha} \Sg(\rho\|\sigma).
	\end{equation}
\end{lemma}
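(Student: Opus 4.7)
The approach is to observe that positive rescaling of $\rho$ and $\sigma$ induces a bijection on the set of reverse tests, under which the classical objective function $S_\alpha^\mu$ scales in a trivial way. I would not need to invoke any of the machinery (operator convexity, explicit optimisers, etc.) developed for the general maximal $f$-divergence; this is really just a computation with the definition.

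First I would verify that if $(\Gamma, g, h, \mu)$ is a reverse test of $(\rho, \sigma)$, then $(\Gamma, pg, qh, \mu)$ is a reverse test of $(p\rho, q\sigma)$. By linearity of $\Gamma$ we have $\Gamma(pg) = p\Gamma(g) = p\rho$ and $\Gamma(qh) = q\sigma$, and the trace-preserving property is inherited since $\int pg\, d\mu = p\Tr(\rho) = \Tr(p\rho)$, and analogously for $qh$. Because $p, q > 0$, the inverse map $(\Gamma, g', h', \mu) \mapsto (\Gamma, g'/p, h'/q, \mu)$ goes the other way, so this genuinely identifies the two sets of reverse tests.

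Second, I would combine this with the direct scalar computation
\begin{equation}
S_\alpha^{\mu}(pg\|qh) \;=\; \int qh \left(\frac{pg}{qh}\right)^\alpha d\mu \;=\; p^\alpha q^{1-\alpha} \int h \left(\frac{g}{h}\right)^\alpha d\mu \;=\; p^\alpha q^{1-\alpha}\, S_\alpha^{\mu}(g\|h).
\end{equation}
Applying this under the bijection from the first step shows that the infimum defining $\Sg(p\rho\|q\sigma)$ equals $p^\alpha q^{1-\alpha}$ times the infimum defining $\Sg(\rho\|\sigma)$, which is exactly the claim. There is no real obstacle here: the lemma reduces to checking that the rescaling map on reverse tests is a bijection (which uses $p, q > 0$ essentially) and that the integrand rescales by a constant factor. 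The only edge cases worth a line — the boundary values $\alpha = 0$ or regions where $g$ or $h$ vanishes — work out under the standard conventions since both sides rescale by the same $p^\alpha q^{1-\alpha}$.
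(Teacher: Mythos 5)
Your proposal is correct and follows essentially the same route as the paper's proof: both establish the bijection between reverse tests of $(\rho,\sigma)$ and of $(p\rho, q\sigma)$ induced by rescaling $g \mapsto pg$, $h \mapsto qh$, and combine it with the elementary scaling identity $S_\alpha^{\mu}(pg\|qh) = p^\alpha q^{1-\alpha} S_\alpha^{\mu}(g\|h)$. Your write-up is in fact slightly more careful than the paper's (which contains a small typo writing $ph$ where $qh$ is meant), but there is no substantive difference.
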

\begin{proof}
	For any two functions $g$ and $h$ and any measure $\mu$, it is easy to see that 
	\begin{equation}
		\Sg^{\mu}(p g\|q h) = p^{\alpha} q^{1 - \alpha} \Sg^{\mu}(g\|h)
	\end{equation}
	The statement then follows from the variational expression of $\Sg$ and the fact that if $(\Gamma, g, h, \mu)$ forms a reverse test of $(\rho, \sigma)$, then $(\Gamma, p g, p h, \mu)$ form a reverse test of $(p \rho, q \sigma)$ and vice versa. 
\end{proof}

\begin{proof}[Proof of \autoref{thm:geometric_chain_rule}]
	Let $\rho = \rho_A, \sigma = \sigma_A \in \DM[\HS_A]$ and fix $\varepsilon > 0$. Then, define $\tilde{\rho}_A = \rho_A + \varepsilon(\rho_A + \sigma_A)$, $\tilde{\sigma}_A = \sigma_A + \varepsilon(\rho_A + \sigma_A)$. 
	Let ($\Gamma$, $p$, $q$) be a discrete reverse test of $(\tilde{\rho}_A, \tilde{\sigma}_A)$.
	We write $\Gamma_i = \Gamma(1_i) \in \cB_1$, where $1_i \in \ell_1$ is the sequence that is one at index $i$ and zero otherwise. We then have 
	\begin{equation}
		\tilde{\rho}_A = \sum_i p_i \Gamma_i \qquad \tilde{\sigma}_A = \sum_i q_i \Gamma_i \,.
	\end{equation}
	Hence, taking $R$ an additional infinite dimensional system with countable basis $\{\ket{i}\}_i$, we can define
	\begin{equation}
		\tilde{\rho}_{RA}= \sum_i p_i \ketbra{i}{i}_R \otimes \Gamma_i \qquad \tilde{\sigma}_{RA} = \sum_i q_i \ketbra{i}{i}_R \otimes \Gamma_i
	\end{equation}
	such that $\tilde{\rho}_A = \Tr_R(\tilde{\rho}_{RA}), \tilde{\sigma}_A = \Tr_R(\tilde{\sigma}_{RA})$. We start with the case where $\alpha \in (1,2]$, for which it is well-known that the function $f(x) = x^{\alpha}$ is operator convex. 
	We have,
	\begin{align}
		\Sg(\E(\tilde{\rho}_A)\|\F(\tilde{\sigma}_A)) &\leq
		\Sg(\E(\tilde{\rho}_{RA})\|\F(\tilde{\sigma}_{RA})) \\ &= \Sg\left(\sum_i  p_i \ketbra{i}{i} \otimes \E(\Gamma_i)\|\sum_i q_i \ketbra{i}{i} \otimes \F(\Gamma_i)\right) \\ &= \sum_i \Sg(p_i \E(\Gamma_i)\|q_i \F(\Gamma_i)) \\ &= \sum_i p_i^{\alpha} q_i^{1 - \alpha} \Sg(\E(\Gamma_i)\|\F(\Gamma_i)) \\
		&\leq \sum_i p_i^{\alpha} q_i^{1 - \alpha} \sup_{\nu \in \DM[A]} \Sg(\E(\nu)\|\F(\nu)) \\
		&= \Sg(p\|q) \sup_{\nu \in \DM[A]} \Sg(\E(\nu)\|\F(\nu))
	\end{align}
	where the first inequality follows from the data-processing inequality for the geometric trace function \cite[Theorem 2.9]{hiai_quantum_2019} (remember that the channels act only on system $A$), the third line follows from \autoref{lem:generalized_direct_sum_decomp}, and the fourth equality is \autoref{lem:geometric_scalar}. 
	
	We write $\Sg(\E\|\F) = \sup_{\nu \in \DM[A]} \Sg(\E(\nu)\|\F(\nu))$. Taking the infimum over all discrete reverse tests, we find by \autoref{lem:discrete_reverse_tests} that
	\begin{equation}
		\Sg(\E(\tilde{\rho})\|\F(\tilde{\sigma})) \leq \Sg(\tilde{\rho}\|\tilde{\sigma}) \Sg(\E\|\F)\,.
	\end{equation}
	It remains to show convergence in the limit $\varepsilon \to 0$. For the right-hand side, by \cite[Definition 2.8]{hiai_quantum_2019}
	\begin{equation}
		\lim_{\varepsilon \to 0} \Sg(\tilde{\rho}\|\tilde{\sigma}) = \Sg(\rho\|\sigma) \,.
	\end{equation}
	For the left-hand side, by \cite[Theorem 2.9]{hiai_quantum_2019}, we have
	\begin{align}
		\Sg(\E(\tilde{\rho})\|\F(\tilde{\sigma})) &= \Sg(\E(\rho) + \varepsilon \E(\rho + \sigma))\|\F(\sigma) + \varepsilon \F(\rho + \sigma)) \\
		&\leq 
		\Sg(\E(\rho)\|\F(\sigma)) + \varepsilon \Sg(\E(\rho + \sigma)\|\F(\rho + \sigma)) \\&\leq \Sg(\E(\rho)\|\F(\sigma)) + \varepsilon \Sg(\E\|\F).
	\end{align}
	Now, if $\Sg(\E\|\F) = \infty$ the statement of our theorem is empty, so we can assume it to be finite. In that case we get 
	\begin{equation}
		\limsup_{\varepsilon \to 0} \Sg(\E(\tilde{\rho})\|\F(\tilde{\sigma})) \leq \Sg(\E(\rho)\|\F(\sigma))\,.
	\end{equation}
	The opposite direction
	\begin{equation}
		\liminf_{\varepsilon \to 0} \Sg(\E(\tilde{\rho})\|\F(\tilde{\sigma})) \geq \Sg(\E(\rho)\|\F(\sigma))
	\end{equation}
	follows by the lower semi-continuity of $\Sg$ \cite[Theorem 5.5]{hiai_quantum_2019}. The statement then follows upon taking the logarithm and dividing by $\alpha - 1$ (which is positive).

	For $\alpha \in [0, 1)$, $f(x) = - x^{\alpha}$ is well-known to be operator convex, and in that case we can apply all the above reasoning to $\Sf = - \Sg$ (all the properties from \cite{hiai_quantum_2019} we used apply to $\Sf$ with an operator convex function $f$). We then find that
	\begin{equation}
		\Sf(\E(\rho_A)\|\F(\sigma_A)) \leq \Sg(\rho\|\sigma) \sup_{\nu \in \DM[A]} \Sf(\E(\nu)\|\F(\nu))
	\end{equation}
	which corresponds to 	
	\begin{equation}
		\Sg(\E(\rho_A)\|\F(\sigma_A)) \geq \Sg(\rho\|\sigma) \inf_{\nu \in \DM[A]} \Sg(\E(\nu)\|\F(\nu))\,.
	\end{equation}
	The desired statement follows again after taking logarithms and dividing by $\alpha - 1$ (which is now negative, so it turns around the inequality and changes the infimum into a supremum). 
	\medskip
	
	To see how the chain rule implies additivity, let $\E_1, \F_1: \P(\HS_{A_1}) \to \P(\HS_{B_1})$, $\E_2, \F_2: \P(\HS_{A_2}) \to \P(\HS_{B_2})$ be channels, and consider any joint input state $\nu = \nu_{R A_1 A_2}$. Then (supressing identities as before)
	\begin{align}
	    \Dg\big((\E_1 \otimes \E_2)(\nu)\|(\F_1 \otimes \F_2)(\nu)\big) &= \Dg\big(\E_1(\E_2(\nu))\|\F_1(\F_2(\nu)\big) \\
	    &\leq \Dg(\E_1\|\F_1) + \Dg(\E_2(\nu)\|\F_2(\nu)) \\
	    &\leq \Dg(\E_1\|\F_1) + \Dg(\E_2\|\F_2)
	\end{align}
	where we used the chain rule in the first inequality. This implies
	\begin{equation}
	    \Dg(\E_1\otimes \E_2\|\F_1 \otimes \F_2) \leq \Dg(\E_1\|\F_1) + \Dg(\E_2\|\F_2),
	\end{equation}
	and the other direction follows by just restricting the supremum over input states $\nu_{R A_1 A_2}$ to product states.
\end{proof}

\begin{remark}[Chain Rule for the Belavkin-Staszewski Relative Entropy]
	By using a very similar argument as in \autoref{thm:geometric_chain_rule}, one can also show that the Belavkin-Staszewski relative entropy, which is defined as the maximal $f$-divergence for the operator convex function $f(x) = x \log x $, satisfies the chain rule. To see this, note that now instead of \autoref{lem:geometric_scalar}, we have $\sum_i \Sf(p_i \rho_i\|q_i \sigma_i) = \Sf(p\|q) + \sum_i p_i \Sf(\rho_i\|\sigma_i)$, and the $p_i$ are normalized. The remainder of the argument is then almost identical to \autoref{thm:geometric_chain_rule}. 
\end{remark}

\subsection{Some infinite dimensional relative entropy inequalities}\label{sec:InfExtensions}
In this section we prove the infinite dimensional version of the inequalities \eqref{equ:ImpIneq1}-\eqref{equ:ImpIneq5}. We start with the following Lemma
\begin{lemma}
\label{lem:3.1}
For two states $\rho,\sigma\in\mathcal{D}(\mathcal{H})$ on a separable Hilbert space and for any $\lambda\in(-\infty,D_{\max}(\rho\|\sigma)]$, if $\epsilon = \Tr\Sigma$, where $\Sigma:=(\rho-2^\lambda\sigma)_+$, then
\begin{equation}
    D_{\max}^{\sqrt{\epsilon}}(\rho\|\sigma)\leq \lambda-\log(1-\epsilon).
\end{equation}
\smallskip

\noindent
\textbf{Remark.} This lemma is similar to \cite[Lemma 15]{art:DattaMaxRelEnt}. However, there the definition of the smoothed max-relative entropy was with respect to the trace-norm ball of sub-normalized states, whereas in this work we smooth over the Sine-distance (or purified distance) ball of normalized states, which yields a quantitatively different result. 
\end{lemma}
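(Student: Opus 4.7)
The plan is to mimic the finite-dimensional construction of \cite{art:DattaMaxRelEnt} by exhibiting an explicit candidate state in the Sine-distance ball. Define
\begin{equation}
\tilde\rho := \frac{\rho - \Sigma}{1-\epsilon},
\end{equation}
where $\Sigma = (\rho - 2^\lambda\sigma)_+$ is as in the statement. To see that $\rho-\Sigma\geq 0$, write $T := \rho - 2^\lambda\sigma = T_+ - T_-$ and let $P'$ denote the spectral projector onto the strictly negative part of $T$; then $\rho - \Sigma = 2^\lambda\sigma - T_-$ and $T_- = 2^\lambda P'\sigma P' - P'\rho P' \leq 2^\lambda\sigma$, so $\rho-\Sigma\geq 0$. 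Since $\Tr(\rho-\Sigma) = 1-\epsilon$ (the case $\epsilon = 1$ makes the bound trivial), $\tilde\rho$ is a well-defined density operator.

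The max-relative entropy bound is then immediate: the same inequality $\rho - \Sigma \leq 2^\lambda\sigma$ yields $\tilde\rho \leq \frac{2^\lambda}{1-\epsilon}\sigma$, i.e.\ $D_{\max}(\tilde\rho\|\sigma)\leq \lambda - \log(1-\epsilon)$. All that remains is to check $\tilde\rho\in B^{\sqrt{\epsilon}}(\rho)$, which by definition reduces to proving $F(\rho,\tilde\rho)\geq 1-\epsilon$.

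I expect the fidelity estimate to be the only real technical moment. Substituting $\sqrt{\tilde\rho} = (1-\epsilon)^{-1/2}\sqrt{\rho-\Sigma}$ into $F(\rho,\tilde\rho) = \|\sqrt\rho\sqrt{\tilde\rho}\|_1^2$, it suffices to show $\|\sqrt\rho\sqrt{\rho-\Sigma}\|_1 \geq 1-\epsilon$. Using $\|A\|_1 = \Tr\sqrt{A^*A}$, this quantity equals $\Tr\sqrt{\sqrt{\rho-\Sigma}\,\rho\,\sqrt{\rho-\Sigma}}$. Since $\rho - \Sigma \leq \rho$, conjugating by $\sqrt{\rho-\Sigma}$ yields $\sqrt{\rho-\Sigma}\,\rho\,\sqrt{\rho-\Sigma}\geq (\rho-\Sigma)^2$, and operator monotonicity of $x\mapsto\sqrt{x}$ then gives $\sqrt{\sqrt{\rho-\Sigma}\,\rho\,\sqrt{\rho-\Sigma}}\geq \rho-\Sigma$. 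Taking traces produces the desired $1-\epsilon$ lower bound, and hence $P(\rho,\tilde\rho) = \sqrt{1-F(\rho,\tilde\rho)}\leq\sqrt{\epsilon}$.

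Combining the two bounds shows $D_{\max}^{\sqrt\epsilon}(\rho\|\sigma)\leq D_{\max}(\tilde\rho\|\sigma)\leq \lambda - \log(1-\epsilon)$, as desired. The argument extends verbatim to separable Hilbert spaces since every ingredient---Borel functional calculus for $\Sigma$, $\sqrt\rho$, $\sqrt{\rho-\Sigma}$; operator monotonicity of the square root; and the identity $P(\rho,\tilde\rho)=\sqrt{1-F(\rho,\tilde\rho)}$---has already been noted earlier in the paper to hold in this generality. The main conceptual difference from Datta's Lemma~15 is that one cannot simply use the subnormalized state $\rho-\Sigma$ (which would sit in a trace-norm ball) and must renormalize before smoothing over Sine-distance; the operator-monotonicity step above is precisely what makes the renormalization factor $(1-\epsilon)^{-1}$ appear in exactly the right place to give the claimed bound.
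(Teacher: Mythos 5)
There is a genuine gap, and it sits at the foundation of your construction: the operator $\rho - \Sigma$ is in general \emph{not} positive, so your candidate $\tilde\rho = (\rho-\Sigma)/(1-\epsilon)$ need not be a state. Your justification reduces to the claim $P'\sigma P' \leq \sigma$ (you correctly get $T_- \leq 2^\lambda P'\sigma P'$ and then silently replace $P'\sigma P'$ by $\sigma$), but compression by a projector does not decrease an operator in the semidefinite order unless the projector commutes with it. A concrete counterexample: take $\HS = \complex^2$, $\rho = \kb{0}$, $\sigma = \kb{+}$ with $\ket{+} = (\ket{0}+\ket{1})/\sqrt{2}$, and $\lambda = 0$ (admissible, since $D_{\max}(\rho\|\sigma)=\infty$). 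Then $\rho - \sigma$ is traceless with determinant $-1/2$, so $\Sigma = (\rho-\sigma)_+ = \tfrac{1}{\sqrt{2}}\kb{v}$ for a unit eigenvector $\ket{v}$ that is \emph{not} proportional to $\ket{0}$, and hence $\bra{1}(\rho-\Sigma)\ket{1} = -\tfrac{1}{\sqrt{2}}\abs{\braket{1|v}}^2 < 0$. Your subsequent steps (the bound $\rho - \Sigma \leq 2^\lambda\sigma$, and the fidelity estimate via $\sqrt{\rho-\Sigma}\,\rho\,\sqrt{\rho-\Sigma} \geq (\rho-\Sigma)^2$ and operator monotonicity of the square root) would be fine \emph{if} $\rho-\Sigma$ were positive — and indeed they are fine in the commuting case, where $P'$ commutes with $\sigma$ — but in the general noncommutative setting the construction breaks before it starts.

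This is precisely the difficulty the paper's proof (following Tomamichel) is designed to avoid: instead of subtracting $\Sigma$ from $\rho$, one sets $\Lambda := 2^\lambda\sigma$ and conjugates by the contraction $G := \Lambda^{1/2}(\Lambda+\Sigma)^{-1/2}$, defining $\tilde\rho := G\rho G^*/\Tr[G^*G\rho]$, which is positive by construction. The single operator inequality $\rho \leq \Lambda + \Sigma$ then delivers both needed facts at once: $G\rho G^* \leq G(\Lambda+\Sigma)G^* = \Lambda = 2^\lambda\sigma$, and $\Tr[G^*G\rho] \geq 1 - \Tr\Sigma = 1-\epsilon$, after which the Gentle Measurement Lemma (rather than a direct fidelity computation) gives $P(\rho,\tilde\rho) \leq \sqrt{\epsilon}$. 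If you want to salvage your approach you essentially have to replace the subtraction $\rho - \Sigma$ by such a two-sided conjugation; there is no way to make the naive renormalized difference work for noncommuting $\rho$ and $\sigma$.
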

\begin{proof}
The proof is similar to \cite[Lemma 6.21]{art:Tomamichel_book}. Let $\Lambda:=2^\lambda\sigma$ and $\Sigma:=(\rho-2^\lambda\sigma)_+$ and define $G:=\Lambda^{\frac{1}{2}}(\Lambda+\Sigma)^{-\frac{1}{2}}$. This is well-defined since $G$ and $G^*G$ are contractions, as
\begin{align*}
    &G^*G=(\Lambda+\Sigma)^{-\frac{1}{2}}\Lambda(\Lambda+\Sigma)^{-\frac{1}{2}}\overset{\Lambda\leq\Lambda+\Sigma}{\leq}(\Lambda+\Sigma)^{-\frac{1}{2}}(\Lambda+\Sigma)(\Lambda+\Sigma)^{-\frac{1}{2}} = \1, \\
    &\implies \|G^*G\|=\|GG^*\|=\|G\|^2\leq 1.
\end{align*}
Now observe that
\begin{align*}
    \rho&\leq \Lambda+\Sigma = 2^\lambda\sigma+(\rho-2^\lambda\sigma)_+ \Longleftrightarrow \\ 0 &\leq (2^\lambda\sigma-\rho)+(\rho-2^\lambda\sigma)_+ = (2^\lambda\sigma-\rho)_+-(2^\lambda\sigma-\rho)_-+(2^\lambda\sigma-\rho)_- = (2^\lambda\sigma-\rho)_+,
\end{align*}
which is obviously true. Thus it follows that
\begin{align*}
    G\rho G^*\leq G(\Lambda+\Sigma)G^*=\Lambda^{\frac{1}{2}}(\Lambda+\Sigma)^{-\frac{1}{2}}(\Lambda+\Sigma)(\Lambda+\Sigma)^{-\frac{1}{2}}\Lambda^\frac{1}{2} = \Lambda = 2^\lambda\sigma,
\end{align*} and
\begin{align*}
    1-\Tr[G^*G\rho]&=\Tr[(\1-G^*G)\rho]\overset{\rho\leq\Lambda+\Sigma}{\leq}\Tr[(\1-G^*G)(\Lambda+\Sigma)] \\ &= \Tr[\Lambda+\Sigma]-\Tr[(\Lambda+\Sigma)^{-\frac{1}{2}}\Lambda(\Lambda+\Sigma)^{-\frac{1}{2}}(\Lambda+\Sigma)]=\Tr[\Sigma] \\
    \Longleftrightarrow \Tr[G^*G\rho]&\geq 1-\Tr[\Sigma]= 1-\epsilon
\end{align*}
So defining the state $\tilde{\rho}:=\frac{G\rho G^*}{\Tr[G^*G\rho]}\in\mathcal{D}(\mathcal{H})$, we have by the above that
\begin{align*}
    \tilde{\rho}\leq\frac{2^\lambda}{\Tr[G^*G\rho]}\sigma \leq \frac{2^\lambda}{1-\epsilon}\sigma =2^{\lambda-\log(1-\epsilon)}\sigma
\end{align*} and so by the Gentle Measurement Lemma (\autoref{lem:GentleMeasurement}) it holds that $P(\rho,\tilde{\rho})\leq \sqrt{\Tr[(\1-G^*G)\rho]}\leq \sqrt{\Tr[\Sigma]}=\sqrt{\epsilon}$. Thus the desired result follows immediately:
\begin{align*}
    D_{\max}^{\sqrt{\epsilon}}(\rho\|\sigma)\leq D_{\max}(\tilde{\rho}\|\sigma) \leq \lambda-\log(1-\epsilon)
\end{align*}
\end{proof} Note, that we could have equally have chosen $\epsilon=1-\Tr[G^*G\rho]$, which would give a numerically slightly tighter bound in the Lemma. What we chose  though suffices for the argument we are making and makes some calculations simpler later on.
Now, as a simple Corollary we get the upper bounds on the smoothed max divergence in terms of the $\alpha$-Petz-\renyi and the hypothesis testing divergences.
\begin{corollary}
For any $\epsilon\in(0,1)$ and two states $\rho,\sigma\in\mathcal{D}(\mathcal{H})$ on a separable Hilbert space $\mathcal{H}$ the following inequality holds:
\begin{equation}
    D_{\max}^\epsilon(\rho\|\sigma)\leq D_H^{1-\epsilon^2}(\rho\|\sigma)-\log(1-\epsilon^2)
    \label{equ:ImpIneq1NEW}
\end{equation}
\label{COR:1}
\end{corollary}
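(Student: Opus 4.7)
The plan is to reduce the corollary to \autoref{lem:3.1} by choosing the threshold $\lambda$ in that lemma to be exactly the quantity $D_H^{1-\epsilon^2}(\rho\|\sigma)$, and then verifying that the associated trace $\Tr[(\rho - 2^\lambda\sigma)_+]$ is at most $\epsilon^2$. If this holds, then \autoref{lem:3.1} gives
\begin{equation*}
D_{\max}^{\tilde\epsilon}(\rho\|\sigma) \leq \lambda - \log(1 - \tilde\epsilon^2)
\end{equation*}
for $\tilde\epsilon \coloneqq \sqrt{\Tr[(\rho - 2^\lambda\sigma)_+]} \leq \epsilon$, and since $D_{\max}^{\cdot}$ is monotonically nonincreasing in its smoothing parameter (a larger Sine-distance ball yields a smaller infimum) and $-\log(1-t)$ is monotonically increasing in $t$, this directly implies the desired bound $D_{\max}^{\epsilon}(\rho\|\sigma) \leq \lambda - \log(1-\epsilon^2)$.

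So the only real work is showing $\Tr[(\rho - 2^\lambda\sigma)_+] \leq \epsilon^2$ when $\lambda = D_H^{1-\epsilon^2}(\rho\|\sigma)$. To do this, set $F := \{\rho \geq 2^\lambda\sigma\}$, so that $\Tr[(\rho - 2^\lambda\sigma)_+] = \Tr[F\rho] - 2^\lambda\Tr[F\sigma]$. I would split into two cases. If $\Tr[F\rho] \geq \epsilon^2$, then $F$ is a feasible test in the definition of $D_H^{1-\epsilon^2}(\rho\|\sigma)$ (the type~I constraint reads $\Tr[F\rho] \geq 1 - (1 - \epsilon^2) = \epsilon^2$), so by definition $\Tr[F\sigma] \geq 2^{-\lambda}$, giving $\Tr[F\rho] - 2^\lambda\Tr[F\sigma] \leq 1 - 1 = 0 \leq \epsilon^2$. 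If instead $\Tr[F\rho] < \epsilon^2$, then $\Tr[F\rho] - 2^\lambda\Tr[F\sigma] \leq \Tr[F\rho] < \epsilon^2$ because $2^\lambda \Tr[F\sigma] \geq 0$. Either way, the required bound holds.

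The only edge case to address is when $\lambda = D_H^{1-\epsilon^2}(\rho\|\sigma)$ exceeds $D_{\max}(\rho\|\sigma)$, since \autoref{lem:3.1} is stated for $\lambda$ in $(-\infty, D_{\max}(\rho\|\sigma)]$. But in that regime $\rho \leq 2^\lambda\sigma$, so $\Tr[(\rho - 2^\lambda\sigma)_+] = 0$ and $D_{\max}^{\epsilon}(\rho\|\sigma) \leq D_{\max}(\rho\|\sigma) \leq \lambda$, so the claimed inequality holds trivially (noting $-\log(1-\epsilon^2) \geq 0$). The infimum in the definition of $D_H^{1-\epsilon^2}$ is attained (or approached) in separable Hilbert spaces by a standard weak-$*$ compactness argument, but since we only use the defining inequality $\Tr[F\sigma] \geq 2^{-D_H^{1-\epsilon^2}(\rho\|\sigma)}$ for any feasible $F$, attainment is not actually needed.

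I do not anticipate any serious obstacle here; the only subtlety is the monotonicity juggling of the smoothing parameter together with the fact that the lemma produces the smoothing level $\sqrt{\Tr[(\rho - 2^\lambda\sigma)_+]}$ rather than $\epsilon$ directly. The argument otherwise parallels the finite-dimensional proof of \eqref{equ:ImpIneq1}, with \autoref{lem:3.1} (based on the Gentle Measurement Lemma, which is valid in separable Hilbert spaces) replacing the minimax-based derivation of \cite{art:MinimaxMaxDivergenceAnshu_2019} that would not generalise directly.
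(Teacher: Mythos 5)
Your proof is correct, and it rests on the same two pillars as the paper's: \autoref{lem:3.1} together with the Neyman--Pearson projector $F=\{\rho\ge 2^{\lambda}\sigma\}$. The difference is the direction of the reduction. The paper fixes $\lambda$ first, sets $\epsilon_0=\Tr[(\rho-2^{\lambda}\sigma)_+]$, uses $F$ as an explicit test to show $D_H^{1-\epsilon_0}(\rho\|\sigma)\ge\lambda$, and then ``substitutes $\epsilon\mapsto\epsilon^2$''; read literally, this only yields the inequality for those $\epsilon$ that arise as $\sqrt{\Tr[(\rho-2^{\lambda}\sigma)_+]}$ for some $\lambda$, and the paper does not spell out how to reach an arbitrary prescribed $\epsilon$. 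You instead fix $\epsilon$, choose $\lambda=D_H^{1-\epsilon^2}(\rho\|\sigma)$, and use the infimum characterization of $D_H$ (feasibility of $F$ in one case, positivity of $2^{\lambda}\Tr[F\sigma]$ in the other) to show $\Tr[(\rho-2^{\lambda}\sigma)_+]\le\epsilon^2$, after which the monotonicity of $\epsilon\mapsto D_{\max}^{\epsilon}$ and of $t\mapsto-\log(1-t)$ finishes the job. This is the dual view of the same Neyman--Pearson relationship, but your parametrization cleanly handles the quantifier over $\epsilon$ and requires no attainment or surjectivity considerations; your treatment of the edge cases $\lambda>D_{\max}(\rho\|\sigma)$ and (implicitly) $D_H^{1-\epsilon^2}(\rho\|\sigma)=\infty$ is also sound. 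No gaps.
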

\begin{proof}
Fix some $\lambda$ and let $\epsilon=\Tr[\Sigma]$ be as in \autoref{lem:3.1} above w.r.t this $\lambda$.
Let $P_+ = \{\rho-2^\lambda\sigma\}_+$ be the projector onto the support of $(\rho-2^\lambda\sigma)_+$. Now consider the POVM $\{P_+,P_-\}$ as the decision rule for a hypothesis test between $\rho$ and $\sigma$. The associated type I and II errors are, respectively, 
\begin{align*}
    \alpha &= \Tr[P_-\rho] = 1-\Tr[P_+\rho] \overset{2^\lambda\sigma\geq0}{\leq} 1-\Tr[P_+(\rho-2^\lambda\sigma)] = 1-\Tr[\Sigma] = 1-\epsilon \\
    \beta &= \Tr[P_+\sigma]\underset{P_+(\rho-2^\lambda\sigma)\geq 0}{\leq}2^{-\lambda}\Tr[P_+\rho]\leq 2^{-\lambda} 
\end{align*}
Therefore it follows that
\begin{align*}
    D_H^{1-\epsilon}(\rho\|\sigma)=-\log\inf_{0\leq F\leq\1}\left\{\Tr[F\sigma]|\Tr[(\1-F)\rho]\leq 1-\epsilon \right\}\geq -\log2^{-\lambda} = \lambda.
\end{align*}
Thus the claim follows immediately from \autoref{lem:3.1} when substituting $\epsilon$ by $\epsilon^2$:
\begin{align*}
    D_{\max}^{\sqrt{\epsilon}}(\rho\|\sigma)\leq \lambda-\log(1-\epsilon) \leq D_H^{1-\epsilon}(\rho\|\sigma) +\log\left(\frac{1}{1-\epsilon}\right).
\end{align*}
\end{proof}
\begin{corollary}
For any $\epsilon\in(0,1)$ and $\alpha\in(1,\infty)$ and two states $\rho,\sigma\in\mathcal{D}(\mathcal{H})$ on a separable Hilbert space $\mathcal{H}$ the following inequality holds:
\begin{equation}
    D_{\max}^\epsilon(\rho\|\sigma)\leq \mathbb{D}_\alpha(\rho\|\sigma)+\frac{2}{\alpha-1}\log\left(\frac{1}{\epsilon}\right)+\log\left(\frac{1}{1-\epsilon^2}\right).
\label{equ:ImpIneq2NEW}
\end{equation}
where $\mathbb{D}_\alpha$ can be any quantum $\alpha$-\renyi divergence, i.e.\ any function on quantum states that satisfies the data-processing inequality, and reduces to the classical $\alpha$-\renyi divergence when evaluated on commuting states. Specifically, the result will hold for the Petz-\renyi divergence.
\label{COR:2}
\end{corollary}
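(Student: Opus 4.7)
The plan is to mirror the proof strategy of Corollary~\ref{COR:1}, combining Lemma~\ref{lem:3.1} with a suitable choice of the threshold $\lambda$ informed by the Rényi divergence. Specifically, I would set
\[
  \lambda \coloneqq \mathbb{D}_\alpha(\rho\|\sigma) + \frac{2}{\alpha-1}\log\!\left(\frac{1}{\epsilon}\right),
\]
and aim to show that with this choice $\Tr[\Sigma] \leq \epsilon^2$, where $\Sigma = (\rho - 2^\lambda \sigma)_+$ as in Lemma~\ref{lem:3.1}.

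The central step is to use the data-processing inequality of $\mathbb{D}_\alpha$ via the binary POVM $\{P_+, P_-\}$ with $P_+ = \{\rho - 2^\lambda \sigma\}_+$. Writing $p_+ = \Tr[P_+\rho]$ and $q_+ = \Tr[P_+\sigma]$, the DPI and the fact that $\mathbb{D}_\alpha$ reduces to the classical Rényi divergence on commuting states gives
\[
  \mathbb{D}_\alpha(\rho\|\sigma) \;\geq\; \frac{1}{\alpha-1}\log\!\bigl(p_+^{\alpha} q_+^{1-\alpha} + p_-^{\alpha} q_-^{1-\alpha}\bigr) \;\geq\; \frac{1}{\alpha-1}\log\!\bigl(p_+^{\alpha} q_+^{1-\alpha}\bigr).
\]
Because $P_+(\rho - 2^\lambda \sigma)\geq 0$, one has $q_+ \leq 2^{-\lambda} p_+$, and since $1-\alpha<0$ this yields $q_+^{1-\alpha} \geq 2^{\lambda(\alpha-1)} p_+^{1-\alpha}$. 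Substituting, one obtains $p_+ \leq 2^{(\alpha-1)(\mathbb{D}_\alpha(\rho\|\sigma) - \lambda)} = \epsilon^2$. Then $\Tr[\Sigma] \leq \Tr[P_+ \rho] = p_+ \leq \epsilon^2$.

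From here, Lemma~\ref{lem:3.1} applied with $\tilde\epsilon \coloneqq \Tr[\Sigma] \leq \epsilon^2$ gives $D_{\max}^{\sqrt{\tilde\epsilon}}(\rho\|\sigma) \leq \lambda - \log(1-\tilde\epsilon)$. Using the monotonicity of the smoothed max-divergence in its smoothing parameter (larger smoothing ball, smaller value), together with $\sqrt{\tilde\epsilon}\leq \epsilon$ and $-\log(1-\tilde\epsilon)\leq -\log(1-\epsilon^2)$, one concludes
\[
  D_{\max}^{\epsilon}(\rho\|\sigma) \;\leq\; D_{\max}^{\sqrt{\tilde\epsilon}}(\rho\|\sigma) \;\leq\; \mathbb{D}_\alpha(\rho\|\sigma) + \frac{2}{\alpha-1}\log\!\left(\frac{1}{\epsilon}\right) + \log\!\left(\frac{1}{1-\epsilon^2}\right),
\]
which is the desired inequality.

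The one subtlety, and the only thing that looks like an obstacle, is that Lemma~\ref{lem:3.1} requires $\lambda \leq D_{\max}(\rho\|\sigma)$. If the choice above exceeds $D_{\max}(\rho\|\sigma)$ (in particular, whenever $D_{\max}$ is finite and small), then $\rho \leq 2^{\lambda}\sigma$ and $\Sigma = 0$, so one simply notes that $D_{\max}^\epsilon(\rho\|\sigma) \leq D_{\max}(\rho\|\sigma) \leq \lambda$ already yields the bound with room to spare. Otherwise, the argument above applies directly. Since the only property of $\mathbb{D}_\alpha$ used is the DPI plus agreement with the classical Rényi divergence on commuting states, the statement holds for the Petz–Rényi, the sandwiched Rényi, the geometric Rényi, and any other valid quantum Rényi divergence in this range of $\alpha$.
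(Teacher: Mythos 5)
Your proof is correct and follows essentially the same route as the paper's: both combine \autoref{lem:3.1} with a data-processing argument for the measurement induced by the projector $\{\rho\geq 2^\lambda\sigma\}$ to relate $\Tr[\Sigma]$, $\lambda$, and $\mathbb{D}_\alpha$. The only (harmless, arguably cleaner) differences are that you use the coarser binary POVM $\{P_+,P_-\}$ where the paper pinches in the full eigenbasis of $\rho-2^\lambda\sigma$, and that you fix $\lambda$ explicitly and deduce $\Tr[\Sigma]\leq\epsilon^2$, which lets you treat arbitrary $\epsilon$ via monotonicity of the smoothing parameter rather than via the paper's final substitution of $\epsilon$ by $\epsilon^2$.
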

\begin{proof}
The proof is the same as in \cite[Proposition 6.22]{art:Tomamichel_book}. Fix some $\lambda$ and let $\epsilon=\Tr[\Sigma]$ be as in \autoref{lem:3.1} above w.r.t this $\lambda$.
Denote the spectral measure of the compact operator $(\rho-2^\lambda\sigma)$ with $\{\ketbra{\nu_i}{\nu_i}\}_{i\in S}$ and set $S_+:=\{i\in S| \langle\nu_i|\rho-2^\lambda\sigma|\nu_i\rangle\geq 0\}$. Set $p_i:=\langle\nu_i|\rho|\nu_i\rangle$ and $q_i:=\langle\nu_i|\sigma|\nu_i\rangle$, then $P:=\{p_i\}_{i\in S}$ and $Q:=\{q_i\}_{i\in S}$ are  probability measures on $S$. Now if $i\in S_+$, then $p_i-2^\lambda q_i\geq 0 \Leftrightarrow \frac{p_i}{q_i}2^{-\lambda}\geq 1$. Let $\alpha\in(1,\infty)$. Now we have that
\begin{align*}
    \epsilon = \Tr[\Sigma] &= \sum_{i\in S_+}p_i-2^\lambda q_i\leq \sum_{i\in S_+}p_i \leq \sum_{i\in S_+}p_i\left(\frac{p_i}{q_i}2^{-\lambda}\right)^{\alpha-1} \\ &= 2^{\lambda(1-\alpha)}\sum_{i\in S_+}p_i^\alpha q_i^{1-\alpha} \leq 2^{\lambda(1-\alpha)}\sum_{i\in S}p_i^\alpha q_i^{1-\alpha}
\end{align*}
\begin{align*}
 &\implies  \log\epsilon \leq -\lambda(\alpha-1)+\log\left(\sum_{i\in S}p_i^\alpha q_i^{1-\alpha}\right) \\
 &\Longleftrightarrow \lambda(\alpha-1)\leq \log\left(\sum_{i\in S}p_i^\alpha q_i^{1-\alpha}\right) - \log\epsilon \\
 &\Longleftrightarrow \lambda \leq \frac{1}{\alpha-1}\log\left(\sum_{i\in S}p_i^\alpha q_i^{1-\alpha}\right)+\frac{1}{\alpha-1}\log\left(\frac{1}{\epsilon}\right)\leq D_\alpha(P\|Q)+\frac{1}{\alpha-1}\log\left(\frac{1}{\epsilon}\right).
\end{align*}
Where $D_\alpha(P\|Q)$ is the classical $\alpha$-Renyi divergence. It is upper bounded by any quantum $\alpha$-\renyi divergence $\mathbb{D}_\alpha(\rho\|\sigma)$, via the data processing inequality applied to the channel that implements a measurement in the $\{\ketbra{\nu_i}{\nu_i}\}_i$ basis. With the previous \autoref{lem:3.1} we thus have
\begin{align*}
    D_{\max}^{\sqrt{\epsilon}}(\rho\|\sigma)\leq \lambda+\log\left(\frac{1}{1-\epsilon}\right) \leq \mathbb{D}_\alpha(\rho\|\sigma) +\log\left(\frac{1}{1-\epsilon}\right)+\frac{1}{\alpha-1}\log\left(\frac{1}{\epsilon}\right).
\end{align*}
The result then follows when replacing $\epsilon$ with $\epsilon^2$.
\end{proof}

\begin{lemma}[\textit{Proposition 4} from \cite{wang_resource_states_2019}]
\label{lemma:SmoothedDmaxLowerBoundWW}
For any two states $\rho,\sigma \in\mathcal{D}(\mathcal{H})$ on a separable Hilbert space $\mathcal{H}$ and any $\epsilon\in(0,1),\alpha\in[0,1)$ it holds that
\begin{equation}
    D_{\max}^\epsilon(\rho\|\sigma) \geq D_\alpha(\rho\|\sigma) + \frac{2}{\alpha-1}\log\left(\frac{1}{1-\epsilon}\right).
    \label{equ:ImpINeq3REP}
\end{equation}
\begin{proof} 
The proof of this statement in the finite dimensional case from \cite{wang_resource_states_2019} also holds in the infinite dimensional case. For the convenience of the reader it is repeated here. \\
\textit{Claim 1:} If $\rho_0,\rho_1,\sigma\in\mathcal{D}(\mathcal{H})$ are s.t. supp($\rho_{1})\subset\text{supp}(\sigma)$, $\alpha\in(0,1)$, and $\beta:=2-\alpha\in(1,2)$,
then it holds that 
\begin{equation}
    D_\beta(\rho_0\|\sigma)-D_\alpha(\rho_1\|\sigma) \geq \frac{2}{1-\alpha}\log\left(1-\frac{1}{2}\|\rho_0-\rho_1\|_1\right)\geq \frac{2}{1-\alpha}\log(1-P(\rho_0,\rho_1)).
\end{equation}
\textit{Proof of claim 1}: By definition $\alpha-1=1-\beta$, so that
\begin{align*}
    D_\beta(\rho_0\|\sigma)-D_\alpha(\rho_1\|\sigma) &= \frac{1}{\beta-1}\log\Tr[\rho_0^\beta\sigma^{1-\beta}]-\frac{1}{\alpha-1}\log\Tr[\rho_0^\alpha\sigma^{1-\alpha}] \\
    &=\frac{1}{\beta-1}\log\{\Tr[\rho_0^\beta\sigma^{1-\beta}]\Tr[\rho_0^\alpha\sigma^{1-\alpha}]\} \\
    &=\frac{1}{\beta-1}\log\{\|\rho_0^{\frac{\beta}{2}}\sigma^{\frac{1-\beta}{2}}\|^2_2\|\sigma^{\frac{1-\alpha}{2}}\rho_1^{\frac{\alpha}{2}}\|^2_2\} \\ &\overset{C.S.}{\geq} \frac{1}{\beta-1}\log\{\|\rho_0^{\frac{\beta}{2}}\sigma^{\frac{1-\beta}{2}}\sigma^{\frac{1-\alpha}{2}}\rho_1^{\frac{\alpha}{2}}\|^2_1\} = \frac{2}{\beta-1}\log\|\rho_0^{\frac{\beta}{2}}\rho_1^{1-\frac{\beta}{2}}\|_1 \\ &\geq \frac{2}{\beta-1}\log\Tr[\rho_0^{\frac{\beta}{2}}\rho_1^{1-\frac{\beta}{2}}] \\ &\overset{\text{\cite{art:DiscriminatingStates:QCB}}}{\geq} \frac{2}{\beta-1}\log\Tr[\frac{1}{2}(\rho_0+\rho_1-|\rho_0-\rho_1|)] \\ &= \frac{2}{\beta-1}\log\Tr[1-\frac{1}{2}\|\rho_0-\rho_1\|_1] \geq \frac{2}{\beta-1}\log\Tr[1-P(\rho_0,\rho_1)].
\end{align*}
The first inequality follows from Cauchy-Schwarz inequality (C.S.), the second from \cite[Theorem 1]{art:DiscriminatingStates:QCB}, which is applicable since $\frac{\beta}{2}\in(\frac{1}{2},1)\subset[0,1]$ and which holds in infinite dimensions. The last inequality follows by the Fuchs-van de Graaf inequality, \autoref{lem:FuchsVanDeGraaf}. %$\frac{1}{2}\|\rho_0-\rho_1\|_1\leq P(\rho_0,\rho_1)$. \\
\\ \textit{Claim 2:} For any $\rho,\sigma\in\mathcal{D}(\mathcal{H})$ it holds that $D_{\max}(\rho\|\sigma)\geq D_2(\rho\|\sigma)$. \\
\textit{Proof of Claim 2}: 
\begin{align*}
   D_2(\rho\|\sigma)&=\log\Tr[\rho^2\sigma^{-1}]=\log\Tr[\rho\rho^{\frac{1}{2}}\sigma^{-1}\rho^{\frac{1}{2}}] \\ &\leq\log\sup_{\|\tau\|_1\leq1}\Tr[\tau\rho^{\frac{1}{2}}\sigma^{-1}\rho^{\frac{1}{2}}] = \log\|\rho^{\frac{1}{2}}\sigma^{-1}\rho^{\frac{1}{2}}\| = D_{\max}(\rho\|\sigma)
\end{align*}
Now to proof the statement of the Lemma, fix $\alpha\in(0,1)$ and $\tilde{\rho}\in B^\epsilon(\rho)$, then for $\beta=2-\alpha\in(1,2)$ we have, by the second claim and the monotonicity of $\alpha\mapsto D_\alpha(\rho\|\sigma)$ on (0,1), that
\begin{align*}
    D_{\max}(\tilde{\rho}\|\sigma) &\geq D_\beta(\tilde{\rho}\|\sigma) \overset{\text{claim 1}}{\geq} D_\alpha(\rho\|\sigma) + \frac{2}{1-\alpha}\log(1-P(\rho,\tilde{\rho})) \\ &\geq D_\alpha(\rho\|\sigma) + \frac{2}{\alpha-1}\log\left(\frac{1}{1-\epsilon}\right).
\end{align*}
Optimizing over all $\tilde{\rho}\in B^\epsilon(\rho)$ yields the desired statement for $\alpha\in(0,1)$. Taking the limit $\alpha\to0$ gives the statement for $\alpha=0$.
\end{proof}
\end{lemma}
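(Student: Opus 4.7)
The plan is to establish the bound pointwise in the smoothing parameter: for any $\tilde\rho\in B^\epsilon(\rho)$ I will show
\[
D_{\max}(\tilde\rho\|\sigma)\geq D_\alpha(\rho\|\sigma)+\frac{2}{\alpha-1}\log\frac{1}{1-P(\rho,\tilde\rho)},
\]
and then take the infimum over $\tilde\rho$, using $P(\rho,\tilde\rho)\leq\epsilon$ together with the fact that $(\alpha-1)^{-1}<0$ makes the correction monotone in $P$. Whenever $D_{\max}(\tilde\rho\|\sigma)=\infty$ or $D_\alpha(\rho\|\sigma)=\infty$ the bound is trivial, so in the argument I may assume the appropriate support conditions hold and all expressions below are finite.

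The proof then proceeds in three links. \textbf{(i)} Drop from $D_{\max}$ to a Petz-R\'enyi divergence of order $\beta:=2-\alpha\in(1,2)$, using monotonicity of $\alpha\mapsto D_\alpha$ together with the bound $D_{\max}(\tilde\rho\|\sigma)\geq D_\beta(\tilde\rho\|\sigma)$ (verifiable spectrally via $\Delta_{\tilde\rho,\sigma}$ along the lines of Claim~2 in the setting of $\beta = 2$). \textbf{(ii)} A Cauchy-Schwarz bridge in the Hilbert-Schmidt inner product, applied to the two vectors $\Delta_{\tilde\rho,\sigma}^{(\beta-1)/2}\sqrt{\tilde\rho}$ and $\Delta_{\rho,\sigma}^{(\alpha-1)/2}\sqrt{\rho}$ in $\cB_2(\HS)$. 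The key point is that $\alpha+\beta=2$ causes the intermediate $\sigma$ factors to cancel via the defining relation $\Delta_{\rho,\sigma}(A\sigma)=\rho A\{\sigma>0\}$, producing the operator-theoretic analogue of
\[
\langle\sqrt{\tilde\rho}|\Delta_{\tilde\rho,\sigma}^{\beta-1}|\sqrt{\tilde\rho}\rangle\cdot\langle\sqrt{\rho}|\Delta_{\rho,\sigma}^{\alpha-1}|\sqrt{\rho}\rangle\geq |\Tr[\tilde\rho^{\beta/2}\rho^{\alpha/2}]|^2.
\]
Taking logarithms, dividing by $\beta-1$, and using $\alpha-1=-(\beta-1)$ yields $D_\beta(\tilde\rho\|\sigma)-D_\alpha(\rho\|\sigma)\geq \frac{2}{\beta-1}\log\Tr[\tilde\rho^{\beta/2}\rho^{\alpha/2}]$. \textbf{(iii)} Bound $\Tr[\tilde\rho^{\beta/2}\rho^{\alpha/2}]$ from below via the Audenaert trace inequality $\Tr[A^sB^{1-s}]\geq \tfrac{1}{2}\Tr[A+B-|A-B|]$ for $s\in[0,1]$ (applied with $s=\beta/2\in(1/2,1)$, $A=\tilde\rho$, $B=\rho$), giving $\Tr[\tilde\rho^{\beta/2}\rho^{\alpha/2}]\geq 1-\tfrac12\|\tilde\rho-\rho\|_1$, followed by the Fuchs-Van de Graaf bound \autoref{lem:FuchsVanDeGraaf} to convert trace distance to sine distance, so that the right-hand side is $\geq 1-P(\rho,\tilde\rho)\geq 1-\epsilon$. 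Chaining the three links and optimising over $\tilde\rho\in B^\epsilon(\rho)$ produces the desired inequality for $\alpha\in(0,1)$, and the endpoint $\alpha=0$ follows by letting $\alpha\searrow 0$ and invoking continuity of $\alpha\mapsto D_\alpha$.

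The main obstacle is ensuring that the finite-dimensional trace manipulations transfer cleanly to the infinite-dimensional modular-operator framework of Section~2.2.3, since $\sigma^{(1-\beta)/2}$ is in general unbounded and all Petz-R\'enyi expressions must be read through $\Delta_{\rho,\sigma}$. For the Cauchy-Schwarz step this is mostly bookkeeping: the finiteness assumptions on the two divergences place $\sqrt{\tilde\rho}$ and $\sqrt{\rho}$ in the domains of $\Delta_{\tilde\rho,\sigma}^{(\beta-1)/2}$ and $\Delta_{\rho,\sigma}^{(\alpha-1)/2}$ respectively, so both sides of the inner product live in $\cB_2(\HS)$, and the $\sigma$ cancellation is made rigorous by applying $\Delta_{\rho,\sigma}(A\sigma)=\rho A\{\sigma>0\}$ on a dense subspace. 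Audenaert's inequality itself extends from trace-class positive operators by simultaneously diagonalising $A+B$ and reducing to the scalar inequality $a^sb^{1-s}\geq \tfrac12(a+b-|a-b|)$ (or, alternatively, by finite-rank truncation and lower semi-continuity of the trace). Once these two points are dispatched, the three links close and deliver the inequality.
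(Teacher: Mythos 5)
Your proposal is correct and follows essentially the same route as the paper's proof: the reduction $D_{\max}\geq D_2\geq D_{2-\alpha}$, the Cauchy--Schwarz step exploiting $\alpha+\beta=2$ to cancel the $\sigma$ factors, the Audenaert trace inequality, and Fuchs--Van de Graaf, followed by optimisation over $\tilde\rho$ and the limit $\alpha\searrow 0$ for the endpoint. The only difference is presentational: you phrase the Cauchy--Schwarz bridge explicitly through the relative modular operator $\Delta_{\rho,\sigma}$, which makes the infinite-dimensional bookkeeping more visible than the paper's formal use of powers of $\sigma$, but it is the same inequality.
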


\subsection{Quantum Stein's lemma for channels in infinite dimensions (\autoref{thm:steins})} \label{sec:InfQSL}
One of the most fundamental questions of hypothesis testing is to study asymptotic error decay rates. In asymmetric settings these asymptotic error decay rates often turn out to be expressible in terms of relative entropies, and the statements proving such expressions are usually called Stein's lemmas, in honour of Stein's original result for classical binary asymmetric hypothesis testing.

The parallel quantum Stein's lemma for channels between \emph{finite dimensional} Hilbert spaces, recently proven by Wang and Wilde \cite[Theorem 3]{wang_resource_2019}, is the statement that in the asymmetric setting the optimal asymptotic decay rate of discriminating between two channels $\mathcal{E},\mathcal{F}$ using a parallel strategy\footnote{It also holds for adaptive strategies, since there the optimal decay rate is given by the amortized channel divergence, which was recently shown to be equal to the regularized in \cite{fang_chain_2020}.} in the limit of vanishing type I error $\epsilon\to0$ is given by the regularized Umegaki-channel divergence between these two channels, i.e.
\begin{equation}
    \lim_{\epsilon\to0}\lim_{n\to\infty}\frac{1}{n}D_H^\epsilon(\mathcal{E}^{\otimes n}\|\mathcal{F}^{\otimes n})=D^{\text{reg}}(\mathcal{E}\|\mathcal{F}) = \lim_{n\to\infty}\frac{1}{n}D(\mathcal{E}^{\otimes n}\|\mathcal{F}^{\otimes n}).
    \label{equ:QuantumChannelStein}
\end{equation}
It was proven by separately upper \cite{wang_resource_2019,wang_one-shot_2012} and lower bounding \cite{wang_resource_2019,art:SecondOrderAsymptot,art:AHirarchyOfInformation6574274} the l.h.s of \eqref{equ:QuantumChannelStein} by the regularized channel divergence. The upper bound relies on inequality \eqref{equ:ImpIneq4} which holds in infinite dimensional Hilbert spaces. Its proof is repeated for the convenience of the reader in \autoref{app:CQSLUpperBound}. The lower bound in \cite{wang_resource_2019}, however, does not evidently hold in the infinite dimensional case. The following proof we give based on inequalities \eqref{equ:ImpIneq1NEW} and \eqref{equ:ImpIneq3}, though, does.
\begin{proposition} \label{prop:parallel_stein}
Let $\mathcal{E},\mathcal{F}:\P(\mathcal{H})\to\P(\mathcal{K})$ be two quantum channels (CPTP maps) where $\mathcal{H}, \mathcal{K}$ are arbitrary separable Hilbert spaces. Then
\begin{equation}
    \lim_{\epsilon\to0}\liminf_{n\to\infty}\frac{1}{n}D_H^\epsilon(\mathcal{E}^{\otimes n}\|\mathcal{F}^{\otimes n})\geq D^\mathrm{reg}(\mathcal{E}\|\mathcal{F}).
\end{equation}
\end{proposition}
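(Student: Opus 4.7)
The plan is to prove the stronger claim that $\liminf_{n\to\infty}\frac{1}{n}D_H^\epsilon(\mathcal{E}^{\otimes n}\|\mathcal{F}^{\otimes n}) \geq D^\mathrm{reg}(\mathcal{E}\|\mathcal{F})$ holds for every fixed $\epsilon\in(0,1)$, from which the proposition follows upon taking $\epsilon\to 0^+$. This will be achieved via a block-coding reduction: for any $k\in\mathbb{N}$ and any $\rho_k\in\mathcal{D}(\mathcal{H}_R\otimes\mathcal{H}_A^{\otimes k})$, setting $\rho' := (\id_R\otimes\mathcal{E}^{\otimes k})(\rho_k)$ and $\sigma' := (\id_R\otimes\mathcal{F}^{\otimes k})(\rho_k)$, I will prove the one-block bound
\begin{equation*}
\liminf_{n\to\infty}\tfrac{1}{n}D_H^\epsilon(\mathcal{E}^{\otimes n}\|\mathcal{F}^{\otimes n}) \geq \tfrac{1}{k}\,D(\rho'\|\sigma'),
\end{equation*}
after which taking the supremum first over $\rho_k$ and then over $k$ recovers $D^\mathrm{reg}(\mathcal{E}\|\mathcal{F})$.

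For the one-block bound, writing $n=mk+r$ with $0\leq r<k$ and using the product input $\rho_k^{\otimes m}\otimes\tau$ (with arbitrary $\tau$) as an admissible choice in the channel-level $D_H^\epsilon$, the DPI of the hypothesis testing divergence (applied to the channel that discards the outputs of the final $r$ uses) yields $D_H^\epsilon(\mathcal{E}^{\otimes n}\|\mathcal{F}^{\otimes n}) \geq D_H^\epsilon(\rho'^{\otimes m}\|\sigma'^{\otimes m})$. Chaining Corollary~\ref{COR:1} (with smoothing parameter $\sqrt{1-\epsilon}$) and Lemma~\ref{lemma:SmoothedDmaxLowerBoundWW} (at any $\alpha\in(0,1)$), and invoking the additivity $D_\alpha(\rho'^{\otimes m}\|\sigma'^{\otimes m}) = m\,D_\alpha(\rho'\|\sigma')$ (which in infinite dimensions follows from the tensor factorization $\Delta_{\rho_1\otimes\rho_2,\sigma_1\otimes\sigma_2}=\Delta_{\rho_1,\sigma_1}\otimes\Delta_{\rho_2,\sigma_2}$ of the relative modular operator combined with the spectral definition of $D_\alpha$), gives
\begin{equation*}
D_H^\epsilon(\rho'^{\otimes m}\|\sigma'^{\otimes m}) \geq m\,D_\alpha(\rho'\|\sigma') + c(\alpha,\epsilon),
\end{equation*}
where $c(\alpha,\epsilon)$ is a finite $m$-independent constant.

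Dividing by $n$, using $m/n\to 1/k$ and $c(\alpha,\epsilon)/n\to 0$, and taking $\liminf_{n\to\infty}$ produces
$\liminf_n \tfrac{1}{n}D_H^\epsilon(\mathcal{E}^{\otimes n}\|\mathcal{F}^{\otimes n}) \geq \tfrac{1}{k}D_\alpha(\rho'\|\sigma')$. Letting $\alpha\nearrow 1$ and invoking the (unconditional) monotone convergence $D_\alpha(\rho'\|\sigma')\nearrow D(\rho'\|\sigma')$ stated in the preliminaries yields the one-block bound. Taking suprema over $\rho_k$ and $k$ then gives the claim for every $\epsilon\in(0,1)$, and letting $\epsilon\to 0^+$ finishes the proof.

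The main technical point to justify carefully is the additivity of the Petz-Rényi divergence on tensor products in infinite dimensions; while this is standard in finite dimensions, here it must be derived from the tensor factorization of the relative modular operator together with the spectral formula $D_\alpha(\rho\|\sigma) = \tfrac{1}{\alpha-1}\log\langle\sqrt{\rho}|\Delta_{\rho,\sigma}^{\alpha-1}|\sqrt{\rho}\rangle$. A secondary subtlety is that the constant $c(\alpha,\epsilon)$ diverges in both the limits $\alpha\nearrow 1$ and $\epsilon\to 0^+$, so the order of limits matters: one must first take $n\to\infty$ to absorb $c$ into the vanishing $1/n$ factor, then let $\alpha\to 1$, and only afterwards $\epsilon\to 0$.
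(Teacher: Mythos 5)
Your proposal is correct and follows essentially the same route as the paper: lower-bounding $D_H^\epsilon$ by $D_\alpha$ via Corollary~\ref{COR:1} chained with Lemma~\ref{lemma:SmoothedDmaxLowerBoundWW}, exploiting additivity of the Petz--R\'enyi divergence on blocks of $m$ channel uses, and then taking the limits in the order $n\to\infty$, $\alpha\nearrow 1$, supremum over block inputs, $\epsilon\to 0$. If anything, you are slightly more careful than the paper on two points it leaves implicit --- the reduction from general $n$ to multiples of the block size, and the justification of $D_\alpha$-additivity in infinite dimensions via the tensor factorization of the relative modular operator.
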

\begin{proof}
Combining inequalities \eqref{equ:ImpIneq1NEW} and \eqref{equ:ImpINeq3REP} and rearranging we get
\begin{align*}
    D_H^\epsilon(\rho\|\sigma) \geq D_\alpha(\rho\|\sigma)-\log\left(\frac{1}{\epsilon}\right)+\frac{2}{\alpha-1}\log\left(\frac{1}{1-\sqrt{1-\epsilon}}\right)
\end{align*} for any $\alpha\in(0,1)$ and for any states $\rho,\sigma\in\mathcal{D}(\mathcal{H})$.
Since the $\alpha$-Petz-\renyi divergence is additive, i.e. $D_\alpha(\rho\otimes\omega\|\sigma\otimes\tau)= D_\alpha(\rho\|\sigma)+D_\alpha(\omega\|\tau)$ it follows that
$\frac{1}{n}D_H^\epsilon(\rho^{\otimes n}\|\sigma^{\otimes n})\geq D_\alpha(\rho\|\sigma)+\frac{1}{n}\left(\frac{2}{\alpha-1}\log\left(\frac{1}{1-\sqrt{1-\epsilon}}\right)-\log\left(\frac{1}{\epsilon}\right)\right)$ holds for any $\alpha\in(0,1)$.
Thus we have
\begin{align*}
    &\frac{1}{nm}D_H^\epsilon(\mathcal{E}^{\otimes nm}\|\mathcal{F}^{\otimes nm}) = \frac{1}{nm}\sup_{\rho_{RA^{nm}}}D_H^\epsilon(\mathcal{E}^{\otimes nm}(\rho_{RA^{nm}})\|\mathcal{F}^{\otimes nm}(\rho_{RA^{nm}})) \\
     \geq &\frac{1}{nm}\sup_{\rho_{SA^{m}}}D_H^\epsilon((\mathcal{E}^{\otimes m}(\rho_{SA^{m}}))^{\otimes n}\|(\mathcal{F}^{\otimes m}(\rho_{SA^{m}}))^{\otimes n}) \\ \geq &\frac{1}{nm}\sup_{\rho_{SA^n}}\left\{nD_\alpha(\mathcal{E}^{\otimes m}(\rho_{SA^{m}})\|\mathcal{F}^{\otimes m}(\rho_{SA^{m}}))-\log\left(\frac{1}{\epsilon}\right)+\frac{2}{\alpha-1}\log\left(\frac{1}{1-\sqrt{1-\epsilon}}\right)\right\} \\
     = &\frac{1}{m}D_\alpha(\mathcal{E}^{\otimes m}\|\mathcal{F}^{\otimes m})+\frac{1}{nm}\left(\frac{2}{\alpha-1}\log\left(\frac{1}{1-\sqrt{1-\epsilon}}\right)-\log\left(\frac{1}{\epsilon}\right)\right).
\end{align*}
Here the first inequality follows since the supremum in the second line is over a smaller set of states and the second inequality follows from above. 
Taking the $\liminf_{nm\to\infty}$ on both sides yields 
\begin{align*}
    \liminf_{nm\to\infty}\frac{1}{nm}D_H^\epsilon(\mathcal{E}^{\otimes nm}\|\mathcal{F}^{\otimes nm}) \geq D^\text{reg}_\alpha(\mathcal{E}\|\mathcal{F})
\end{align*} for any $\alpha\in(0,1)$.
Now, since $\lim_{\alpha \uparrow 1}D_\alpha(\rho\|\sigma)=\sup_{\alpha\in(0,1)}D_\alpha(\rho\|\sigma) = D(\rho\|\sigma)$ holds also in infinite dimensions (see e.g. \cite{art:SandwichedRenyInInfDim_RenyDivergencesAsWeightedNonCommutative_Berta_2018}), it follows that
\begin{align*}
    \lim_{\alpha\uparrow1}D_\alpha^\text{reg}(\mathcal{E}\|\mathcal{F})&=\sup_{\alpha\in(0,1)}D_\alpha^\text{reg}(\mathcal{E}\|\mathcal{F})= \sup_{\alpha\in(0,1)}\sup_{m\in\mathbb{N}}\sup_{\rho_{SA^m}}\frac{1}{m}D_\alpha(\mathcal{E}^{\otimes m}(\rho_{SA^m})\|\mathcal{F}^{\otimes m}(\rho_{SA^m})) \\ &= \sup_{m\in\mathbb{N}}\sup_{\rho_{SA^m}}\sup_{\alpha\in(0,1)}\frac{1}{m}D_\alpha(\mathcal{E}^{\otimes m}(\rho_{SA^m})\|\mathcal{F}^{\otimes m}(\rho_{SA^m})) = \sup_{m\in\mathbb{N}}\frac{1}{m}D(\mathcal{E}^{\otimes m}\|\mathcal{F}^{\otimes m}) \\ &= D^\text{reg}(\mathcal{E}\|\mathcal{F}).
\end{align*}
Thus we have the desired result
\begin{align*}
    \liminf_{nm\to\infty}\frac{1}{nm}D_H^\epsilon(\mathcal{E}^{\otimes nm}\|\mathcal{F}^{\otimes nm}) \geq \sup_{\alpha\in(0,1)}D^\text{reg}_\alpha(\mathcal{E}\|\mathcal{F}) = D^\text{reg}(\mathcal{E}\|\mathcal{F}).
\end{align*}
\end{proof}
Therefore quantum Stein's lemma for parallel strategies \eqref{equ:QuantumChannelStein} holds also in infinite dimensions. \\
The statement of the adaptive quantum Stein's lemma for channels is analogous to the parallel one, just employing an adaptive discrimination strategy instead of a parallel one. In the finite dimensional setting it was proved by combining a statement that the asymptotic optimal adaptive scaling is given by the amortized relative channel divergence \cite{wang_resource_2019}, instead the regularized one, and a chain rule for the quantum relative entropy \cite{fang_chain_2020} which implies that the amortized and regularized quantum channel divergences are indeed equivalent. Here we use our one-shot result \autoref{thm:main} to show this equivalence directly and without going through the amortized channel divergence. Note that as mentioned and discussed in \autoref{sec:finiteness_cond}, in infinite dimensions, we are not able to show this equivalence for all channels, but only under the condition that the geometric \renyi divergence between the channels is finite for some $\alpha > 1$. 

\begin{proposition}[Quantum Stein's Lemma for adaptive strategies in infinite dimensions]\label{prop:adaptive_stein}
Let $\mathcal{E},\mathcal{F}:\P(\mathcal{H})\to\P(\mathcal{K})$ be two quantum channels (CPTP maps) where $\mathcal{H}, \mathcal{K}$ are arbitrary separable Hilbert spaces. If there exists $\alpha>1$, s.t. $\Dg(\E\|\F)<\infty$, then
\begin{equation}
    \lim_{\epsilon\to0}\lim_{n\to\infty}\sup_{\rho_1=\sigma_1; \{\Lambda_i\}_i}\frac{1}{n}D_H^\epsilon(\mathcal{E}(\rho_n)\|\mathcal{F}(\sigma_n))= D^{\mathrm{reg}}(\mathcal{E}\|\mathcal{F}).
\end{equation}
\end{proposition}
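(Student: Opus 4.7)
The plan is to prove the equality by establishing each direction. The lower bound $\ge D^{\mathrm{reg}}(\E\|\F)$ is immediate: any parallel $n$-shot strategy on an input $\nu \in \DM[\HS_R \otimes \HS_A^{\otimes n}]$ can be realized as an adaptive one by taking $\rho_1 = \sigma_1$ to contain $\nu$ in a sufficiently large memory register and letting each $\Lambda_i$ swap in the $i$-th pre-prepared input while stashing the previous output. Consequently
\begin{equation*}
\sup_{\rho_1 = \sigma_1, \{\Lambda_i\}} \frac{1}{n} D_H^\epsilon(\E(\rho_n)\|\F(\sigma_n)) \;\ge\; \frac{1}{n} D_H^\epsilon(\E^{\otimes n}\|\F^{\otimes n}),
\end{equation*}
and Proposition \ref{prop:parallel_stein} applied to the right-hand side yields the lower bound after taking $n \to \infty$ and $\epsilon \to 0$.

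For the reverse direction I would invoke the one-shot result \autoref{thm:main}, the whole point of which is to make this conversion quantitative. Fix $\alpha_a, \alpha_p \in (0,1)$ and $\mu > 0$, and apply the theorem to an arbitrary $n$-shot adaptive strategy $(\rho_1, \{\Lambda_i\}_{i=2}^n)$. For each $m$ this furnishes a parallel input $\nu_m$ with
\begin{equation*}
\frac{1}{n} D_H^{\alpha_a}(\E(\rho_n)\|\F(\sigma_n)) \;\le\; \frac{1}{1-\alpha_a} \frac{1}{m} D_H^{\alpha_p}(\E^{\otimes m}\|\F^{\otimes m}) + f(\alpha_a, \alpha_p, m, n, \mu),
\end{equation*}
where I relaxed the middle term from its value at $\nu_m$ to the full channel divergence. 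Since $f$ does not depend on the adaptive strategy, the supremum over adaptive protocols may be taken on the left. Sending $m \to \infty$ kills the $Cn/\sqrt{m}$ and $O(1/m)$ contributions in $f$; sending $n \to \infty$ then removes the residual $h(\alpha_a)/[n(1-\alpha_a)]$ piece. The inequality \eqref{equ:ImpIneq4} applied to $\E^{\otimes m}(\nu)\|\F^{\otimes m}(\nu)$ and followed by a supremum over $\nu$ gives $\limsup_m \tfrac{1}{m} D_H^{\alpha_p}(\E^{\otimes m}\|\F^{\otimes m}) \le D^{\mathrm{reg}}(\E\|\F)/(1-\alpha_p)$. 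Finally letting $\alpha_p \to 0$ and then $\alpha_a \to 0$ (which here plays the role of $\epsilon \to 0$) produces the matching upper bound $D^{\mathrm{reg}}(\E\|\F)$.

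The main subtlety is to keep the iterated limits coherent, i.e.\ to verify that the inequality from \autoref{thm:main} survives both the supremum over adaptive strategies and the successive limits $m \to \infty$, $n \to \infty$, $\alpha_p \to 0$, $\alpha_a \to 0$. This works because $f$ decomposes additively into four pieces, each vanishing under a single limit, and because the bound on the middle term on the right is strategy-independent. The finiteness hypothesis $\Dg(\E\|\F) < \infty$ enters only through the constant $C$ appearing in $f$, which is precisely the hypothesis of \autoref{thm:main}; without it the term $Cn/\sqrt{m}$ cannot be controlled and the argument breaks. In contrast to the finite-dimensional proof in \cite{wang_resource_2019, fang_chain_2020}, this route avoids the amortized channel divergence entirely, replacing it by the quantitative one-shot conversion.
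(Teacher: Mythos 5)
Your proposal is correct and follows essentially the same route as the paper: the lower bound by embedding parallel strategies into adaptive ones and invoking \autoref{prop:parallel_stein}, and the upper bound by applying \autoref{thm:main}, taking the supremum over parallel inputs, sending $m \to \infty$, $n \to \infty$ and the error parameters to zero, and identifying the resulting parallel quantity with $D^{\mathrm{reg}}(\E\|\F)$ via inequality \eqref{equ:ImpIneq4} (which is exactly the upper-bound half of the parallel Stein's lemma proved in \autoref{app:CQSLUpperBound}). The only cosmetic difference is the order of the final two limits $\alpha_p \to 0$ and $\alpha_a \to 0$, which is immaterial since each appears on only one side of the inequality after the $m$ and $n$ limits.
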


\begin{proof}
Since adaptive strategies are more general than parallel ones, it follows that
\begin{align}
    \sup_{\rho_1=\sigma_1; \{\Lambda_i\}_i}\frac{1}{n}D_H^\epsilon(\mathcal{E}(\rho_n)\|\mathcal{F}(\sigma_n)) \geq \sup_{\rho} \frac{1}{n}D_H^\epsilon(\mathcal{E}^{\otimes n}(\rho)\|\mathcal{F}^{\otimes n}(\rho)) =  \frac{1}{n}D_H^\epsilon(\mathcal{E}^{\otimes n}\|\mathcal{F}^{\otimes n}).
\end{align}
which implies
\begin{equation}
	D^\text{reg}(\mathcal{E}\|\mathcal{F}) \leq \lim_{\epsilon\to0}\liminf_{n\to\infty}\sup_{\rho_1=\sigma_1; \{\Lambda_i\}_i}\frac{1}{n}D_H^\epsilon(\mathcal{E}(\rho_n)\|\mathcal{F}(\sigma_n))\,.
\end{equation}
If $\widehat{D}_{\alpha}(\mathcal{E}\|\mathcal{F})<\infty$ for some $\alpha>1$, then \autoref{thm:main} holds.
	In \autoref{thm:main}, taking the supremum over parallel input states $\nu_m$, and then the limits $m \to \infty$, $n \to \infty$, $\alpha_a \to 0$, $\alpha_p \to 0$ in this order, we find that:
\begin{equation}
	\lim_{\alpha_a \to 0} \limsup_{n \to \infty} {1 \over n} D_H^{\alpha_a}(\E(\rho_n)\|\F(\sigma_n)) \leq \lim_{\alpha_p \to 0} \limsup_{m \to \infty} \sup_{\nu_m} D_H^{\alpha_p}(\E^{\otimes m}(\nu_m)\|\F^{\otimes m}(\nu_m)) = D^{\mathrm{reg}}(\E\|\F)
\end{equation}
which allows us to conclude that the limit exists without requiring $\liminf$ or $\limsup$, and:
\begin{align*}
    D^\text{reg}(\mathcal{E}\|\mathcal{F}) \leq \lim_{\epsilon\to0}\lim_{n\to\infty}\sup_{\rho_1=\sigma_1; \{\Lambda_i\}_i}\frac{1}{n}D_H^\epsilon(\mathcal{E}(\rho_n)\|\mathcal{F}(\sigma_n)) \leq D^\text{reg}(\mathcal{E}\|\mathcal{F}).
\end{align*}

\end{proof}
These two propositions together prove \autoref{thm:steins}.

\subsection{One-shot relation between adaptive and parallel strategies (\autoref{thm:main})}  \label{sec:ProofofMain}

Our way of proving our one-shot result proceeds by making sure that the main technical lemmas from \cite{art:Bjarne} also hold in the infinite dimensional setting. We will be fairly explicit in our proofs here, ocasionally repeating some of the finite-dimensional arguments, and making use of the formalism of quantum-$f$-divergences and the entropic inequalities we proved in \autoref{sec:InfExtensions}.

\subsubsection{Infinite dimensional one-shot version of the chain rule}
\begin{lemma} (An extended infinite dimensional version of \cite[Lemma 8]{art:Bjarne}, see also \cite[Prop. 3.2]{fang_chain_2020}) 
	\label{lemma:BjarneLemma8}
	\label{lemma:BjarneLemma8Extended}
	Let $\mathcal{E},\mathcal{F}:\P(\mathcal{H})\to\P(\mathcal{K})$ be two quantum channels, where $\mathcal{H},\mathcal{K}$ are separable Hilbert spaces and $\rho,\sigma\in\mathcal{D}(\mathcal{H})$ be some states. Then for any $\epsilon,\epsilon^\prime,\mu >0$ and any $m \in \naturals$ there exists a state $\nu\equiv\nu(m, \mu,\rho,\sigma)\in B^\epsilon(\rho)$ s.t. 
	\begin{align}
		&D_{\max}^{\epsilon+\epsilon^\prime}\big((\mathcal{E}(\rho))^{\otimes m}\|(\mathcal{F}(\sigma))^{\otimes m}\big) \leq D_{\max}^{\epsilon^\prime}\big((\mathcal{E}(\nu))^{\otimes m}\|(\mathcal{F}(\nu)\big)^{\otimes m}\big) + m D_{\max}^{\epsilon/m}(\rho\|\sigma) + \mu \,.
	\end{align}
	Additionally it holds that:
	\begin{equation}
		D_{\max}^{\epsilon+\epsilon^\prime}\big((\mathcal{E}(\rho))^{\otimes m}\|\mathcal{F}(\sigma))^{\otimes m}\big) \leq \sup_{\nu \in \DM} D_{\max}^{\epsilon^\prime}\big((\mathcal{E}(\nu))^{\otimes m}\|(\mathcal{F}(\nu))^{\otimes m}\big) + m D_{\max}^{\epsilon/m}(\rho\|\sigma)\,.
	\end{equation}
\end{lemma}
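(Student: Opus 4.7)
The plan is to adapt the finite-dimensional proof of \cite[Lemma 8]{art:Bjarne} in two respects: (i) to handle the $m$-fold tensor products on the right-hand side, I shrink the single-copy smoothing radius to $\epsilon/m$, exploiting the subadditivity of the sine distance under tensor products; and (ii) to handle infinite dimensions, I introduce the explicit $\mu$-slack, since $B^\epsilon(\rho)$ is no longer norm-compact and the infimum defining $D_{\max}^\epsilon$ need not be attained.

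The construction proceeds in two stages. First, I pick $\nu \in B^{\epsilon/m}(\rho) \subset B^\epsilon(\rho)$ so that
\begin{equation*}
D_{\max}(\nu\|\sigma) \leq D_{\max}^{\epsilon/m}(\rho\|\sigma) + \tfrac{\mu}{2m},
\end{equation*}
and then $\tau \in B^{\epsilon'}((\mathcal{E}(\nu))^{\otimes m})$ so that
\begin{equation*}
D_{\max}(\tau\|(\mathcal{F}(\nu))^{\otimes m}) \leq D_{\max}^{\epsilon'}((\mathcal{E}(\nu))^{\otimes m}\|(\mathcal{F}(\nu))^{\otimes m}) + \tfrac{\mu}{2}.
\end{equation*}
Chaining the associated operator inequalities, using positivity of $\mathcal{F}$ and the tensor-power monotonicity $A \leq \alpha B \Rightarrow A^{\otimes m} \leq \alpha^m B^{\otimes m}$ (proved by induction from the identity $\alpha^{m+1} B^{\otimes(m+1)} - A^{\otimes(m+1)} = (\alpha^m B^{\otimes m} - A^{\otimes m}) \otimes A + \alpha^m B^{\otimes m} \otimes (\alpha B - A)$), I obtain
\begin{equation*}
\tau \leq 2^{D_{\max}^{\epsilon'}((\mathcal{E}(\nu))^{\otimes m}\|(\mathcal{F}(\nu))^{\otimes m}) + m D_{\max}^{\epsilon/m}(\rho\|\sigma) + \mu} \, (\mathcal{F}(\sigma))^{\otimes m}.
\end{equation*}

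It remains to verify that $\tau$ is admissible in the optimization defining $D_{\max}^{\epsilon+\epsilon'}((\mathcal{E}(\rho))^{\otimes m}\|(\mathcal{F}(\sigma))^{\otimes m})$, i.e. $P(\tau, (\mathcal{E}(\rho))^{\otimes m}) \leq \epsilon + \epsilon'$. The key auxiliary bound is subadditivity of the sine distance under tensor powers, $P(\omega^{\otimes m}, \omega'^{\otimes m}) \leq m P(\omega, \omega')$, which follows from a telescoping triangle inequality together with the identity $F(\omega \otimes \zeta, \omega' \otimes \zeta) = F(\omega, \omega')$. Combining this with the DPI for $P$ and the triangle inequality gives
\begin{equation*}
P(\tau, (\mathcal{E}(\rho))^{\otimes m}) \leq P(\tau, (\mathcal{E}(\nu))^{\otimes m}) + m P(\mathcal{E}(\nu), \mathcal{E}(\rho)) \leq \epsilon' + m P(\nu, \rho) \leq \epsilon' + \epsilon,
\end{equation*}
which yields the first inequality. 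The supremum version follows by taking $\sup_\nu$ on the right-hand side and letting $\mu \to 0$. The only genuinely new obstacle compared to the finite-dimensional case is non-attainment of the infima, which is handled precisely by the $\mu$-slack; every other ingredient (DPI for $P$, multiplicativity of $F$, positivity of channels, tensor-power monotonicity of operator inequalities) carries over verbatim to separable Hilbert spaces.
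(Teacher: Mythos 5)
Your proposal is correct and follows essentially the same route as the paper's proof: pick a near-optimal smoothing state $\nu$ at radius $\epsilon/m$, a near-optimal $\tau$ for the $m$-fold output, chain the operator inequalities via positivity of $\mathcal{F}^{\otimes m}$ and tensor-power monotonicity, and check admissibility of $\tau$ via the triangle inequality, DPI, and subadditivity of $P$ under tensor powers, with the $\mu$-slack absorbing the non-attainment of the infima. Your bookkeeping of the slack as $\mu/(2m)$ on the first pick is in fact slightly cleaner than the paper's (which nominally accrues $m\mu/2 + \mu/2$), and your explicit inductive proofs of the two auxiliary facts are fine.
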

\begin{proof}
	The proof is essentially the same as for non-infinite dimensional version in \cite{art:Bjarne}, just that we cannot assume that the infimum in the smoothed max-relative entropy is achieved since $B^\epsilon(\rho)$ is a non-compact set, as the underlying Hilbert space is not finite dimensional. \\
	Assume the same requirements on $\mathcal{E},\mathcal{F},\rho,\sigma,\epsilon,\epsilon^\prime,\mu$ as stated in the Lemma.
	Pick $\nu\in B^{\epsilon/m}(\rho)$ s.t. $D^{\epsilon/m}_{\max}(\rho\|\sigma)\geq D_{\max}(\nu\|\sigma)-\frac{\mu}{2}$, i.e. close to 'the optimal smoothing state'. Note that this implies $\nu^{\otimes m} \leq 2^{m D_{\max}^{\epsilon/m}(\rho\|\sigma)} \sigma^{\otimes m}$. 
	Similarly pick $\tau\in B^{\epsilon'}(\mathcal{E}(\nu)^{\otimes m})$ s.t. $D_{\max}(\mathcal{E}(\nu)^{\otimes m}\|\mathcal{F}(\nu)^{\otimes m})\geq  D_{\max}(\tau\|\mathcal{F}(\nu)^{\otimes m})-\frac{\mu}{2}$. \\ 
	Now $\tau\leq2^{D_{\max}(\tau\|\mathcal{F}(\nu))^{\otimes m}}\mathcal{F}(\nu)^{\otimes m}\leq 2^{D^{\epsilon'}_{\max}(\mathcal{E}(\nu)^{\otimes m}\|\mathcal{F}(\nu)^{\otimes m})+\frac{\mu}{2}}2^{m D^{\epsilon/m}_{\max}(\rho\|\sigma)+\frac{\mu}{2}}\mathcal{F}(\sigma)$, where we used that $\omega\geq\tilde{\omega}\implies\mathcal{E}(\omega)\geq\mathcal{E}(\tilde{\omega})$ for any two states $\omega,\tilde{\omega}$.
	Now by the triangle inequality and DPI for the sine distance we have $P(\tau,\mathcal{E}(\rho)^{\otimes m})\leq P(\tau,\mathcal{E}(\nu)^{\otimes m})+P(\mathcal{E}(\nu)^{\otimes m},\mathcal{E}(\rho)^{\otimes m})\leq \epsilon^\prime+P(\nu^{\otimes m},\rho^{\otimes m})\leq \epsilon^\prime+m \epsilon/m = \epsilon' + \epsilon$.
	
	The second inequality follows by taking the supremum over $\nu$ on the right-hand side and then the limit $\mu \to 0$. 
\end{proof}

\subsubsection{Continuity of the Petz-\renyi divergence in \texorpdfstring{$\alpha$}{alpha}}
First we establish a continuity bound on the $\alpha$-Petz-\renyi divergence in $\alpha$ (\autoref{lemma:Bjarne9}), after which we establish that the explicit bound on the convergence speed of the asymptotic equipartition property established in \cite[Lemma 11]{art:Bjarne} holds equally in infinite dimensions by way of the two inequalities \eqref{equ:ImpIneq2NEW}, \eqref{equ:ImpINeq3REP}. 

\begin{lemma}[infinite dimensional version of Lemma 9 in \cite{art:Bjarne}, simplified]
\label{lemma:Bjarne9}
Let $\rho,\sigma\in\mathcal{D}(\mathcal{H})$ be any two states on a separable Hilbert space $\mathcal{H}$ and $\gamma\in(0,1]$. \\ Define
$c_\gamma(\rho\|\sigma):=\frac{1}{\gamma}\log\left(2^{\gamma D_{1+\gamma}(\rho\|\sigma)}+2^{-\gamma D_{1-\gamma}(\rho\|\sigma)}+1 \right)$. Then for all $\gamma\in(0,1]$ and $0<\delta\leq\frac{\gamma}{2}$:
\begin{equation}
    D_{1+\delta}(\rho\|\sigma)\leq D(\rho\|\sigma)+\delta(c_\gamma(\rho\|\sigma))^2.
    \label{equ:PetzRenyiUpperContinuity}
\end{equation}
Furthermore, if $D(\rho\|\sigma)<\infty$, then for all $\gamma\in(0,1]$ and $0<\delta\leq\frac{\log3}{2c_\gamma(\rho\|\sigma)}$
\begin{equation}
    D_{1-\delta}(\rho\|\sigma)\geq D(\rho\|\sigma)-\delta(c_\gamma(\rho\|\sigma))^2.
     \label{equ:PetzRenyiLowerContinuity}
\end{equation}
\end{lemma}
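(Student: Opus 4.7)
My plan is to reduce both inequalities to a classical statement about the cumulant generating function of a scalar measure associated to the relative modular operator, and then adapt the Taylor-expansion argument of \cite[Lemma 9]{art:Bjarne} in this commutative setting.

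The first step is a spectral reduction. Assuming $\operatorname{supp}\rho \subset \operatorname{supp}\sigma$ (otherwise $c_\gamma = +\infty$ and both claims are vacuous), the probability measure $\nu$ on $(0,\infty)$ defined by $\nu(d\lambda) \coloneqq \langle\sqrt{\rho} | \xi^{\Delta_{\rho,\sigma}}(d\lambda) | \sqrt{\rho}\rangle$ encodes
\begin{equation}
D(\rho\|\sigma) = \int \log\lambda\, d\nu, \qquad D_\alpha(\rho\|\sigma) = \frac{1}{\alpha-1}\log Q_{\alpha-1}, \qquad Q_\beta \coloneqq \int\lambda^\beta\, d\nu.
\end{equation}
In particular $Q_0 = 1$ and $2^{\gamma c_\gamma} = Q_\gamma + Q_{-\gamma} + 1$, so the claims become scalar statements about the convex cumulant generating function $\phi(s) \coloneqq \ln Q_s$ of $\ln\lambda$ on $((0,\infty), \nu)$, satisfying $\phi(0) = 0$ and $\phi'(0) = (\ln 2)\, D$.

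Both bounds then follow from the second-order Taylor expansion $\phi(\pm\delta) \mp \delta\phi'(0) = \int_0^\delta(\delta - s)\phi''(\pm s)\, ds$, which upon dividing by $\pm\delta \ln 2$ yields $\pm(D_{1\pm\delta} - D)$ on the left, together with a uniform variance bound of the form $\phi''(\pm s) \leq C (\ln 2)^2 c_\gamma^2$ on the relevant range of $s$, for an appropriate absolute constant $C$. Since $\phi''(s) = \operatorname{Var}_{\mu_s}[\ln\lambda]$ where $\mu_s$ is the tilted probability measure with density $\lambda^s/Q_s$ relative to $\nu$, I would prove this bound using the pointwise inequality $|\log\lambda| \leq \frac{1}{\gamma}\log(\lambda^\gamma + \lambda^{-\gamma} + 1)$ (immediate by case analysis on $\lambda \gtrless 1$) together with a careful estimate on the tilted-measure normalisation. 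The restrictions on $\delta$ enter precisely at this last step: for \eqref{equ:PetzRenyiUpperContinuity}, the condition $\delta \leq \gamma/2$ is what allows absorbing the tilt factor $\lambda^s$ into the bound, and Jensen's inequality gives $Q_s \geq 2^{sD} \geq 1$ for free; for \eqref{equ:PetzRenyiLowerContinuity}, the condition $\delta \leq \log 3/(2c_\gamma)$ combined with the Jensen-plus-monotonicity estimate $Q_{-s} \geq 2^{-sD} \geq 2^{-s c_\gamma}$ (using $D \leq D_{1+\gamma} \leq c_\gamma$) guarantees $Q_{-s} \geq 1/\sqrt{3}$ throughout the integration range.

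The main obstacle I foresee is technical rather than conceptual. Once the spectral reduction has been performed, all subsequent computation takes place in the commutative algebra $L^\infty((0,\infty), \nu)$, so the separability of $\mathcal{H}$ plays no further role and the finite-dimensional argument of \cite{art:Bjarne} transports essentially unchanged. The genuine difficulty is the sharp variance estimate producing exactly $c_\gamma^2$: a naive pointwise bound like $(\log\lambda)^2 \leq \frac{1}{\gamma^2(\ln 2)^2}(\lambda^\gamma + \lambda^{-\gamma} - 2)$ yields only $(Q_\gamma + Q_{-\gamma})/\gamma^2$, which is much weaker than the logarithmic $c_\gamma^2$ and would not suffice. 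The logarithmic structure built into the definition of $c_\gamma$ must therefore be exploited by estimating the variance through $\log(\lambda^\gamma + \lambda^{-\gamma} + 1)$ directly rather than through its exponential.
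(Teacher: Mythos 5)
Your spectral reduction to the scalar measure $\nu(d\lambda)=\langle\sqrt{\rho}|\xi^{\Delta_{\rho,\sigma}}(d\lambda)|\sqrt{\rho}\rangle$ is exactly the paper's first step, and your account of where the two constraints on $\delta$ enter (absorbing the tilt for the upper bound; $Q_{-s}\geq 2^{-sD}\geq 2^{-sc_\gamma}\geq 3^{-1/2}$ for the lower bound) is correct. The gap is in the central estimate, the uniform variance bound $\phi''(\pm s)\leq C(\ln 2)^2 c_\gamma^2$. The route you sketch for it --- the pointwise inequality $|\log\lambda|\leq\frac1\gamma\log(\lambda^\gamma+\lambda^{-\gamma}+1)$ followed by Jensen for the concave function $x\mapsto(\log x)^2$ under the \emph{tilted} measure $\mu_s$ --- produces $\bigl(\log\bigl((Q_{s+\gamma}+Q_{s-\gamma}+Q_s)/Q_s\bigr)\bigr)^2$, and the term $Q_{s+\gamma}$ is simply not controlled by $c_\gamma$: there are admissible measures with $Q_\gamma<\infty$ but $Q_{\gamma+\epsilon}=\infty$ for every $\epsilon>0$ (take commuting states with $\lambda_i=p_i/q_i=2^{i/\gamma}i^{-3/\gamma}$ and $p_i\propto 2^{-i}i^{-2}$), so for every $s>0$ the tilted Jensen bound is $+\infty$ while $c_\gamma$ is finite and the lemma's conclusion still holds. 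The variance bound itself can be rescued by a truncation argument (split on $|\ln\lambda|\lessgtr T$ with $T\sim\gamma^{-1}\ln(Q_\gamma+Q_{-\gamma}+1)$, using only $Q_{\pm\gamma}$), but that delivers an absolute constant $C$ well above $2/\ln 2$, after which $\frac{1}{\delta\ln 2}\int_0^\delta(\delta-s)\phi''(s)\,ds\leq\frac{C\ln 2}{2}\,\delta c_\gamma^2$ overshoots the stated bound $\delta c_\gamma^2$.

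The paper avoids both problems by never tilting. It works with the remainder of the moment itself rather than of its logarithm, namely $Q_\delta-1-\delta\ln(2)D=\int r_\delta(\lambda)\,d\nu$ with $r_\delta(t)=t^\delta-\delta\ln t-1\leq 2(\cosh(\delta\ln t)-1)=:s_\delta(t)$, exploits the identity $s_\delta(t)=s_{\delta/\gamma}(t^\gamma)$ together with monotonicity and concavity of $s_{\delta/\gamma}$ on $[3,\infty)$ (this is where $\delta\leq\gamma/2$ enters) so that Jensen is applied with respect to the \emph{untilted} $\nu$ and only $Q_{\pm\gamma}$ ever appear, and then passes to the divergence via $\log(1+x)\leq x/\ln 2$, which is what produces the clean constant $1$ (in fact $\ln 2$) in front of $\delta c_\gamma^2$. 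If you wish to keep the cumulant-generating-function framing, you must replace the uniform $\phi''$ bound by an untilted remainder estimate of this kind; as written, the key step of your argument does not go through.
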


\begin{proof}
Recall the definitions of the Umegaki relative entropy and the Petz-\renyi relative entropies via the relative modular operator as 
\begin{align*}
    &D(\rho\|\sigma) = \int_{(0,\infty)}\log\lambda\langle\sqrt{\rho}|\xi^{\Delta_{\rho,\sigma}}(d\lambda)|\sqrt{\rho}\rangle = \langle\sqrt{\rho}|\log\Delta_{\rho,\sigma}|\sqrt{\rho}\rangle, \\
    &D_\alpha(\rho\|\sigma) = \frac{1}{\alpha-1}\log\int_{(0,\infty)}\lambda^{\alpha-1}\langle\sqrt{\rho}|\xi^{\Delta_{\rho,\sigma}}(d\lambda)|\sqrt{\rho}\rangle = \frac{1}{\alpha-1}\log\langle\sqrt{\rho}|\Delta_{\rho,\sigma}^{\alpha-1}|\sqrt{\rho}\rangle,  \\
    &D_{1+\delta}(\rho\|\sigma) = \frac{1}{\delta}\log\int_{(0,\infty)}\lambda^{\delta}\langle\sqrt{\rho}|\xi^{\Delta_{\rho,\sigma}}(d\lambda)|\sqrt{\rho}\rangle = \frac{1}{\delta}\log\langle\sqrt{\rho}|\Delta_{\rho,\sigma}^{\delta}|\sqrt{\rho}\rangle.
\end{align*}
To prove the lemma we follow the proof in \cite[Proof of Lemma 9]{art:Bjarne}, essentially replacing $X=\rho\otimes(\sigma^{-1})^T$ by $\Delta_{\rho,\sigma}$ and $|\phi\rangle=\sum_i\sqrt{\rho}|i\rangle\otimes|i\rangle\in\mathcal{H}\otimes\mathcal{H}$ with $|\sqrt{\rho}\rangle\in\mathcal{B}_2(\mathcal{H})$. It is repeated here  for the convenience of the reader.
\\ We write $t^\delta=1+\delta\ln(t)+r_\delta(t)$, where the first two summands are the Taylor-coefficients of $t^\delta$ when expanding in $\delta$ around $\delta=0$. Set $r_\delta(t):=t^\delta-\delta\ln(t)-1$. Using $e^x\geq 1+x$ we can upper bound it by
\begin{align*}
    r_\delta(t) &= t^\delta-\delta\ln(t)-1 \leq t^\delta+e^{-\delta\ln(t)}-2 = e^{\delta\ln(t)}+e^{-\delta\ln(t)}-2 \\
&=  2(\cosh(\delta\ln(t))-1) =: s_\delta(t).
\end{align*}
It is easy to check that $s_\delta(t)$ is monotonically increasing in $t$ on $[1,\infty)$, concave in $t$ on $[3,\infty)$ if $\delta\leq\frac{1}{2}$, and satisfies $s_{-\delta}(t)=s_{\delta}(t) =s_{\gamma\delta}(t^{\frac{1}{\gamma}})$. We get for $t\geq0$ and $\gamma\in(0,1]$
\begin{align*}
    s_\delta(t)= s_{\frac{\delta}{\gamma}}(t^\gamma)\leq s_{\frac{\delta}{\gamma}}(t^\gamma+t^{-\gamma}) \leq  s_{\frac{\delta}{\gamma}}(t^\gamma+t^{-\gamma}+1).
\end{align*}
It is easy to see that $t^\gamma+t^{-\gamma}+1\geq3$ for all $t \in (0,\infty)$. Thus, we can use Jensens inequality to get
\begin{align*}
    \langle\sqrt{\rho}|s_\delta(\Delta_{\rho,\sigma})|\sqrt{\rho}\rangle &= \int_{(0,\infty)}s_\delta(\lambda)d\mu_\rho(\lambda) \leq \int_{(0,\infty)}s_{\frac{\delta}{\gamma}}(\lambda^\gamma+\lambda^{-\gamma}+1)d\mu_\rho(\lambda) \\ &\overset{\text{Jensen}}{\leq} s_{\frac{\delta}{\gamma}}\left(\int_{(0,\infty)}(\lambda^\gamma+\lambda^{-\gamma}+1)d\mu_\rho(\lambda)\right) = s_{\frac{\delta}{\gamma}}\left(2^{\gamma c_\gamma(\rho\|\sigma)}\right).
\end{align*}
where we wrote d$\mu_\rho(\lambda)\equiv\Tr[\rho \text{d}\xi^{\Delta_{\rho,\sigma}}(\lambda)]= \langle\sqrt{\rho}|\text{d}\xi^{\Delta_{\rho,\sigma}}(\lambda)|\sqrt{\rho}\rangle$, where d$\xi^{\Delta_{\rho,\sigma}}(\lambda)$ is the spectral measure of the relative modular operator $\Delta_{\rho,\sigma}$.
Now using Taylor`s theorem with the Lagrange remainder we can bound
\begin{align*}
    s_\delta(t)&=s_0(t)+\frac{d}{d\delta}s_\delta(t)|_{\delta=0}\delta+\frac{1}{2}\frac{d^2}{d\delta^2}s_\delta(t)|_{\delta=\xi}\delta^2 \\ &=\delta^2(\ln(t))^2\cosh(\xi\ln(t))\leq\delta^2(\ln(t))^2\cosh(\delta\ln(t)) 
\end{align*}
for all $t\geq0$ and some $\xi\in(0,\delta)$, where we used that $s_0(t)=\frac{d}{d\delta}s_\delta(t)|_{\delta=0}=0$. Hence
\begin{align*}
    \langle\sqrt{\rho}|s_\delta(\Delta_{\rho,\sigma})|\sqrt{\rho}\rangle \leq s_{\frac{\delta}{\gamma}}\left(2^{\gamma c_\gamma(\rho\|\sigma)}\right) \leq (\delta\ln(2)c_\gamma(\rho\|\sigma))^2\cosh(\delta\ln(2)c_\gamma(\rho\|\sigma)).
\end{align*}
We can now apply this to derive the upper bound in the Lemma. For $\delta >0$ have
\begin{align*}
    D_{1+\delta}(\rho\|\sigma) &= \frac{1}{\delta}\log\langle\sqrt{\rho}|\Delta_{\rho,\sigma}^{\delta}|\sqrt{\rho}\rangle = \frac{1}{\delta}\log\langle\sqrt{\rho}|1+\delta\ln(2)\log\Delta_{\rho,\sigma}+r_\delta(\Delta_{\rho,\sigma})|\sqrt{\rho}\rangle \\
    &= \frac{1}{\delta}\log(1+\delta\ln(2)D(\rho\|\sigma)+\langle\sqrt{\rho}|r_\delta(\Delta_{\rho,\sigma})|\sqrt{\rho}\rangle) \\
    &\leq \frac{1}{\delta}\log(1+\delta\ln(2)D(\rho\|\sigma)+\langle\sqrt{\rho}|s_\delta(\Delta_{\rho,\sigma})|\sqrt{\rho}\rangle) \\
    &= \frac{1}{\delta}\log(1+\delta\ln(2)D(\rho\|\sigma)) +  \frac{1}{\delta}\log\left(1+\frac{\langle\sqrt{\rho}|s_\delta(\Delta_{\rho,\sigma})|\sqrt{\rho}\rangle}{1+\delta\ln(2)D(\rho\|\sigma)}\right) \\
    &\leq D(\rho\|\sigma) + \frac{1}{\delta}\log\left(1+\langle\sqrt{\rho}|s_\delta(\Delta_{\rho,\sigma})|\sqrt{\rho}\rangle\right) \\
    &\leq D(\rho\|\sigma) + \frac{1}{\delta}\log\left(1+(\delta\ln(2)c_\gamma(\rho\|\sigma))^2\cosh(\delta\ln(2)c_\gamma(\rho\|\sigma))\right) \\
    &\leq D(\rho\|\sigma) +\delta\ln(2)(c_\gamma(\rho\|\sigma))^2,
\end{align*}
where in the third line it was used that $\log(1+x)\leq\frac{x}{\ln(2)}$ and $\delta>0$. The final inequality follows from the fact that $k\mapsto k^2-\ln(1+k^2\cosh(k))$ is monotonically increasing and hence positive.
For the lower bound in the Lemma, i.e. the case $\delta<0$, a slightly different argument has to be applied.
\begin{align*}
    D_{1+\delta}(\rho\|\sigma) &= \frac{1}{\delta}\log\langle\sqrt{\rho}|\Delta_{\rho,\sigma}^{\delta}|\sqrt{\rho}\rangle = \frac{1}{\delta}\log\langle\sqrt{\rho}|1+\delta\ln(2)\log\Delta_{\rho,\sigma}+r_\delta(\Delta_{\rho,\sigma})|\sqrt{\rho}\rangle \\
    &\geq \frac{1}{\delta}\log(1+\delta\ln(2)D(\rho\|\sigma)+\langle\sqrt{\rho}|s_\delta(\Delta_{\rho,\sigma})|\sqrt{\rho}\rangle) \\
    &\geq D(\rho\|\sigma) + \frac{1}{\delta\ln(2)}\langle\sqrt{\rho}|s_\delta(\Delta_{\rho,\sigma})|\sqrt{\rho}\rangle \\ &\geq D(\rho\|\sigma) +\delta\ln(2)(c_\gamma(\rho\|\sigma))^2\cosh(\delta\ln(2)c_\gamma(\rho\|\sigma)),
\end{align*}
where in the second inequality again $\log(1+x)\leq\frac{x}{\ln(2)}$ was used. If now $|\delta|\leq\frac{\log(3)}{2c_\gamma(\rho\|\sigma)}\geq\frac{\gamma}{2}$, then $\ln(2)\cosh(\ln(3)/2)<1$ and thus
\begin{align*}
    D_{1+\delta}(\rho\|\sigma) \geq D(\rho\|\sigma) + \delta(c_\gamma(\rho\|\sigma))^2.
\end{align*}
\end{proof}

\subsubsection{Asymptotic equipartition property}
With \autoref{lemma:Bjarne9} and the inequalities \eqref{equ:ImpIneq2}, \eqref{equ:ImpIneq3}, the authors of \cite{art:Bjarne} derive the following non-asymptotic $n$ shot version of the asymptotic equipartition property (AEP) of the smoothed-max-relative entropy. Having established these in the infinite dimensional setting allows us to conclude the analogous result:

\begin{lemma}[Infinite dimensional version of Lemma 11 in \cite{art:Bjarne}, simplified] \label{lem:InfDImDmaxAEP}
Let $\rho,\sigma\in\mathcal{D}(\mathcal{H})$ be any states on a separable Hilbert space $\mathcal{H}$, and for $\gamma\in(0,1]$ define $c_\gamma(\rho\|\sigma)$ as in \autoref{lemma:Bjarne9}. Then for all $\epsilon\in(0,1]$ and $n\in\mathbb{N}$
\begin{align}
    &\frac{1}{n}D^\epsilon_{\max}(\rho^{\otimes n}\|\sigma^{\otimes n})\geq D(\rho\|\sigma)-\frac{4c_\gamma(\rho\|\sigma)}{\sqrt{n}}\log\left(\frac{2}{1-\epsilon}\right), \label{equ:DMaxAEPlowerbound} \\
    &\frac{1}{n}D^\epsilon_{\max}(\rho^{\otimes n}\|\sigma^{\otimes n})\leq D(\rho\|\sigma)+\frac{4c_\gamma(\rho\|\sigma)}{\sqrt{n}}\log\left(\frac{2}{\epsilon}\right)+\frac{1}{n}\log\left(\frac{1}{1-\epsilon^2}\right). \label{equ:DMaxAEPupperbound}
\end{align}
\end{lemma}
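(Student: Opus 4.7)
The plan is to derive both bounds by combining three ingredients already established in the preceding subsections: the sandwiching inequalities between the smoothed max-relative entropy and the Petz-Rényi divergences, namely $D_{\max}^\epsilon(\rho\|\sigma)\leq D_\alpha(\rho\|\sigma) + \frac{2}{\alpha-1}\log(1/\epsilon) + \log\tfrac{1}{1-\epsilon^2}$ for $\alpha>1$ (Corollary~\ref{COR:2}) and $D_{\max}^\epsilon(\rho\|\sigma)\geq D_\alpha(\rho\|\sigma) + \frac{2}{\alpha-1}\log(1/(1-\epsilon))$ for $\alpha\in[0,1)$ (Lemma~\ref{lemma:SmoothedDmaxLowerBoundWW}); the additivity of the Petz-Rényi divergence under tensor products, $D_\alpha(\rho^{\otimes n}\|\sigma^{\otimes n}) = nD_\alpha(\rho\|\sigma)$; and the quantitative continuity of $\alpha\mapsto D_\alpha(\rho\|\sigma)$ near $\alpha=1$ provided by Lemma~\ref{lemma:Bjarne9}.

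For the upper bound, I would apply Corollary~\ref{COR:2} with $\alpha=1+\delta$ to the $n$-fold tensor products, use additivity to reduce to single-copy quantities, divide by $n$, and invoke $D_{1+\delta}(\rho\|\sigma)\leq D(\rho\|\sigma)+\delta c_\gamma(\rho\|\sigma)^2$ (valid for $0<\delta\leq\gamma/2$) to obtain
\begin{equation*}
\frac{1}{n} D_{\max}^\epsilon(\rho^{\otimes n}\|\sigma^{\otimes n}) \leq D(\rho\|\sigma) + \delta\, c_\gamma(\rho\|\sigma)^2 + \frac{2}{n\delta}\log\frac{1}{\epsilon} + \frac{1}{n}\log\frac{1}{1-\epsilon^2}.
\end{equation*}
A choice of $\delta$ of order $\log(2/\epsilon)/(c_\gamma(\rho\|\sigma)\sqrt{n})$ balances the two $\delta$-dependent error terms, and after using $\log(1/\epsilon)\leq \log(2/\epsilon)$ one reads off the claimed rate $4 c_\gamma(\rho\|\sigma)\log(2/\epsilon)/\sqrt{n}$.

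For the lower bound the argument is the mirror image: I would start from Lemma~\ref{lemma:SmoothedDmaxLowerBoundWW} with $\alpha=1-\delta$, apply additivity of the Petz-Rényi divergence, and combine with the lower continuity bound $D_{1-\delta}(\rho\|\sigma)\geq D(\rho\|\sigma)-\delta c_\gamma(\rho\|\sigma)^2$ (valid for $0<\delta\leq\log(3)/(2c_\gamma(\rho\|\sigma))$). An analogous optimization of $\delta$ reproduces the stated estimate.

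The main piece of book-keeping I anticipate is ensuring that the optimal $\delta$ respects the admissibility thresholds $\delta\leq\gamma/2$ and $\delta\leq\log(3)/(2c_\gamma(\rho\|\sigma))$ under which the continuity bounds apply. For $n$ sufficiently large this is automatic; for small $n$ the unconstrained optimum may violate the constraint, but in that regime $4c_\gamma(\rho\|\sigma)\log(2/\epsilon)/\sqrt{n}$ is already so large that one can fall back on the simple admissible choice $\delta=\gamma/2$ (respectively $\delta=\log(3)/(2c_\gamma)$) or on crude a priori estimates such as $D_{\max}^\epsilon\leq D_{\max}$ on an appropriate smoothing set. Beyond this routine case distinction the argument is just the calculus of optimization applied pointwise to the three ingredients listed above.
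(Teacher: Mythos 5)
Your proposal matches the paper's proof essentially step for step: both combine Corollary~\ref{COR:2} and Lemma~\ref{lemma:SmoothedDmaxLowerBoundWW} with additivity of the Petz--R\'enyi divergence under tensor products and the continuity bounds of Lemma~\ref{lemma:Bjarne9}, choosing $\delta \sim 1/\sqrt{n}$ and estimating the resulting error terms. The only (cosmetic) difference is that the paper takes the $\epsilon$-independent value $\delta=\log 3/(2\sqrt{n}\,c_\gamma(\rho\|\sigma))$, which satisfies both admissibility thresholds for every $n\geq 1$ because $c_\gamma(\rho\|\sigma)\geq \log 3/\gamma$, so the small-$n$ case distinction you anticipate never arises.
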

\begin{proof}
The proof may be found in \cite{art:Bjarne}. Since it is based on inequalities \eqref{equ:ImpIneq2NEW}, \eqref{equ:ImpINeq3REP} and \autoref{lemma:Bjarne9}, all of which we established to hold in infinite dimensions, it also holds. For the reader's convenience the argument is repeated in \autoref{app:ProofOfDmaxAEPLemma}. 
\end{proof}

\begin{remark}
	The authors of \cite{fawzi_asymptotic_2023} also recently proved a different version of an AEP for the smoothed max divergence in infinite dimensions. However, their version is not suitable for our applications. Specifically, they show what is known as second-order asymptotics, where the second order term is controlled by the relative entropy variance $V(\rho\|\sigma)$. For our result we subsequently need to upper bound the error terms in the AEP by something that satisfies a chain rule (we will use the geometric \renyi divergence for this purpose), and we are not aware of any way to do this for the relative entropy variance, which is why we cannot make use of the results from \cite{fawzi_asymptotic_2023}. For a similar discussion in finite dimensions, see also \cite[Remark 13]{art:Bjarne}.
\end{remark}

\subsubsection{Proof of \autoref{thm:main}}

Having established these above lemmas we can proceed to give the proof of the main theorem. It now proceeds ba a very similar argument as the proof in \cite[Theorem 6]{art:Bjarne}. 
\begin{proof}[Proof of \autoref{thm:main}]
Let an adaptive channel discrimination strategy between $\mathcal{E}$ and $\mathcal{F}$ through $n$ channels uses be given. Then, inequality \eqref{equ:ImpIneq4} implies
\begin{align*}
    D_H^{\alpha_a}(\mathcal{E}(\rho_n)\|\mathcal{F}(\sigma_n))\leq \frac{1}{n}\frac{1}{1-\alpha_a}[D(\mathcal{E}(\rho_n)\|\mathcal{F}(\sigma_n))+h(\alpha_a)].
\end{align*}
Now we rewrite 
\begin{align*}
    D(\mathcal{E}(\rho_n)\|\mathcal{F}(\sigma_n)) &= D(\mathcal{E}(\rho_n)\|\mathcal{F}(\sigma_n)) - D(\rho_n\|\sigma_n)+D(\Lambda_n(\mathcal{E}(\rho_{n-1}))\|\Lambda_n(\mathcal{F}(\sigma_{n-1}))) \\ &\overset{\text{DPI}}{\leq} D(\mathcal{E}(\rho_n)\|\mathcal{F}(\sigma_n)) - D(\rho_n\|\sigma_n) + D(\mathcal{E}(\rho_{n-1})\|\mathcal{F}(\sigma_{n-1})) \\ & \overset{\text{iterate}}{\leq ... \leq} \sum_{k=1}^n[D(\mathcal{E}(\rho_k)\|\mathcal{F}(\sigma_k))-D(\rho_k\|\sigma_k)] \\
    &\leq n(D(\mathcal{E}(\rho_l)\|\mathcal{F}(\sigma_l))-D(\rho_l\|\sigma_l)),
\end{align*} where $l:=\argmax_{k\in\{1,...,n\}}[D(\mathcal{E}(\rho_k)\|\mathcal{F}(\sigma_k))-D(\rho_k\|\sigma_k)]$.
Then with \autoref{lem:InfDImDmaxAEP} we can upper bound this relative entropy difference by a smoothed max-relative entropy difference and an error term. 
\begin{align*}
    D(\mathcal{E}(\rho_l)\|\mathcal{F}(\sigma_l))-D(\rho_l\|\sigma_l) &\leq \frac{1}{m}\left[D^{\epsilon_1}_{\max}(\mathcal{E}(\rho_l)^{\otimes m}\|\mathcal{F}(\sigma_l)^{\otimes m})-D_{\max}^{\epsilon_2}(\rho_l^{\otimes m}\|\sigma_l^{\otimes m})\right] \\ &+\frac{1}{\sqrt{m}}\left[4c_{\gamma_1}(\mathcal{E}(\rho_l)\|\mathcal{F}(\sigma_l))\log\left(\frac{2}{1-\epsilon_1}\right)+4c_{\gamma_2}(\rho_l\|\sigma_l)\log\frac{2}{\epsilon_2}\right] + \frac{1}{m}\log\left(\frac{1}{1-\epsilon_2^2}\right)
\end{align*} for any $\gamma_{1,2}\in(0,1]$ and $\epsilon_{1,2}\in(0,1]$. Setting $c_l:=4\inf_{\gamma_1,\gamma_2}[c_{\gamma_1}(\mathcal{E}(\rho_l)\|\mathcal{F}(\sigma_l))+c_{\gamma_2}(\rho_l\|\sigma_l)]$ and $\epsilon=:\epsilon_2=1-\epsilon_1$ gives
\begin{align*}
    D(\mathcal{E}(\rho_l)\|\mathcal{F}(\sigma_l))-D(\rho_l\|\sigma_l) &\leq \frac{1}{m}\left[D^{1-\epsilon}_{\max}(\mathcal{E}(\rho_l)^{\otimes m}\|\mathcal{F}(\sigma_l)^{\otimes m})-D_{\max}^{\epsilon}(\rho_l^{\otimes m}\|\sigma_l^{\otimes m})\right] +\frac{c_l}{\sqrt{m}}\log\frac{2}{\epsilon}+\frac{1}{m}\log\frac{1}{1-\epsilon^2}.
\end{align*}
Now, due to the fact that the Petz-Renyi divergence is positive when evaluated on normalized states, and upper bounded by the geometric divergence (as in \eqref{eq:relation_renyi_entropies}), we get 
\begin{equation}
	    c_\gamma(\E(\rho_l)\|\F(\sigma_l)) 
	    \leq \frac{1}{\gamma}\log( 2^{\gamma \widehat{D}_{1 + \gamma} (\E(\rho_l)\|\F(\sigma_l))} + 2)\,.
	\end{equation}
	Now, by repeated use of the chain rule for the geometric \renyi divergence (\autoref{thm:geometric_chain_rule}), we get
	\begin{align}
		\widehat{D}_{1 + \gamma}(\E(\rho_l)\|\F(\sigma_l)) &\leq \widehat{D}_{1 + \gamma}(\E\|\F) + \widehat{D}_{1 + \gamma}(\rho_l\|\sigma_l)\\
		&= \widehat{D}_{1 + \gamma}(\E\|\F) + \widehat{D}_{1 + \gamma}(\Lambda_l(\E(\rho_{l-1}))\|\Lambda_l(\F(\sigma_{l-1}))) \\
		&\leq \widehat{D}_{1 + \gamma}(\E\|\F) + \widehat{D}_{1 + \gamma}(\E(\rho_{l-1})\|\F(\sigma_{l-1})) \\
		&\leq \ldots \leq l \widehat{D}_{1 + \gamma}(\E\|\F)\,.
	\end{align}
	Defining the corresponding channel quantity
	\begin{equation}
	    \widehat{c}_\gamma(\E\|\F) \coloneqq \frac{1}{\gamma}\left(2^{\gamma \widehat{D}_{1 + \gamma}(\E\|\F)} + 2\right) \, ,
	\end{equation}
	we find that 
	\begin{equation}
	    c_\gamma(\E(\rho_l)\|\F(\sigma_l)) \leq \frac{1}{\gamma} \log(2^{l \widehat{D}_{1 + \gamma}(\E\|\F)} + 2) \leq \frac{l}{\gamma} \log(2^{\widehat{D}_{1 + \gamma}(\E\|\F)} + 2) = l\, \widehat{c}_\gamma(\E\|\F) \,.
	\end{equation}
    Using the same argument we find that also $\widehat{D}_{1 + \gamma}(\rho_l \|\sigma_l) \leq l \widehat{D}_{1 + \gamma}(\E\|\F)$ and hence also
	\begin{equation}
	    c_\gamma(\rho_l\|\sigma_l) \leq l\, \widehat{c}_\gamma(\E\|\F)\,.
	\end{equation}
	
	Thus,
	\begin{equation}
	    c_l \leq 8 l \inf_{\gamma \in (0,1]} \widehat{c}_\gamma(\E\|\F) \leq 8 n \inf_{\gamma \in (0,1]} \widehat{c}_\gamma(\E\|\F)\,,
	\end{equation}
	Remember that we assumed that, $\Dg(\mathcal{E}\|\mathcal{F})<\infty$ for some $\alpha > 1$, and hence (as the geometric \renyi divergence is increasing in $\alpha$, which is easy to see from the definition through minimal reverse tests) it follows that $c_l$ is upper bounded as $c_l\leq C\cdot l\leq C\cdot n$, where $C=8 \inf_{\alpha \in (1, 2]} {1 \over {\alpha - 1}}\log\left(2^{(\alpha - 1) D_{\alpha}(\mathcal{E}\|\mathcal{F})}+2\right)$.

Now, applying \autoref{lemma:BjarneLemma8} with some $\mu>0$ to $D^{1-\epsilon}_{\max}(\mathcal{E}(\rho_l)^{\otimes m}\|\mathcal{F}(\sigma_l)^{\otimes m})$ we get a state $\nu(\epsilon,\mu,l,m;\rho)\equiv\nu\in B^\epsilon(\rho_l^{\otimes m})$, s.t.
\begin{align*}
    D_{\max}^{1-2\epsilon+\epsilon}(\mathcal{E}(\rho_l)^{\otimes m}\|\mathcal{F}(\sigma_l)^{\otimes m})\leq D_{\max}^{1-2\epsilon}(\mathcal{E}^{\otimes m}(\nu)\|\mathcal{F}^{\otimes m}(\nu)) + D_{\max}^\epsilon(\rho_l^{\otimes m}\|\sigma_l^{\otimes m})+ \mu.
\end{align*}
So
\begin{align*}
    D(\mathcal{E}(\rho_l)\|\mathcal{F}(\sigma_l))-D(\rho_l\|\sigma_l) \leq \frac{1}{m}\left[D_{\max}^{1-2\epsilon}(\mathcal{E}^{\otimes m}(\nu)\|\mathcal{F}^{\otimes m}(\nu))+\mu\right]+\frac{c_l}{\sqrt{m}}\log\frac{2}{\epsilon}+\frac{1}{m}\log\frac{1}{1-\epsilon^2}.
\end{align*}
Now, to convert the max-divergence on the r.h.s of this expression to a hypothesis testing divergence we apply inequality \eqref{equ:ImpIneq2NEW} to this and get
\begin{align*}
   D(\mathcal{E}(\rho_l)\|\mathcal{F}(\sigma_l))-D(\rho_l\|\sigma_l) \leq \frac{1}{m}D_H^{1-(1-2\epsilon)^2}(\mathcal{E}^{\otimes m}(\nu)\|\mathcal{F}^{\otimes m}(\nu))+\frac{c_l}{\sqrt{m}}\log\frac{2}{\epsilon}+\frac{1}{m}\left[\log\frac{1}{1-\epsilon^2}+\log\frac{1}{1-(1-2\epsilon)^2}+\mu\right].
\end{align*} Setting $\alpha_p:=1-(1-2\epsilon)^2\Leftrightarrow \epsilon=\frac{1}{2}(1-\sqrt{1-\alpha_p})$ we get
\begin{align*}
    D(\mathcal{E}(\rho_l)\|\mathcal{F}(\sigma_l))-D(\rho_l\|\sigma_l) \leq \frac{1}{m}D^{\alpha_p}_H(\mathcal{E}^{\otimes m}(\nu)\|\mathcal{F}^{\otimes m}(\nu))+\frac{c_l}{\sqrt{m}}\log\frac{8}{\alpha_p}+\frac{1}{m}\left[\log\frac{1}{\alpha_p}-\log(1-\frac{\alpha_p}{4})+\mu\right]
\end{align*}
where we used, that $2\epsilon=1-\sqrt{1-\alpha_p}\geq \frac{\alpha_p}{2}$ and $1-\epsilon^2\geq 1-\frac{\alpha_p}{2}$.
So now putting everything together we get
\begin{align*}
    \frac{1}{n}D^{\alpha_a}_H(\mathcal{E}(\rho_n)\|\mathcal{F}(\sigma_n))\leq \frac{1}{1-\alpha_a}\left[\frac{1}{m}D^{\alpha_p}_H(\mathcal{E}^{\otimes m}(\nu)\|\mathcal{F}^{\otimes m}(\nu))+\frac{c_l}{\sqrt{m}}\log\frac{8}{\alpha_p}+\frac{1}{m}\left[\log\frac{1}{\alpha_p}-\log(1-\frac{\alpha_p}{4})+\mu\right]+\frac{h(\alpha_a)}{n}\right],
\end{align*} where $\mu$ can be chosen as small as desired.
\end{proof}

\subsection{Proof of \autoref{thm:InfDimChainRule} (Chain rule for the quantum relative entropy)}
The chain rule (\autoref{thm:InfDimChainRule}) for the quantum relative entropy in infinite dimensions, can be be proved quite straightforwardly combining \autoref{lem:InfDImDmaxAEP} and \autoref{lemma:BjarneLemma8}:
\begin{proof}
	
	\newcommand{\m}{^{\otimes m}}
	\newcommand{\nm}{^{\otimes nm}}
	Applying \autoref{lemma:BjarneLemma8} with $\rho \leftarrow \rho^{\otimes n}$, $\sigma \leftarrow \sigma^{\otimes n}$, $\E \leftarrow \E^{\otimes n}$ and $\F \leftarrow \F^{\otimes n}$ we get
	\begin{equation}
		D_{\max}^{\epsilon+\epsilon^\prime}\big((\mathcal{E}(\rho))\nm\|(\mathcal{F}(\sigma))\nm\big) \leq \sup_{\nu \in \DM[\HS\n]} D_{\max}^{\epsilon^\prime}\big((\mathcal{E}\n(\nu))\m\|(\mathcal{F}\n(\nu))\m\big) + m D_{\max}^{\epsilon/m}(\rho\n\|\sigma\n)\,.
	\end{equation}
	Now, dividing my $nm$ and using \autoref{lem:InfDImDmaxAEP} we get 
	\begin{multline}
		D(\E(\rho)\|\F(\sigma)) - \mathcal{O}\br{1 \over \sqrt{n m}} \\
		\leq \frac{1}{n} \sup_{\nu \in \DM[\HS\n]} \br{D(\E\n(\nu)\|\F\n(\nu)) + {4 c_{\gamma}(\E\n(\nu)\|\F\n(\nu))  \over \sqrt{m}}\log\br{2 \over \epsilon} + \mathcal{O}\br{1 \over nm}} \\
		+ D(\rho\|\sigma) + \mathcal{O}\br{\log\br{2m \over \epsilon} \frac{1}{\sqrt{n}}} + \mathcal{O}\br{1 \over n}
	\end{multline}
	Similarly to \cite{art:Bjarne}, we now have that
	\begin{equation}
		c_{\gamma}(\E^{\otimes n}(\nu)\|\F^{\otimes n}(\nu)) \leq \widehat{c}_{\gamma}(\E\n\|\F\n) = n \widehat{c}_{\gamma}(\E\|\F) \leq n \widehat{c}(\E\|\F)
	\end{equation}
	where
	\begin{equation}
		\widehat{c}_{\gamma}(\E\|\F) \coloneqq {1 \over \gamma} \log(2^{\gamma \widehat{D}_{1 + \gamma}(\E\|\F)} + 2) \qquad
		\widehat{c}(\E\|\F) \coloneqq \inf_{\gamma \in (0,1]} \widehat{c}_{\gamma}(\E\|\F)\,.
	\end{equation}
	and the equality in the first line follows from the chain rule for the geometric \renyi divergence (\autoref{thm:geometric_chain_rule}), which implies additivity of the channel divergence.
	Now, by assumption, $\widehat{c}(\E\|\F)$ is finite. Hence, we can take $m = n^3$ and then in the limit $n \to \infty$ all error terms will disappear, so that we get the desired expression.  
\end{proof}

\section{Outlook}
In this final section we discuss multiple ways in which our results could possible be extended.

Recall that we stated our main theorems under the condition that our two channels satisfy $\Dg(\E||\F) <\infty$ for some $\alpha > 1$. As discussed in \autoref{sec:finiteness_cond} this condition is not required when restricting to classical channels or finite dimensional quantum channels. We would hence conjecture that it is unnecessary also in the infinite dimensional quantum case, however we do not currently know how to prove this. In addition, it would be interesting to obtain simple characterizations of channels for which $\Dg(\E\|\F) < \infty$ holds for some $\alpha > 1$. In \autoref{sec:finiteness_cond} we establish a simple condition for generalized depolarizing channels, but we leave finding such conditions for more general channels open to further work.

Since we extend the finite dimensional results of quantum channel discrimination to separable Hilbert spaces, it would seem a natural step to attempt an extension of our results to the more general setting of von Neumann algebras, where quantum hypothesis testing and state discrimination have also already been studied \cite{art:cmp/1104248844,fawzi_asymptotic_2023,art:SandwichedRenyInInfDim_RenyDivergencesAsWeightedNonCommutative_Berta_2018, art:Jen_ov__2018_1,art:Jen_ov__2021_2}.

While we proved quantum Stein's lemma for channels in \autoref{thm:steins}, we were only able to prove the so-called weak converse, where we show that the achievable rate is optimal for all rates that have the type I error also going to zero. In many cases, also a so-called strong converse holds, which states that allowing the type I error to be at any value (just strictly less than one) does not allow for a better asymptotic error decay rate of the type II error. This holds for example in the case of simple asymmetric quantum state discrimination in finite dimensions. Given that this problem is still open though, both in the case of finite dimensional channel discrimination (see e.g. \cite{fang_towards_2022}) and in the case of infinite dimensional state discrimination \cite{mosonyi_strong_2022} we do not tackle this problem here.

%\backmatter
\addcontentsline{toc}{section}{Bibliography}
\printbibliography

\appendix
\section{Proofs of elementary properties of the max-relative and Hypothesis testing divergence}

\subsection{Properties of the max-relative entropy} \label{app:MaxRel}
Recall the definition of the max-relative entropy between two states $\rho,\sigma\in\mathcal{D}(\mathcal{H})$ on a separable Hilbert space as
\begin{align*}
    D_\text{max}(\rho||\sigma) :=\inf\{\lambda\in\mathbb{R}|\,\rho\leq 2^\lambda\sigma\}.    
\end{align*}
To prove the data-processing inequality, note that for any positive (and thus of course also for any completely positive), trace-preserving linear map $\Lambda:\P(\mathcal{H})\to\P(\mathcal{K})$ it holds that if $\rho\leq 2^\lambda\sigma\Leftrightarrow 2^\lambda\sigma-\rho\geq0$ then $\Lambda(2^\lambda\sigma-\rho)\geq0 \Leftrightarrow 2^\lambda\Lambda(\sigma)\geq\Lambda(\rho)$. Now taking the infimum over all $\lambda$ s.t. $\rho\leq2^\lambda\sigma$ gives the DPI
\begin{align*}
    D_{\text{max}}(\Lambda(\rho)||\Lambda(\sigma)) \leq     D_{\text{max}}(\rho||\sigma).
\end{align*}
For the triangle inequality, let $\rho,\sigma,\omega\in\mathcal{D}(\mathcal{H})$. If $\rho\leq2^\mu\omega$ and $\omega\leq2^\eta\sigma$, then $\rho\leq { 2^{\mu + \eta}} \sigma$. %2^{\nu+\eta}\sigma$. 
Taking infima over suitable $\mu, \eta$ gives the desired result
\begin{align*}
    D_\text{max}(\rho||\omega)+D_{\text{max}}(\omega||\sigma)\geq D_\text{max}(\rho||\sigma).
\end{align*}
By essentially the same argument we also have
\begin{align*}
    D_{\text{max}}(\rho\otimes\tau||\sigma\otimes\omega) \leq D_{\text{max}}(\rho||\sigma)+D_{\text{max}}(\tau||\omega).
\end{align*}

\subsection{Data processing inequality (DPI) for the Hypothesis testing divergence} \label{app:HTD}
Recall the definition of the hypothesis testing relative entropy between two states $\rho,\sigma\in\mathcal{D}(\mathcal{H})$ on a separable Hilbert space as
\begin{align*}
D_H^\epsilon(\rho||\sigma):=-\log\inf_{0\leq F\leq\1}\{\Tr[F\sigma]|\Tr[F\rho]\geq 1-\epsilon\}.
\end{align*}
Now to prove the data-processing inequality use that if $\Lambda:\P(\mathcal{H})\to\P(\mathcal{K})$ is a completely positive trace preserving map, then its adjoint w.r.t the Hilbert Schmidt inner product, $\Lambda^*$, is a completely positive unital map. Thus
\begin{align*}
    0\leq F\leq \1 \implies 0\leq \Lambda^*(F)\leq \1,
\end{align*} where the same obviously holds for $\1-F$. Thus
\begin{align*}
    D_H^\epsilon(\Lambda(\rho)||\Lambda(\sigma))&=-\log\inf_{0\leq F\leq\1}\{\Tr[F\Lambda(\sigma)]|\Tr[F\Lambda(\rho)]\geq 1-\epsilon\} \\ &= -\log\inf_{0\leq F\leq\1}\{\Tr[\Lambda^*(F)\sigma]|\Tr[\Lambda^*(F)\rho]\geq 1-\epsilon\} \\ &\leq -\log\inf_{0\leq E\leq\1}\{\Tr[E\sigma]|\Tr[E\rho]\geq 1-\epsilon\} = D_H^\epsilon(\rho||\sigma).
\end{align*}

\section{Proof of upper bound in parallel channel quantum Stein's lemma (\autoref{thm:steins})} \label{app:CQSLUpperBound}
\begin{lemma}
Let $\mathcal{E},\mathcal{F}:\P(\mathcal{H}_A)\to\P(\mathcal{K})$ be two quantum channels (CPTP maps) where $\mathcal{H}, \mathcal{K}$ are arbitrary separable Hilbert spaces. Then
\begin{equation}
    \lim_{\epsilon\to0}\limsup_{n\to\infty}\frac{1}{n}D_H^\epsilon(\mathcal{E}^{\otimes n}||\mathcal{F}^{\otimes n})\leq D^{\mathrm{reg}}(\mathcal{E}||\mathcal{F}).
\end{equation}
\end{lemma}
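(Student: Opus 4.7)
The plan is to use inequality \eqref{equ:ImpIneq4} (which, as noted in the excerpt, is a direct consequence of the data-processing inequality for the Umegaki relative entropy and thus holds in separable Hilbert spaces) as a single-letter upper bound, and then pass to the limit via the definition of the regularised channel divergence. The argument is straightforward and does not require the finiteness condition $\widehat{D}_\alpha(\mathcal{E}\|\mathcal{F}) < \infty$ that is needed for the adaptive converse.

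First, I would fix $\epsilon \in (0,1)$ and an arbitrary input state $\rho \in \mathcal{D}(\mathcal{H}_R \otimes \mathcal{H}_A^{\otimes n})$ (with reference system $R$ of arbitrary dimension). Applying inequality \eqref{equ:ImpIneq4} to the pair $(\mathcal{E}^{\otimes n}(\rho), \mathcal{F}^{\otimes n}(\rho))$ gives
\begin{equation*}
    D_H^\epsilon\!\left((\id_R \otimes \mathcal{E}^{\otimes n})(\rho)\,\|\,(\id_R \otimes \mathcal{F}^{\otimes n})(\rho)\right) \;\leq\; \frac{1}{1-\epsilon}\Big[D\!\left((\id_R \otimes \mathcal{E}^{\otimes n})(\rho)\,\|\,(\id_R \otimes \mathcal{F}^{\otimes n})(\rho)\right) + h(\epsilon)\Big].
\end{equation*}
Taking the supremum over all such $\rho$ (and all reference systems $R$) on both sides and using the definition of the stabilised channel divergences from \autoref{sec:quantum_channel_divergences}, I would obtain the channel-level inequality
\begin{equation*}
    D_H^\epsilon(\mathcal{E}^{\otimes n}\|\mathcal{F}^{\otimes n}) \;\leq\; \frac{1}{1-\epsilon}\Big[D(\mathcal{E}^{\otimes n}\|\mathcal{F}^{\otimes n}) + h(\epsilon)\Big].
\end{equation*}

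Next, I would divide both sides by $n$ and take the $\limsup$ as $n \to \infty$. Since $h(\epsilon)/n \to 0$, and since by definition $\tfrac{1}{n} D(\mathcal{E}^{\otimes n}\|\mathcal{F}^{\otimes n}) \leq D^{\mathrm{reg}}(\mathcal{E}\|\mathcal{F})$ for every $n$, this yields
\begin{equation*}
    \limsup_{n \to \infty} \frac{1}{n} D_H^\epsilon(\mathcal{E}^{\otimes n}\|\mathcal{F}^{\otimes n}) \;\leq\; \frac{1}{1-\epsilon}\, D^{\mathrm{reg}}(\mathcal{E}\|\mathcal{F}).
\end{equation*}
Finally, sending $\epsilon \to 0$ gives the desired bound. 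Note that the argument is trivially valid when $D^{\mathrm{reg}}(\mathcal{E}\|\mathcal{F}) = \infty$, since the right-hand side then becomes $\infty$ for every $\epsilon$.

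There is no real obstacle here, since all ingredients are already available in separable Hilbert spaces: inequality \eqref{equ:ImpIneq4} holds because the DPI for the Umegaki relative entropy holds in infinite dimensions (by \cite{art:LindbaldOriginalDPIUmeghakiDiv1975CMaPh..40..147L} or \cite{art:DPIforUmeghakiPTPMapsM_ller_Hermes_2017}), and the supremum-interchange in the second step is a routine consequence of linearity of the bound in the two divergences. The only mild subtlety worth flagging explicitly in the write-up is that the suprema defining $D_H^\epsilon(\mathcal{E}^{\otimes n}\|\mathcal{F}^{\otimes n})$ and $D(\mathcal{E}^{\otimes n}\|\mathcal{F}^{\otimes n})$ are taken over the same set of input states (with an arbitrary ancilla), which justifies taking the supremum through the pointwise inequality without any compactness argument.
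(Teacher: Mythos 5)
Your proposal is correct and follows essentially the same route as the paper's own proof in the appendix: apply inequality \eqref{equ:ImpIneq4} to the output states, optimize over inputs to get the channel-level bound $D_H^\epsilon(\mathcal{E}^{\otimes n}\|\mathcal{F}^{\otimes n}) \leq \frac{1}{1-\epsilon}[D(\mathcal{E}^{\otimes n}\|\mathcal{F}^{\otimes n})+h(\epsilon)]$, divide by $n$, and take the limits. Your use of $\frac{1}{n}D(\mathcal{E}^{\otimes n}\|\mathcal{F}^{\otimes n}) \leq D^{\mathrm{reg}}(\mathcal{E}\|\mathcal{F})$ for each $n$ (immediate from the supremum definition of $D^{\mathrm{reg}}$) is a minor and perfectly valid simplification of the paper's step identifying the limsup with $D^{\mathrm{reg}}$.
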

\begin{proof} \cite{wang_resource_states_2019}
We use inequality \eqref{equ:ImpIneq4} on the states $\mathcal{E}^{\otimes n}(\omega_{RA^n})$ and $\mathcal{F}^{\otimes n}(\omega_{RA^n})$ and optimize over $\omega_{RA^n}\in\mathcal{D}(\mathcal{H}_R\otimes\mathcal{H}_A^{\otimes n})$ to get
\begin{align*}
    D_H^\epsilon(\mathcal{E}^{\otimes n}||\mathcal{F}^{\otimes n}) \leq \frac{1}{1-\epsilon}[D(\mathcal{E}^{\otimes n}||\mathcal{F}^{\otimes n})+h(\epsilon)].
\end{align*}
Now multiplying both sides of the above inequality by $\frac{1}{n}$ and taking the limit $n\to\infty$ yields
\begin{align*}
    &\limsup_{n\to\infty}\frac{1}{n}D_H^\epsilon(\mathcal{E}^{\otimes n}||\mathcal{F}^{\otimes n})\leq \frac{1}{1-\epsilon}\limsup_{n\to\infty}\frac{1}{n}[D(\mathcal{E}^{\otimes n}||\mathcal{F}^{\otimes n}) + h(\varepsilon)] =  \frac{1}{1-\epsilon}D^{\text{reg}}(\mathcal{E}||\mathcal{F}) \\
  \implies  \lim_{\epsilon\to0}&\limsup_{n\to\infty}\frac{1}{n}D_H^\epsilon(\mathcal{E}^{\otimes n}||\mathcal{F}^{\otimes n}) \leq D^{\text{reg}}(\mathcal{E}||\mathcal{F}).
\end{align*}
\end{proof}

\section{Proof of Lemma \ref{lem:InfDImDmaxAEP}} \label{app:ProofOfDmaxAEPLemma}
\begin{proof}[Proof of \autoref{lem:InfDImDmaxAEP}]
To prove inequality \eqref{equ:DMaxAEPlowerbound}, we start with inequality \eqref{equ:ImpINeq3REP} from \autoref{lemma:SmoothedDmaxLowerBoundWW} for some $\alpha\in(0,1)$ applied to the states $\rho^{\otimes n}$ and $\sigma^{\otimes n}$, which, with the additivity of the $\alpha$-Petz-\renyi divergence yields
\begin{align*}
    \frac{1}{n}D^\epsilon_{\text{max}}(\rho^{\otimes n}||\sigma^{\otimes n}) \geq D_\alpha(\rho||\sigma)+\frac{1}{n}\frac{2}{\alpha-1}\log\left(\frac{1}{1-\epsilon}\right).
\end{align*}
Combining this with inequality \eqref{equ:PetzRenyiLowerContinuity} from Lemma \ref{lemma:Bjarne9} gives 
\begin{align*}
    \frac{1}{n}D^\epsilon_{\text{max}}(\rho^{\otimes n}||\sigma^{\otimes n}) \geq D(\rho||\sigma) -(1-\alpha)(c_\gamma(\rho||\sigma))^2 -\frac{1}{n}\frac{2}{1-\alpha}\log\left(\frac{1}{1-\epsilon}\right)
\end{align*}
under the constraint that $0<(1-\alpha)\leq\frac{\log3}{2c_\gamma(\rho||\sigma)}$. Choosing $1-\alpha=\frac{\log3}{2\sqrt{n}c_\gamma(\rho||\sigma)}$ satisfies this and gives
\begin{align*}
    \frac{1}{n}D^\epsilon_{\text{max}}(\rho^{\otimes n}||\sigma^{\otimes n}) &\geq D(\rho||\sigma)-\frac{c_\gamma(\rho||\sigma)}{\sqrt{n}}\left[\frac{\log3}{2}+\frac{4}{\log3}\log\left(\frac{1}{1-\epsilon}\right)\right] \\ &\geq D(\rho||\sigma)-\frac{4}{\sqrt{n}}c_\gamma(\rho||\sigma)\log\left(\frac{2}{1-\epsilon}\right),
\end{align*} where in the last inequality we used that $\log3>1$ and $\frac{\log3}{2}<4$.

To prove inequality \eqref{equ:DMaxAEPupperbound}, we start analogously with inequality \eqref{equ:ImpIneq2NEW} from \autoref{COR:2} for some $\alpha>1$. Applying this to $\rho^{\otimes n}, \sigma^{\otimes n}$, using the additivity of the Petz-Rényi divergence, and inequality \eqref{equ:PetzRenyiUpperContinuity} gives
\begin{align*}
    \frac{1}{n}D^\epsilon_\text{max}(\rho^{\otimes n}||\sigma^{\otimes n}) \leq D(\rho||\sigma)+\frac{1}{n}\frac{2}{\alpha-1}\log\left(\frac{1}{\epsilon}\right)+\frac{1}{n}\log\left(\frac{1}{1-\epsilon^2}\right)+(\alpha-1)(c_\gamma(\rho||\sigma))^2 
\end{align*} with the condition $0<\alpha-1\leq\frac{\gamma}{2}$. Picking again $\alpha-1=\frac{\log3}{2\sqrt{n}c_\gamma(\rho||\sigma)}$, satisfies the condition since $c_\gamma(\rho||\sigma)\geq\frac{\log3}{\gamma}$ and gives similarly to above 
\begin{align*}
   \frac{1}{n}D^\epsilon_\text{max}(\rho^{\otimes n}||\sigma^{\otimes n}) \leq D(\rho||\sigma) +\frac{4}{\sqrt{n}}c_\gamma(\rho||\sigma)\log\left(\frac{2}{\epsilon}\right)+\frac{1}{n}\log\left(\frac{1}{1-\epsilon^2}\right).
\end{align*}

\end{proof}

\section{Some more properties of the geometric \renyi divergence}
Subsequently we write $s(\rho)$ for the support of any operator $\rho$, and $\Pi_\rho$ for the orthogonal projection onto the support of $\rho$. 

\begin{proposition}[Alternative definition of the geometric \renyi trace function, \cite{hiai_quantum_2019}]\label{prop:geometric_renyi_alternative_definition}
    If $\rho, \sigma \in \P(\HS)$ are such that $\rho \leq c \sigma$ for some $0 < c < \infty$, then there exists a unique bounded operator $A$ with support and image in $s(\sigma)$, such that $\rho^{1/2} = A \sigma^{1/2}$, and this operator also satisfies $A^*A \leq c \Pi_\sigma$ and hence $\norm{A}_{\infty} \leq \sqrt{c}.$ If furthermore $b \sigma \leq \rho$ for some $0<b<\infty$ we additionally have $b \Pi_\sigma \leq A^*A $. We will subsequently also write $A = \rho^{1/2}\sigma^{-1/2}$ and $A^{*}A = \sigma^{-1/2}\rho\sigma^{-1/2}$. We can now define the geometric \renyi trace function between $\rho$ and $\sigma$ such that $b \sigma \leq \rho \leq c \sigma$ as
    \begin{equation}
        \Sg(\rho\|\sigma) = \Tr(\sigma (A^{*}A)^{\alpha}) = \Tr(\sigma (\sigma^{-1/2}\rho\sigma^{-1/2})^{\alpha})\,.
    \end{equation}
    which can be extended to general $\rho, \sigma$, via
    \begin{equation}
        \Sg(\rho\|\sigma) = \lim_{\varepsilon \to 0} \Sg(\rho + \varepsilon(\rho + \sigma)\| \sigma + \varepsilon(\rho + \sigma))\,.
    \end{equation}
    For $\alpha \in [0, 1) \cup (1,2]$ this definition agrees with the one given above in terms of minimal reverse tests.
\end{proposition}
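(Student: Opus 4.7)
The plan is to address the four claims in order: (i) existence, uniqueness, and the upper bound on $A$; (ii) the lower bound $A^*A \geq b\Pi_\sigma$ under $b\sigma \leq \rho$; (iii) the closed-form expression for $\Sg$ in the doubly bounded case; and (iv) the extension to arbitrary $\rho, \sigma \in \P(\HS)$ together with agreement with the reverse-test definition.

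For (i) I would construct $A$ densely: set $A\sigma^{1/2}x := \rho^{1/2}x$ for $x \in s(\sigma)$, and $A := 0$ on $s(\sigma)^{\perp}$. The key estimate $\|\rho^{1/2}x\|^2 = \langle x, \rho x\rangle \leq c\langle x, \sigma x\rangle = c\|\sigma^{1/2}x\|^2$ simultaneously shows that $A$ is well-defined (if $\sigma^{1/2}x=0$ then $\rho^{1/2}x=0$) and bounded by $\sqrt{c}$ on $\mathrm{im}(\sigma^{1/2}|_{s(\sigma)})$; extending by continuity to the closure of this image, which equals $s(\sigma)$, produces the desired bounded operator satisfying $A^*A \leq c\Pi_\sigma$. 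Uniqueness is immediate because any two such operators agree on a dense subset of $s(\sigma)$ and vanish on $s(\sigma)^{\perp}$. Claim (ii) follows by the symmetric estimate $b\|\sigma^{1/2}x\|^2 \leq \|A\sigma^{1/2}x\|^2$ extended by density.

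For (iii), the doubly bounded hypothesis places the spectrum of $A^*A$ restricted to $s(\sigma)$ inside $[b,c] \subset (0,\infty)$, so $(A^*A)^{\alpha}$ is defined by bounded Borel functional calculus. Using $\rho^{1/2} = A\sigma^{1/2} = \sigma^{1/2}A^*$ one directly computes $\sigma^{1/2}A^*A\sigma^{1/2} = \rho$, which motivates the explicit reverse test $(X = [b,c], \mu, g, h, \Gamma)$ with $\mu(E) := \Tr(\sigma\,\mathbf{1}_E(A^*A))$, $g(\lambda) := \lambda$, $h(\lambda) := 1$ and $\Gamma(f) := \sigma^{1/2}f(A^*A)\sigma^{1/2}$. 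Positivity and trace-preservation of $\Gamma$ are immediate, and $\Gamma(g) = \rho$, $\Gamma(h) = \sigma$, so this is a valid reverse test achieving $S_\alpha^{\mu}(g\|h) = \Tr(\sigma(A^*A)^{\alpha})$, giving at once the upper bound $\Sg(\rho\|\sigma) \leq \Tr(\sigma(A^*A)^\alpha)$. That this specific test is in fact \emph{optimal} in the range $\alpha \in [0,1) \cup (1,2]$ is where operator convexity (for $\alpha \in (1,2]$) or operator concavity (for $\alpha \in [0,1)$) of $t \mapsto t^\alpha$ enters; I would invoke the characterization of the minimal reverse test for operator convex $f$ in \cite[Theorem~6.3]{hiai_quantum_2019}, which yields precisely $\Tr(\sigma f(A^*A))$. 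For (iv), the perturbations $\tilde\rho_\varepsilon := \rho + \varepsilon(\rho+\sigma)$ and $\tilde\sigma_\varepsilon := \sigma + \varepsilon(\rho+\sigma)$ satisfy $\tfrac{\varepsilon}{1+\varepsilon}\tilde\sigma_\varepsilon \leq \tilde\rho_\varepsilon \leq \tfrac{1+\varepsilon}{\varepsilon}\tilde\sigma_\varepsilon$, so (iii) applies to the perturbed pair; convergence of the limit $\varepsilon \to 0$ to the general reverse-test value then follows from the lower semi-continuity of $\Sg$ established in \cite[Theorem~5.5]{hiai_quantum_2019}.

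The main technical obstacle is the optimality claim in step (iii): showing that no other reverse test achieves a smaller value than $\Tr(\sigma(A^*A)^{\alpha})$. This step relies essentially on operator convexity or concavity of $t \mapsto t^\alpha$ and does not extend beyond the stated range of $\alpha$; for our purposes we would defer this minimization to the variational treatment of maximal $f$-divergences in \cite{hiai_quantum_2019}, whose framework we have already adopted in \autoref{thm:geometric_chain_rule}.
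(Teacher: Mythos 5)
The paper offers no proof of this proposition at all: it is stated as an imported result, with the citation to \cite{hiai_quantum_2019} doing all the work. Your reconstruction is essentially the argument of that reference and is sound in its main lines: the construction of $A$ via the estimate $\|\rho^{1/2}x\|^2 \leq c\|\sigma^{1/2}x\|^2$ (a Douglas-lemma-type argument), the identity $\sigma^{1/2}A^*A\sigma^{1/2} = \rho$, the explicit reverse test $\Gamma(f) = \sigma^{1/2}f(A^*A)\sigma^{1/2}$ with the spectral measure $\mu(E) = \Tr(\sigma\,\mathbf{1}_E(A^*A))$, and the deferral of optimality to \cite[Theorem 6.3]{hiai_quantum_2019} are all correct and are exactly the ingredients the paper is implicitly relying on. Two small points deserve care. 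First, for $\alpha \in [0,1)$ the function $t \mapsto t^{\alpha}$ is operator concave and the extremal problem over reverse tests is a maximization (equivalently a minimization of $-S_{\alpha}$), so exhibiting your reverse test yields a \emph{lower} bound on $\Sg$ in that range rather than the upper bound you state; since you defer the identification to the cited theorem this does not break the argument, but the sentence as written is only correct for $\alpha \in (1,2]$. Second, the agreement of the $\varepsilon \to 0$ limit with the reverse-test value requires both inequalities: lower semi-continuity gives $\liminf_{\varepsilon \to 0} \Sg(\tilde\rho_\varepsilon\|\tilde\sigma_\varepsilon) \geq \Sg(\rho\|\sigma)$, but the converse direction needs the subadditivity estimate $\Sg(\tilde\rho_\varepsilon\|\tilde\sigma_\varepsilon) \leq \Sg(\rho\|\sigma) + \varepsilon\,\Sg(\rho+\sigma\|\rho+\sigma)$ coming from generalized joint convexity, exactly as the paper uses in the proof of \autoref{thm:geometric_chain_rule}. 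Both points are handled in \cite{hiai_quantum_2019}, so your proposal, with these caveats made explicit, is a faithful account of why the proposition holds.
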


\begin{lemma} \label{lem:geometric_anti_monotonicity}
If $\sigma_1 \leq \sigma_2$, and $\alpha \in [0,1)\cup (1,2]$, then 
\begin{equation}
    \Dg(\rho\|\sigma_1) \geq \Dg(\rho\|\sigma_2).
\end{equation}
\begin{proof}
    We start with the case $\alpha \in (1,2]$. 
    From \cite[Prop. 2.10]{hiai_quantum_2019}, for all $\rho, \sigma \in \P(\HS)$, we have
    \begin{equation}
        \Sg(\rho\|\sigma) = \widehat{S}_{1 - \alpha}(\sigma\|\rho).
    \end{equation}
    where $\widehat{S}_{1 - \alpha}(\sigma\|\rho)$ is defined as in \autoref{prop:geometric_renyi_alternative_definition}.
    If we now assume that there exists some $c < \infty$ such that $\rho \leq c\sigma_1 \leq c\sigma_2 \leq c^2 \rho$, then by \autoref{prop:geometric_renyi_alternative_definition} there exists a bounded $A_2$ such that $A_2 \rho^{1/2} = \sigma_2^{1/2}$, and 
    also a bounded $B$, with $B^*B \leq \mathbb{1}$, such that $B \sigma_2^{1/2} = \sigma_1^{1/2}$. Hence, $A_1 = B A_2$ satisfies $A_1 \rho^{1/2} = \sigma_1^{1/2}$, and we have $A_1^* A_1 = A_2^* B^*B A_2 \leq A_2^* A_2$. From \autoref{prop:geometric_renyi_alternative_definition} it also follows that $A_1^*A_1$ and $A_2^*A_2$ have bounded inverses on the support of $\rho$, and so all the following expressions are well-defined. It is well-known that $t \mapsto t^{p}$ is operator anti-monotone for $p \in [-1, 0]$, and thus,
    \begin{equation}
        \widehat{S}_{1 - \alpha}(\sigma_1\|\rho) = \Tr(\rho (A_1^* A_1)^{1 - \alpha}) \geq \Tr(\rho (A_2^* A_2)^{1 - \alpha}) = \widehat{S}_{1 - \alpha}(\sigma_2\|\rho)\,.
    \end{equation}
    The statement without requiring the existence of a $c < \infty$ then follows by taking limits as in \autoref{prop:geometric_renyi_alternative_definition}. For this, note that
    \begin{align}
        \Sg(\rho + \varepsilon(\rho + \sigma_2)\|\sigma_2 + \varepsilon(\rho + \sigma_2)) &\leq \Sg(\rho + \varepsilon(\rho + \sigma_2)\|\sigma_1 + \varepsilon(\rho + \sigma_2))\\ 
        &=\Sg(\rho + \varepsilon(\rho + \sigma_1) + \varepsilon(\sigma_2 - \sigma_1)\|\sigma_1 + \varepsilon(\rho + \sigma_1) + \varepsilon(\sigma_2 - \sigma_1)) \\
        &\leq \Sg(\rho + \varepsilon(\rho + \sigma_1)\|\sigma_1 + \varepsilon(\rho + \sigma_1))
    \end{align}
    where we used the generalized joint convexity \cite[Theorem 2.9]{hiai_quantum_2019} together with $\sigma_1\le\sigma_2$ in the last step.
    
    For $\alpha \in [0,1)$, $t \to t^{1-\alpha}$ is operator monotone, and with the analogous argument we get
    \begin{equation}
        \Sg(\rho + \varepsilon(\rho + \sigma_2)\|\sigma_1 + \varepsilon(\rho + \sigma_2)) \leq \Sg(\rho + \varepsilon(\rho + \sigma_2)\|\sigma_2 + \varepsilon(\rho + \sigma_2))\,.
    \end{equation}
    For taking limits, note that
    \begin{equation}
        \liminf_{\varepsilon \to 0} \Sg(\rho + \varepsilon(\rho + \sigma_2)\|\sigma_1 + \varepsilon(\rho + \sigma_2)) \geq \Sg(\rho\|\sigma_1)
    \end{equation}
    by the lower semi-continuity of $\Sg$ \cite[Theorem 5.5]{hiai_quantum_2019}. The desired statement then follows after taking logarithms and dividing by $\alpha - 1$ (which reverses the inequality). 
\end{proof}
\end{lemma}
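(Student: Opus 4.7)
The plan is to exploit the alternative characterization of $\Sg$ given in Proposition~\ref{prop:geometric_renyi_alternative_definition} together with the duality identity $\Sg(\rho\|\sigma) = \widehat{S}_{1-\alpha}(\sigma\|\rho)$ from \cite{hiai_quantum_2019}, which swaps the roles of the two arguments at the cost of replacing $\alpha$ by $1-\alpha$. After this swap, $\rho$ becomes the reference state, so when varying $\sigma_1 \leq \sigma_2$ the only thing that changes inside the trace is the operator $\sigma_i^{-1/2}\rho \sigma_i^{-1/2}$ (with $\rho$ held fixed), and the question reduces to operator monotonicity of $t\mapsto t^{1-\alpha}$.

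First I would reduce to the nondegenerate case $b \sigma_i \leq \rho \leq c\sigma_i$ with $0<b\leq c<\infty$, so that by Proposition~\ref{prop:geometric_renyi_alternative_definition} there exist bounded operators $A_i$ with $A_i \rho^{1/2} = \sigma_i^{1/2}$ and $A_i^*A_i$ boundedly invertible on $s(\rho)$. From $\sigma_1\leq\sigma_2$, Douglas's lemma gives a contraction $B$ with $B\sigma_2^{1/2} = \sigma_1^{1/2}$, so $A_1 = BA_2$ and hence $A_1^*A_1 = A_2^*B^*BA_2 \leq A_2^*A_2$. For $\alpha\in(1,2]$ we have $1-\alpha\in[-1,0)$ on which $t\mapsto t^{1-\alpha}$ is operator anti-monotone, so $(A_1^*A_1)^{1-\alpha}\geq (A_2^*A_2)^{1-\alpha}$; tracing against $\rho$ and applying the swap identity gives $\Sg(\rho\|\sigma_1)\geq\Sg(\rho\|\sigma_2)$. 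Taking logs and dividing by $\alpha-1>0$ preserves the direction. For $\alpha\in[0,1)$ the same chain runs with $1-\alpha\in(0,1]$, on which $t\mapsto t^{1-\alpha}$ is operator monotone (so the trace inequality reverses), but now dividing by the negative $\alpha-1$ flips the sign back, yielding the desired direction.

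To remove the nondegeneracy assumption I would use the $\varepsilon$-regularization $\tilde\rho_\varepsilon = \rho+\varepsilon(\rho+\sigma_2)$, $\tilde\sigma_{i,\varepsilon}=\sigma_i+\varepsilon(\rho+\sigma_2)$ from Proposition~\ref{prop:geometric_renyi_alternative_definition}. These satisfy the required two-sided bounds, so the above argument applies. The residual issue is that $\tilde\sigma_{i,\varepsilon}$ is defined with $\rho+\sigma_2$ rather than $\rho+\sigma_i$, but writing $\sigma_1 + \varepsilon(\rho+\sigma_2) = \sigma_1 + \varepsilon(\rho+\sigma_1) + \varepsilon(\sigma_2-\sigma_1)$ and using the generalized joint convexity of $\Sg$ from \cite[Theorem~2.9]{hiai_quantum_2019} absorbs this extra positive term in the correct direction. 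Passing $\varepsilon\to 0$, the upper side converges by the explicit limit in Proposition~\ref{prop:geometric_renyi_alternative_definition} and the lower side is controlled by lower semi-continuity \cite[Theorem~5.5]{hiai_quantum_2019}.

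The main obstacle I anticipate is not a single deep step but rather keeping the bookkeeping consistent: one must track in which regime ($\alpha<1$ versus $\alpha>1$) each of (i) operator monotonicity of $t^{1-\alpha}$, (ii) the sign of $\alpha-1$ in the $\log/(\alpha-1)$ normalization, and (iii) the direction of joint convexity in \cite[Theorem~2.9]{hiai_quantum_2019} combine to give the same final inequality. In particular, in the limit step one needs an upper bound on the $\sigma_2$-side and a lower semi-continuity bound on the $\sigma_1$-side (or vice versa), and matching them through the joint convexity detour is the subtle accounting that needs to be done carefully in both regimes of $\alpha$.
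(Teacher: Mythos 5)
Your proposal is correct and follows essentially the same route as the paper's proof: the same swap identity $\Sg(\rho\|\sigma)=\widehat{S}_{1-\alpha}(\sigma\|\rho)$, the same factorization $A_1=BA_2$ with a contraction $B$ (the paper obtains $B$ from \autoref{prop:geometric_renyi_alternative_definition} rather than invoking Douglas's lemma by name), the same operator (anti-)monotonicity of $t\mapsto t^{1-\alpha}$ in the two regimes, and the same $\varepsilon$-regularization resolved via joint convexity and lower semi-continuity. No gaps.
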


In \cite[Proposition 2.11]{hiai_quantum_2019} it was shown that if $(s(\rho_1) \cup s(\sigma_1)) \perp (s(\rho_2) \cup s(\sigma_2))$ then $\Sg(\rho_1 + \rho_2\|\sigma_1 + \sigma_2) = \Sg(\rho_1\|\sigma_1) + \Sg(\rho_2\|\sigma_2)$. We require a similar statement for an infinite orthogonal decomposition, which is not directly implied, so we give a proof here. The argument is still essentially equivalent to the one given for a finite orthogonal decomposition. 
\begin{lemma}\label{lem:generalized_direct_sum_decomp}
    For $i \in \naturals$, let $\rho_i, \sigma_i \in \cB_1$ be such that $(s(\rho_i) \cup s(\sigma_i)) \perp (s(\rho_j) \cup s(\sigma_j))$ for all $i,j$. Then, for $\alpha \in [0, 1), (1,2]$:
    \begin{equation}
        \Sg\bigg(\sum_i \rho_i\bigg\|\sum_i \sigma_i\bigg) = \sum_i \Sg(\rho_i\|\sigma_i)\,.
    \end{equation}
\end{lemma}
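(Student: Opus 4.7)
The plan is to obtain $\Sg(\rho\|\sigma) = \sum_i \Sg(\rho_i\|\sigma_i)$ (writing $\rho := \sum_i \rho_i$, $\sigma := \sum_i \sigma_i$) as two matching inequalities: an easy upper bound from a direct-sum construction of reverse tests, and a matching lower bound obtained by truncating to the first $N$ blocks, applying the two-block additivity of Hiai (Proposition 2.11 of \cite{hiai_quantum_2019}, quoted just above the statement), and controlling the tail via a CPTP truncation map together with the data-processing inequality.

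For the upper bound, given near-optimal reverse tests $(\Gamma_i, g_i, h_i, \mu_i)$ of each pair $(\rho_i, \sigma_i)$ (assuming $\sum_i \Sg(\rho_i\|\sigma_i) < \infty$, otherwise the bound is trivial), I would take the disjoint union $(X, \mu) := \bigsqcup_i (X_i, \mu_i)$ and glue together $g := \bigsqcup_i g_i$, $h := \bigsqcup_i h_i$, $\Gamma(f) := \sum_i \Gamma_i(f|_{X_i})$. The orthogonality of the ranges ensures $\Gamma$ is positive and trace-preserving with $\Gamma(g) = \rho$, $\Gamma(h) = \sigma$, and $S_{\alpha}^{\mu}(g\|h) = \sum_i S_{\alpha}^{\mu_i}(g_i\|h_i)$. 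Choosing per-block excesses $\varepsilon_i = \varepsilon \cdot 2^{-i}$ and sending $\varepsilon \to 0$ then yields $\Sg(\rho\|\sigma) \leq \sum_i \Sg(\rho_i\|\sigma_i)$.

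For the reverse inequality, I would first iterate the two-block additivity to obtain $\Sg(\rho^{(N)}\|\sigma^{(N)}) = \sum_{i \leq N}\Sg(\rho_i\|\sigma_i)$ where $\rho^{(N)} := \sum_{i \leq N}\rho_i$, $\sigma^{(N)} := \sum_{i \leq N}\sigma_i$. To transfer this to the infinite sum, fix a density $\omega$ supported orthogonally to every $P_i := s(\rho_i)\cup s(\sigma_i)$ (enlarging $\HS$ by a one-dimensional summand if $\1 = \sum_i P_i$) and consider the CPTP map
\begin{equation*}
    \Phi_N(X) := P_{\leq N} X P_{\leq N} + \Tr\!\big((\1 - P_{\leq N})X\big)\,\omega, \qquad P_{\leq N} := \sum_{i \leq N} P_i.
\end{equation*}
Its images $\Phi_N(\rho) = \rho^{(N)} + p_N\omega$ and $\Phi_N(\sigma) = \sigma^{(N)} + q_N\omega$ sit on orthogonal supports with $p_N, q_N \to 0$, so a further application of the two-block formula together with the scaling identity of Lemma~\ref{lem:geometric_scalar} gives
\begin{equation*}
    \Sg(\Phi_N(\rho)\|\Phi_N(\sigma)) = \Sg(\rho^{(N)}\|\sigma^{(N)}) + p_N^{\alpha}\, q_N^{1-\alpha}.
\end{equation*}
For $\alpha \in (1,2]$, DPI gives $\Sg(\Phi_N(\rho)\|\Phi_N(\sigma)) \leq \Sg(\rho\|\sigma)$ and the boundary term is non-negative, so dropping it and letting $N \to \infty$ (monotone convergence on the right) yields $\sum_i \Sg(\rho_i\|\sigma_i) \leq \Sg(\rho\|\sigma)$. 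For $\alpha \in [0,1)$ the DPI direction reverses (since $t \mapsto t^{\alpha}$ is operator concave, and $\Dg = (\alpha-1)^{-1}\log \Sg$ decreasing under CPTP translates into $\Sg$ increasing), giving the opposite sign on the inequality; but now both exponents $\alpha$ and $1-\alpha$ lie in $[0,1]$, so $p_N^{\alpha} q_N^{1-\alpha} \to 0$ as $N \to \infty$, yielding again $\sum_i \Sg(\rho_i\|\sigma_i) \geq \Sg(\rho\|\sigma)$, which matches the upper bound.

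The main obstacle is the boundary term in the range $\alpha \in (1,2]$, where the negative exponent $1-\alpha$ would cause $p_N^{\alpha}q_N^{1-\alpha}$ to diverge if $q_N = 0$ while $p_N > 0$; however such a situation forces the support inclusion $s(\rho_i) \subseteq s(\sigma_i)$ to fail for some $i$, which already implies $\Sg(\rho_i\|\sigma_i) = +\infty$ and hence both sides of the lemma are infinite. A second minor subtlety is that the two-block additivity from \cite[Prop. 2.11]{hiai_quantum_2019} is primarily developed in the operator convex regime $\alpha \in (1,2]$; for $\alpha \in [0,1)$ I would reprove it using the explicit formula $\Sg(\rho\|\sigma) = \Tr\!\big(\sigma(\sigma^{-1/2}\rho\sigma^{-1/2})^{\alpha}\big)$ (extended from the case $b\sigma \leq \rho \leq c\sigma$ via the regularization $\rho \mapsto \rho + \varepsilon(\rho+\sigma)$), which decomposes block-diagonally across orthogonal supports uniformly in $\alpha$.
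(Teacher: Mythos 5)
Your proof takes a genuinely different route from the paper's. The paper works entirely with the explicit formula of \autoref{prop:geometric_renyi_alternative_definition}: assuming a uniform domination $\rho_i \leq c\,\sigma_i$, it picks the operators $A_i$ with $\rho_i^{1/2} = A_i\sigma_i^{1/2}$ supported in the mutually orthogonal subspaces $s(\sigma_i)$, observes that $\sum_i A_i$ is bounded and that forming $A^*A$, taking the $\alpha$-th power, and multiplying by $\sigma$ all act blockwise, and then removes the domination assumption by the $\varepsilon$-regularization. Your argument is instead variational: gluing near-optimal reverse tests over a disjoint union of measure spaces for one inequality, and a finite truncation combined with the two-block additivity of \cite{hiai_quantum_2019}, the scaling identity of \autoref{lem:geometric_scalar}, and the data-processing inequality for the other. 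For $\alpha\in(1,2]$ this is correct and complete; in particular, dropping the non-negative boundary term $p_N^{\alpha}q_N^{1-\alpha}$ rather than trying to show it vanishes is the right move (for $\alpha>1$ that term need not tend to zero), and your treatment of the degenerate case $q_N=0<p_N$ is also sound.

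There is, however, a gap in the case $\alpha\in[0,1)$. There the relevant operator convex function is $f(x)=-x^{\alpha}$, so the maximal $f$-divergence is $\widehat{S}_f=-\Sg$, and $\Sg$ is the \emph{supremum}, not the infimum, of $S_{\alpha}^{\mu}(g\|h)$ over reverse tests (this is also how the paper uses $\Sg$ for $\alpha<1$ in the proof of \autoref{thm:geometric_chain_rule}). Consequently your direct-sum construction does not give an upper bound on $\Sg(\rho\|\sigma)$ in this regime; exhibiting a particular reverse test gives the lower bound $\Sg(\rho\|\sigma)\geq\sum_i\Sg(\rho_i\|\sigma_i)$. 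As written, both halves of your argument for $\alpha\in[0,1)$ therefore produce the same inequality $\sum_i\Sg(\rho_i\|\sigma_i)\geq\Sg(\rho\|\sigma)$, and the reverse direction is never obtained. The fix is immediate once the sign is tracked correctly --- the glued reverse test supplies exactly the missing inequality --- and your closing suggestion to fall back on the explicit trace formula for $\alpha\in[0,1)$ is also viable, being essentially what the paper does for all $\alpha$ at once.
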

\begin{proof}
In light of \autoref{prop:geometric_renyi_alternative_definition} we can show the statement in the case where there exists a $c < \infty$ such that $\rho_i \leq c \sigma_i$ for all $i$, and the full statement follows by taking limits. For each $i$ pick $A_i$ such that $\rho_i^{1/2} = A_i \sigma_i^{1/2}$. By \autoref{prop:geometric_renyi_alternative_definition} we can pick such an $A_i$ with image and support contained in $s(\sigma_i)$ which are all orthogonal, and so we get the following three statements
\begin{align}
    \bigg(\sum_i \rho_i\bigg)^{1/2} &= \bigg(\sum_i \rho_i^{1/2}\bigg) = \bigg(\sum_i A_i\bigg) \bigg(\sum_i \sigma_i^{1/2}\bigg) = \bigg(\sum_i A_i\bigg) \bigg(\sum_i \sigma_i\bigg)^{1/2},\\
    \Bigg(\bigg(\sum_i A_i^{*}\bigg)\bigg(\sum_j A_j\bigg)\Bigg)^{\alpha} &= \sum_i (A_i^* A_i)^{\alpha},\\
    \sigma \sum_i (A_i^* A_i)^{\alpha} &= \sum_i \sigma_i (A_i^* A_i)^{\alpha},
\end{align}
which imply the desired equality.
\end{proof}
\section{Some more properties of the quantum relative entropy}

\begin{lemma}\label{lem:entropy_almost_concave}
    Let $\sigma \in \DM$ be a density matrix, $\{\lambda_i\}$, $i = 1, ..., n$ be a normalized probability distribution, and $\rho_i \in \DM$, $i = 1, ..., n$ be a set of density matrices. Then,
    \begin{equation}
         D\left(\sum_{i=1}^n \lambda_i \rho_i\Bigg\|\sigma\right) + H(\lambda) \geq \sum_{i=1}^n \lambda_i D(\rho_i\|\sigma) 
    \end{equation}
    where $H(\lambda)$ is the Shannon entropy of $\lambda$. 
	\end{lemma}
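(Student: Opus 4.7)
The plan is to reduce the statement to the Holevo-type bound $\sum_i \lambda_i D(\rho_i \| \rho) \le H(\lambda)$, where $\rho \coloneqq \sum_i \lambda_i \rho_i$, and combine it with a simple algebraic identity. The whole proof rests on one clean operator inequality: by construction, $\lambda_i \rho_i \le \rho$ for each $i$.

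First I would invoke anti-monotonicity of the Umegaki relative entropy in its second argument: if $0 \le \tau_1 \le \tau_2$ are positive trace-class operators whose supports contain the support of $\rho$, then $D(\rho\|\tau_2) \le D(\rho\|\tau_1)$. This holds in infinite dimensions and was used earlier in the paper (e.g.\ cited from \cite[Theorem 4.1]{hiai_quantum_2019} in the proof of Lemma~\ref{lem:geometric_anti_monotonicity}). Applied to $\tau_1 = \lambda_i \rho_i \le \rho = \tau_2$, this gives
\begin{equation}
D(\rho_i \| \rho) \;\le\; D(\rho_i \| \lambda_i \rho_i) \;=\; -\log \lambda_i,
\end{equation}
where the equality uses the elementary scaling $D(\eta \| c \eta) = -\log c$. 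Taking the weighted average yields
\begin{equation}
\sum_i \lambda_i D(\rho_i\|\rho) \;\le\; \sum_i \lambda_i(-\log\lambda_i) \;=\; H(\lambda).
\end{equation}

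Second, I would establish the identity
\begin{equation}
\sum_i \lambda_i D(\rho_i\|\sigma) \;=\; D(\rho\|\sigma) + \sum_i \lambda_i D(\rho_i\|\rho),
\end{equation}
which follows (at least formally) because $D(\rho_i\|\sigma) - D(\rho_i\|\rho) = \Tr[\rho_i(\log\rho - \log\sigma)]$ and summing against $\lambda_i$ yields $\Tr[\rho(\log\rho - \log\sigma)] = D(\rho\|\sigma)$. Plugging the Holevo bound into this identity gives the desired inequality. If $D(\rho\|\sigma) = \infty$ the claim is trivial by non-negativity of relative entropy, so we may and will assume $D(\rho\|\sigma) < \infty$, which forces $\mathrm{supp}(\rho_i) \subseteq \mathrm{supp}(\rho) \subseteq \mathrm{supp}(\sigma)$ for every $i$ with $\lambda_i > 0$.

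The main technical obstacle is making both the anti-monotonicity step and the above identity rigorous in the separable-Hilbert-space setting, since $\log \rho$ and $\log\sigma$ are in general unbounded and the terms $\Tr[\rho_i \log\rho]$ may need to be interpreted as in the support conditions established in Section~\ref{sec:rel_mod_op}. The cleanest route is to phrase everything via the relative modular operator and the spectral formula $D(\rho\|\sigma) = \langle\sqrt{\rho}|\log\Delta_{\rho,\sigma}|\sqrt{\rho}\rangle$; anti-monotonicity in $\sigma$ then becomes anti-monotonicity of $\log\Delta_{\rho,\sigma}$ in $\sigma$, and the identity becomes an additivity statement for the spectral expectation. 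Alternatively, one can truncate to finite-rank subspaces where both $\rho$ and $\sigma$ are strictly positive, apply the finite-dimensional argument verbatim, and pass to the limit using the lower semi-continuity of $D$ together with $\sum_i \lambda_i D(\rho_i\|\sigma) \ge D(\rho\|\sigma)$ (joint convexity) to control the right-hand side.
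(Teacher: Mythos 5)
Your argument is correct in outline, but it is organized differently from the paper's proof, and the one step you leave as a sketch is precisely the step the paper's organization is designed to avoid. The underlying operator inequality is the same in both proofs: from $\lambda_i \rho_i \le \rho$ and operator monotonicity of the logarithm one gets $\Tr[\rho_i\log(\lambda_i\rho_i)]\le\Tr[\rho_i\log\rho]$, which is exactly your bound $D(\rho_i\|\rho)\le D(\rho_i\|\lambda_i\rho_i)=-\log\lambda_i$. The difference is where the bookkeeping happens. The paper works entirely inside Lindblad's representation $D(\rho\|\sigma)=\sum_j\langle f_j|\,\rho\log\rho-\rho\log\sigma-\rho+\sigma\,|f_j\rangle$, whose summands are individually non-negative; the whole proof is then a single rearrangement of an absolutely convergent sum followed by one application of operator monotonicity, so no $\infty-\infty$ ambiguity ever arises. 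Your route instead channels everything through the identity $\sum_i\lambda_i D(\rho_i\|\sigma)=D(\rho\|\sigma)+\sum_i\lambda_i D(\rho_i\|\rho)$, which in a separable Hilbert space is itself a nontrivial claim: the cross term $\Tr[\rho_i(\log\rho-\log\sigma)]$ is a difference of two traces that need not be separately defined, and support inclusion alone does not guarantee $D(\rho_i\|\sigma)<\infty$ in infinite dimensions (as the paper's own discussion of finiteness conditions stresses), so finiteness of the individual summands has to come out of the identity rather than be assumed. Both of your proposed repairs are viable --- the relative-modular-operator formulation, or truncation along increasing finite-rank compressions using the convergence of $D$ under such compressions on one side and lower semi-continuity on the other --- but either requires roughly as much work as the paper's direct computation; the parenthetical appeal to joint convexity in your limiting argument is not actually needed for the direction of the inequality being proved. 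In short: same key inequality, different packaging; the paper's packaging buys automatic well-definedness, yours buys a cleaner conceptual statement (a Holevo-type bound plus a Pythagorean identity) at the cost of having to prove that identity rigorously.
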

	\begin{proof}
		In \cite{Lindblad_RelativeEntropy_1973} it was shown that for normalized states $\rho, \sigma$ the relative entropy in infinite dimensions can be written as
		\begin{align}
		    D(\rho\|\sigma) &= \sum_{i,j}|\langle e_i, f_j \rangle|^2 (p_i \log p_i - p_i\log q_j - p_i + q_i) \\
		    &= \sum_j \bra{e_j} \rho \log \rho - \rho \log \sigma - \rho + \sigma \ket{e_j} \\ 
		    &= \sum_j \bra{f_j} \rho \log \rho - \rho \log \sigma - \rho + \sigma \ket{f_j},
		\end{align}
        where $\rho = \sum_i p_i \kb{e_j}$ and $\sigma = \sum_j q_j \kb{f_j}$ are the spectral decompositions of $\rho$ and $\sigma$, and it is easy to see (using $x \geq 1 + \log(x)$) that the expressions in the sum are always positive, and hence the sum is well-defined, although possibly infinite. 
		
		We write $\sigma = \sum_j q_j \kb{f_j}$ as above, $\rho = \sum_{i=1}^n \lambda_i \rho_i$, and note that $\rho \geq \lambda_i \rho_i$ for all $i$. We then have
		\begin{align}
		    \sum_{i=1}^n\lambda_i D(\rho_i\|\sigma) &= \sum_{i=1}^n \lambda_i \sum_j \bra{f_j} \rho_i \log \rho_i - \rho_i \log \sigma - \rho_i + \sigma \ket{f_j} \\
		    &= \sum_j \bra{f_j} \sum_{i=1}^n \lambda_i \rho_i \log\rho_i - \rho \log \sigma - \rho + \sigma \ket{f_j} \\ 
		    &= \sum_j \bra{f_j} \sum_{i=1}^n \lambda_i \rho_i \log(\lambda_i \rho_i) - \sum_{i=1}^n \lambda_i \rho_i \log(\lambda_i) - \rho \log \sigma - \rho + \sigma \ket{f_j} \\
		    &\leq \sum_j \bra{f_j} \rho \log \rho - \rho \log \sigma - \rho + \sigma - \sum_{i=1}^n \lambda_i \rho_i \log(p_i) \ket{f_j} \\
		    &= \sum_j \bra{f_j} \rho \log \rho - \rho \log \sigma - \rho + \sigma \ket{f_j} + 
		    \sum_j \sum_{i=1}^n (-\lambda_i\log(\lambda_i)) \bra{f_j}\rho_i \ket{f_j}\\
		    &= D(\rho\|\sigma) + H(\lambda)
		\end{align}
		where in the inequality step we used the operator monotonicity of the logarithm. Also, based on the argument mentioned above, one easily checks that in all sums over $j$ the summands are always positive, and so all the expressions are well-defined.

	\end{proof}

\end{document}